\documentclass[12pt]{article}
\usepackage[english]{babel}
\usepackage[letterpaper,top=2cm,bottom=2cm,left=3cm,right=3cm,marginparwidth=1.75cm]{geometry}
\usepackage{authblk}
\usepackage{amsmath}
\usepackage{graphicx}
\usepackage{amsfonts}
\usepackage{amssymb}
\usepackage{amsthm}
\usepackage{cancel}
\usepackage{booktabs}
\usepackage{multirow}
\usepackage{makecell}
\usepackage{bm}
\usepackage[colorlinks=true, allcolors=blue]{hyperref}
\usepackage{setspace}
\usepackage[most]{tcolorbox}
\usepackage{orcidlink}
\usepackage{chngcntr}
\usepackage{enumitem}
\usepackage{natbib}
\usepackage{listings}
\usepackage{tikz}
\usepackage{pgfplots}
\usepackage{minted}
\usepackage[toc,page]{appendix}
\usepackage[ruled]{algorithm2e}
\usepackage{appendix} 
\usepackage{svg}
\usepackage{subcaption}
\newtheorem{proposition}{Proposition}
\newtheorem{lemma}{Lemma}

\usepackage[most]{tcolorbox}  
\usepackage{xcolor} 
\definecolor{boxbg}{RGB}{245,245,245}   
\definecolor{boxborder}{RGB}{180,180,180} 
\tcbset{
    sectionintro/.style={
        colback=boxbg,        
        colframe=boxborder,   
        fonttitle=\bfseries,  
        fontupper=\scriptsize,
        coltitle=black,       
        boxrule=0.8pt,        
        arc=4mm,              
        left=2mm,
        right=2mm,
        top=1mm,
        bottom=1mm,
        width=\textwidth,
        boxsep=3pt,
    }
}

\lstdefinestyle{rstyle}{
    language=R,
    backgroundcolor=\color{gray!10},
    basicstyle=\small\ttfamily,
    breaklines=true,
    numbers=none,
    frame=none,
    showstringspaces=false,
    columns=fullflexible,
    keepspaces=true,
}

\graphicspath{ {./img/} }

\newcommand\ci{\perp\!\!\!\perp} 
\newcommand{\determinant}[1]{\left\lvert #1 \right\rvert}

\def\keywordname{{\bfseries \emph Keywords}}%
\def\keywords#1{\par\addvspace\medskipamount{\rightskip=0pt plus1cm
\def\and{\ifhmode\unskip\nobreak\fi\ $\cdot$
}\noindent\keywordname\enspace\ignorespaces#1\par}}
\def\correspondencename{{\bfseries \emph Corresponding author:}}%
\def\correspondence#1{\par\addvspace\medskipamount{\rightskip=0pt plus1cm
\def\and{\ifhmode\unskip\nobreak\fi\ $\cdot$
}\noindent\correspondencename\enspace\ignorespaces#1\par}}

\title{What is your Prior Worth? Effective Sample Size and Sample Size Planning for Gaussian Graphical Models}
\author[1]{Giuseppe Arena}
\author[1]{Lourens Waldorp}
\author[1]{Maarten Marsman}
\affil[1]{Department of Psychology, University of Amsterdam}

\begin{document}
\maketitle
\vspace{-2em}
\begin{center}
\textbf{This manuscript has not yet been peer-reviewed.} 
\end{center}
\correspondence{Giuseppe Arena\orcidlink{0000-0001-5204-3326}, Department of Psychology, University of Amsterdam -- Nieuwe Achtergracht 129-B, PO Box 15906, 1001 NK Amsterdam, The Netherlands -- E-mail:  \texttt{g.arena@uva.nl}. }
\keywords{Effective Sample Size \and Gaussian Graphical Models \and Sample Size Planning \and Prior Elicitation}

\begin{abstract}
In Bayesian analysis, the prior effective sample size (ESS) expresses the information carried by a prior distribution in units of observations, quantifying how much independent information the prospective data must provide to outweigh an informative prior elicited from a previous study. For network models such as Gaussian graphical models (GGMs), the prior ESS is not straightforward to compute. The Wishart and G-Wishart priors induce dependence among the entries of the precision matrix, and their informativeness has never been expressed in an interpretable, observation-equivalent unit. As a result, researchers eliciting an informative prior for a GGM have had no principled basis for sample size planning. In this paper, we close this gap by formalizing a pre-data ESS for GGMs under the Wishart and G-Wishart priors. We adapt five ESS estimators to the GGM setting and compute each through two aggregation schemes: a global ESS measure based on a determinant ratio, and a parameterwise version based on a Cholesky decomposition. Building on these measures, we introduce two complementary planning strategies: the Data-to-Prior Information Ratio (DPIR), which determines the sample size at which the data dominate the prior, and a GGM extension of Bayes Factor Design Analysis (BFDA), which determines the sample size required for conclusive edge-based evidence. Simulation studies show that the two procedures target complementary design goals and that the ESS estimators differ systematically in their sensitivity to network structure and geometry. We conclude by outlining extensions to other graphical models, including time-dependent variants, as well as to matrix-variate mixture priors.
\end{abstract}


\section{Introduction}

Gaussian graphical models (GGMs) provide a probabilistic framework to represent the conditional independence structure among a set of variables through the sparsity of the precision matrix $\bm{\Theta}$ \citep{Lauritzen1996,Dempster1972}. By encoding pairwise partial correlations $\rho_{ij}$ as edge weights in an undirected graph $G$, GGMs have become a standard tool for network inference in the behavioral, social, psychological, and biomedical sciences~\citep{Kramer2009,Borsboom2013,Costantini2015,Isvoranu2022}. The direct relationship between zero partial correlations and conditional independence makes GGMs suitable for settings where the dependence structure itself is the scientific target, and where study design decisions, including sample size, directly determine whether that structure can be reliably recovered.

Bayesian inference for GGMs requires the specification of a prior distribution over $\bm{\Theta}$ and, when the graph $G$ is unknown, over the space of graphs itself. The Wishart distribution $W(\nu,\bm{\Psi})$ and its graph-constrained generalization, the \textit{G}-Wishart $W_{G}(\nu,\bm{\Psi})$, are the conjugate priors for the precision matrix under a Gaussian likelihood \citep{Giudici1995,Roverato2002,Atay2005,Lenkoski2013}, and constitute the standard prior choice in the Bayesian GGM literature \citep[e.g.,][]{Mohammadi2015,Mohammadi2019,Williams2020a,Williams2020b}. Both distributions are governed by two hyperparameters: (i) the degrees of freedom $\nu$, which controls the concentration of the prior around $\bm{\Psi}$ and (ii) the scale matrix $\bm{\Psi}$, which encodes the prior beliefs about the precision structure. Their specification has direct consequences in posterior inference: a large $\nu$ produces an informative prior that may dominate the likelihood at moderate sample sizes, whereas a small $\nu$ yields a diffuse prior that offers little regularization and can produce unstable edge estimates. Despite this sensitivity, the informativeness of the Wishart or \textit{G}-Wishart prior has never been expressed in an interpretable, observation-equivalent unit. This is not merely a theoretical gap: it has direct practical implications. Without knowing how many observations the prior is worth, a researcher eliciting an informative prior for a GGM has no principled basis for determining how large a sample is needed for the data to dominate the prior, or for designing a study at all.

A natural framework for characterizing the informativeness of a prior distribution is the effective sample size (ESS), which expresses the information content of the prior in units of observations. The concept originates with \citet{Clarke1996}, who represented an informative prior as the posterior arising from updating a reference prior with a hypothetical dataset, and was substantially developed by \citet{Clarke2006} and \citet[MTM;][]{Morita2008,Morita2010}, whose Fisher-information curvature approach became the dominant computational method. Several alternative ESS estimators have since been proposed. The variance-ratio (VR) and precision-ratio (PR) estimators generalize the conjugate ESS intuition to non-conjugate scenarios by relating prior to expected uncertainty of a sample-based estimator \citep{Neuenschwander2020}. The Pennello-Thompson (PT) estimator, formalized by \citet{Neuenschwander2020} 
as a simplified variant of MTM, approximates the expected posterior information at the prior mode\footnote{Following \citet{Neuenschwander2020}, who derive PT as a deterministic approximation to the simulation-based ESS procedure of \citet{Pennello2007}.}. The expected local information ratio (ELIR) averages the ratio of prior curvature to Fisher information under the prior predictive distribution \citep{Neuenschwander2020}. Applications have been developed primarily in univariate conjugate settings, including clinical trial design \citep{Morita2008, Morita2010}, and meta-analytic priors \citep{Neuenschwander2020}. Despite this, no ESS estimator has been derived or evaluated for the Wishart or \textit{G}-Wishart distributions. The multivariate, constrained structure of these priors makes the direct application of existing methods non-trivial, and the absence of a GGM specific ESS has left a critical gap in Bayesian study design for GGMs.

Closing this gap requires an ESS method that is operable before data collection, since sample size planning is inherently a pre-data analysis. Pre-data ESS methods characterize prior informativeness on the basis of $\pi(\bm{\Theta})$ and the model $f(\mathbf{X} ; \bm{\Theta})$ alone, expressing the informative prior as equivalent to a hypothetical number of observations~\citep{Clarke1996,Clarke2006,Morita2008,Neuenschwander2020}. A parallel strand of the ESS literature has developed post-data methods which instead condition on the observed data to quantify how much the prior actually contributed to the posterior analysis \citep{Reimherr2021, Wiesenfarth2019, Jones2022}. These methods are sensitive to prior-data conflict and can return negative ESS values when the prior and the likelihood are discordant. While post-data methods are valuable for posterior diagnostics, they cannot serve as inputs to sample size planning: the number of observations to collect must be determined before any data are collected. In this paper, we operate exclusively within the pre-data paradigm.

The absence of a GGM-specific ESS is particularly problematic for Bayes Factor Design Analysis \citep[BFDA;][]{Schnbrodt2017,Stefan2019}, a principled approach to sample size planning that determines the minimum $n^{\star}$ required to achieve a target level of evidence for edge inclusion or exclusion. BFDA relies on the prior predictive distribution $\int f(\mathbf{X} ; \bm{\Theta})\, \pi(\bm{\Theta})\, d\bm{\Theta}$, and when an informative prior is elicited, the design is only meaningful if the prior informativeness, that is the ESS, is known. Without ESS, the researcher cannot assess whether the informative prior overwhelms the data, nor determine how many observations are needed for the evidence to reflect the data rather than the prior. To our knowledge, no solution to this problem exists in the GGM literature.

This paper fills this gap. We make three contributions to the Bayesian GGM literature. First, we formalize pre-data ESS for GGMs under the Wishart and \textit{G}-Wishart priors, deriving a global ESS via the determinant ratio approach \citep{Vats2019} and a parameterwise ESS via Cholesky decomposition. The predictive consistency of the candidate ESS estimators --- VR, PR, MTM, PT, and ELIR --- in the sense of \citet{Neuenschwander2020} is evaluated through analytical and simulation-based results in Appendix~\ref{appendix:pcc_proofs}. Second, we introduce the Data-to-Prior Information Ratio (DPIR) procedure, which compares the average information provided by the data relative to that of the prior and uses the ESS to determine the minimum sample size $n^{\star}$ such that the information of the data exceeds that of the elicited prior. DPIR is complemented by BFDA \citep{Schnbrodt2017,Stefan2019}, together providing two principled complementary routes to sample size planning for Bayesian GGMs. The prior predictive distribution $\int f(\mathbf{X} ; \bm{\Theta})\, \pi(\bm{\Theta})\, d\bm{\Theta}$ is the natural object unifying this pipeline, as pre-data ESS, DPIR, and BFDA are all defined with respect to it. Third, we provide an open-source \texttt{R} package, \texttt{designbgm}~\citep{designbgm}, implementing all estimators and the full DPIR--BFDA pipeline.

The remainder of the paper is organized as follows. In Section \ref{sec:ggm_and_priors}, we introduce the GGM likelihood and the Wishart and \textit{G}-Wishart prior distributions. In Section \ref{sec:ess}, we introduce the global and parameterwise prior ESS for GGMs and adapt five pre-data ESS estimators (VR, PR, MTM, PT, and ELIR) to the precision matrix. In Section \ref{sec:sample_size_planning}, we present two complementary sample size planning strategies, the DPIR and BFDA. In Section \ref{sec:simulation_studies}, we report two simulation studies, the first examining the behavior of the prior ESS estimators and the second validating the planning strategies and assessing their sensitivity to prior misspecification, across a range of graph structures and prior specifications. Finally, in Section \ref{sec:discussion}, we conclude with a discussion of limitations and directions for future work.

\section{The Gaussian Graphical Model}
\label{sec:ggm_and_priors}

\subsection{Model and Likelihood}

In a Gaussian Graphical Model (GGM), graph theory and multivariate statistics are combined to represent the conditional dependence structure among a set of variables. Let $G = \left(V,E\right)$ be an undirected graph, where the $V = \left\lbrace 1,\ldots, p \right\rbrace$ is a set of $p$ nodes corresponding to the variables of interest and $E$ is a set of edges encoding conditional dependence between pairs of variables. The absence of an edge $(i,j) \notin E$ indicates that variables $i$ and $j$ are conditional independent given all remaining variables, written $i \ci j \ | \ V_{\setminus {} (i,j)}$.

A GGM assumes that the data are a realization of a $p$-variate Gaussian distribution. Let $\mathbf{X} = (\mathbf{x}_1,\ldots,\mathbf{x}_n)$ denotes a sample of size $n$, with
\[
\mathbf{x}_i\overset{\text{iid}}{\sim}\mathcal{N}_p\left(\bm{0},\bm{\Sigma}\right)
\]
where variables are mean-centered, with covariance matrix $\bm{\Sigma}$ and precision matrix $\bm{\Theta} = \bm{\Sigma}^{-1}$. The graph structure is encoded directly in $\bm{\Theta}$: an entry $\bm{\Theta}_{ij} = 0$ if and only if $i \ci j \ | \ V_{\setminus {} (i,j)}$. Therefore, learning the sparsity of $\bm{\Theta}$ is equivalent to identifying the edge set of $G$. 

Since variables are mean-centered, $S = \frac{1}{n}\sum_{k=1}^{n}\mathbf{x}_k\mathbf{x}_k^{\top}$ coincides with the sample covariance matrix. The likelihood of the GGM is

\[ f(\mathbf{X};\bm{\Theta}) =(2\pi)^{-np/2} |\bm{\Theta}|^{n/2}\exp{\left\lbrace-\frac{n}{2}\text{tr}\left(S\bm{\Theta}\right)\right\rbrace}\]

where $\bm{\Theta}$ is the parameter of interest: its sparsity structure identifies the graph, and its non-zero entries quantify the strength of conditional dependence between variables. The score function, Hessian and Fisher information of the GGM are derived in Appendix \ref{appendix:ggm_derivatives}.
The partial correlation between variables $i$ and $j$ given all remaining variables is $\rho_{ij} = -\theta_{ij}/\sqrt{\theta_{ii}\theta_{jj}}$, which lies in $(-1,1)$ and provides a standardized measure of conditional dependence.

Bayesian inference on $\bm{\Theta}$ combines the likelihood with a prior distribution $\pi(\bm{\Theta})$, resulting in the posterior 
\[
\pi(\bm{\Theta} \mid \mathbf{X}) \propto f(\mathbf{X};\bm{\Theta})\pi(\bm{\Theta})
\]
The choice of the prior reflects the assumptions about the graph structure, in particular, whether the graph is assumed to be complete or sparse. In the next Section,  we introduce the prior distributions used in this paper.

\subsection{Prior Distributions: Wishart and \textit{G}-Wishart}
The choice of prior distribution for $\bm{\Theta}$ reflects assumptions about the graph structure and controls the amount of prior information entering the analysis. We consider two prior distributions. 

The Wishart distribution, $\bm{\Theta}\sim W(\nu,\bm{\Psi})$, is the conjugate prior for the precision matrix in a GGM with a complete graph. The scale matrix $\bm{\Psi}$ encodes prior beliefs about the structure and magnitude of $\bm{\Theta}$, while the degrees of freedom $\nu$ determine the prior concentration. Larger $\nu$ results in a tighter prior around $\bm{\Psi}$, whereas smaller $\nu$ produces a more diffuse prior.

The \textit{G}-Wishart distribution, $\bm{\Theta}\sim W_G(\nu,\bm{\Psi}_G)$, extends the Wishart to sparse graphs by restricting the support to precision matrices consistent with the zero structure of $G$ \citep{Roverato2000}. The graph $G$ determines which entries of $\bm{\Theta}$ are constrained to zero, $\bm{\Psi}_G$ encodes prior beliefs about the remaining free entries, and $\nu$ plays the same role as in the Wishart. Throughout this paper, we parametrize the \textit{G}-Wishart in terms of $\nu$ to align with the Wishart notation, following the reparametrization $\nu = \delta + p - 1$ of \citet{Roverato2000}, where $\delta>0$ is the degrees of freedom parameter in the original formulation, so that $\nu>p-1$. Under this convention, the Wishart and \textit{G}-Wishart share the same interpretation of $\nu$, and results derived under one prior carry over throughout.

When prior information is available from a study of size $n_0$, the hyperparameters $(\nu, \bm{\Psi}^{*})$ are elicited by setting $\nu = n_0$ and $\bm{\Psi}^{*} = \bm{\Psi} / \nu$, where $\bm{\Psi}$ encodes the researcher's prior belief about the precision matrix, set equal to the estimate from the prior study. For the Wishart $\mathbb{E}[\bm{\Theta}] = \nu \bm{\Psi}^{*} = \bm{\Psi}$, while for the \textit{G}-Wishart, elicited analogously with $\bm{\Psi}^{*}_G = \bm{\Psi}_G / \nu$, $\mathbb{E}[\bm{\Theta}] \approx \bm{\Psi}_G$, where the approximation arises from the intractability of the \textit{G}-Wishart normalizing constant for non-decomposable graphs \citep{Roverato2002}, and vanishes as $\nu \to \infty$ and as $G$ approaches the complete graph. This intractability is specific to non-decomposable graphs: when $G$ is
decomposable, the normalizing constant is available in closed form and the prior factorizes over the cliques and separators of the graph \citep{Dawid1993,Roverato2000}, which simplifies the effective sample sizes introduced in Section~\ref{sec:ess}. We do not exploit this structure and treat general graphs via Monte Carlo, so decomposable graphs are covered as a special case rather than handled separately.

The framework developed in this paper applies to prior with a closed-form density over $\bm{\Theta}$. Mixture-type priors, such as the matrix-F distribution \citep{Mulder2018}, which can be represented as a scale mixture of Wishart distributions, require a different treatment, as the ESS and planning procedures do not extend directly to this setting. We leave the generalization to mixture priors for future work.

In each case, the prior parameters $(\nu,\bm{\Psi}^{*})$ fully characterize the information encoded in $\bm{\Theta}$ before any data are observed, a quantity that we formalize in Section \ref{sec:ess} through the prior Effective Sample Size. Further details on each distribution, including conjugacy results, are provided in Appendix \ref{appendix:priors}.

\section{The Prior Effective Sample Size}
\label{sec:ess}

The posterior distribution $\pi(\bm{\Theta}\mid\mathbf{X})$ combines two sources of information: the observed data through the likelihood $f(\mathbf{X};\bm{\Theta})$ and prior beliefs on the precision matrix through $\pi(\bm{\Theta})$. A natural question is how much each source contributes and whether these contributions can be expressed in terms of an equivalent number of independent observations. We formalize the concept of prior Effective Sample Size (prior ESS) for the multiparametric GGM setting. Under the Wishart prior, the posterior parameters are $\left(\nu + n, \left({\bm{\Psi}^{*}}^{-1} + nS\right)^{-1}\right)$, suggesting that the prior contributes $\nu$ equivalent observations while the data contribute $n$. In an elicited Wishart prior, $\nu$ has a natural interpretation as the size of the prior study on which the elicitation is based. However, the number of observations the prior is actually worth is not necessarily equal to $\nu$: in a multivariate precision matrix, parameters are not independent, and the information the prior carries about each entry of $\bm{\Theta}$ depends on the full covariance structure encoded in $\bm{\Psi}^{*}$ and the graph topology $G$. A principled ESS must account for these dependencies. Following \citet{Neuenschwander2020}, this motivates an additive decomposition of the posterior ESS,
\[
ESS_{\bm{\Theta}\mid\mathbf{X}} = ESS_{\mathbf{X}\mid\bm{\Theta}} + ESS_{\bm{\Theta}}  
\]
where $ESS_{\mathbf{X}\mid\bm{\Theta}}$ is the data ESS and $ESS_{\bm{\Theta}}$ is the prior ESS. Under the prior predictive distribution, the data contribute $n$ equivalent observations, so that $ESS_{\bm{\Theta}\mid\mathbf{X}} = n + ESS_{\bm{\Theta}} $; we establish this for the candidate estimators in Appendix~\ref{appendix:pcc_proofs}. We focus on the prior ESS, $ESS_{\bm{\Theta}}$, as the basis for characterizing prior informativeness and guiding sample size planning in Bayesian GGMs.

\subsection{Global and Parameterwise Prior ESS}
\label{subsec:global_and_parameterwise_prior_ess}

The precision matrix $\bm{\Theta}$ contains $p(p+1)/2$ unique parameters, and any ESS method applied to $\bm{\Theta}$ produces a ratio of two symmetric positive definite matrices: a numerator matrix $A$ and a denominator matrix $B$, both of dimension $d \times d$, with $d = p(p+1)/2$, after half-vectorization of $\bm{\Theta}$. The reduction from $p^2 \times p^2$ is described in Appendix \ref{appendix:matrix_operators}. The five ESS methods presented in Section \ref{subsec:ess_methods} each instantiate this framework with method-specific choices of $A$ and $B$. The prior ESS can be summarized from this ratio at two levels: (i) globally across $\bm{\Theta}$, or (ii) separately for each parameter $\theta_{ij}$.

\paragraph*{Global prior ESS.} The global prior ESS is defined via the determinant ratio \citep{Vats2019},
\[
ESS(\bm{\Theta}) = \left(\frac{\determinant{A}}{\determinant{B}}\right)^{1/d}.
\]
The determinant captures the full covariance structure of $\bm{\Theta}$, including cross-parameter dependencies encoded in the off-diagonal entries of $A$ and $B$. The \textit{d}-th root normalizes by the number of unique parameters, placing the global ESS on the scale of a single parameter ESS.

\paragraph*{Parameterwise prior ESS.} The global ESS decomposes into parameter-specific contributions through the two Cholesky factorizations: $A = L^{A}\left(L^{A}\right)^{\top}$ and $B = L^{B}\left(L^{B}\right)^{\top}$. Since $\text{det}(A) = \prod_{k=1}^d \left(L^{A}_{kk}\right)^2$ and $\text{det}(B) = \prod_{k=1}^d \left(L^{B}_{kk}\right)^2$, the determinant ratio becomes
\[
ESS(\bm{\Theta}) = \left(\frac{\prod_{k=1}^d \left(L^{A}_{kk}\right)^2}{\prod_{k=1}^d \left(L^{B}_{kk}\right)^2}\right)^{1/d}
\]
This motivates a parameterwise prior ESS for each unique entry $\theta_{ij}$, where $k = 1,\ldots, d$ indexes the unique entries of $\bm{\Theta}$,
\[
ESS(\theta_{ij}) = \frac{\left(L^{A}_{kk}\right)^2}{\left(L^{B}_{kk}\right)^2}
\]
such that the global ESS is the geometric mean of the parameterwise values. The parameterwise prior ESS
shows how the overall prior informativeness is distributed across the individual parameters, identifying which edges of the graph contribute the most. The Cholesky factorization is order-dependent, so that the parameterwise prior ESS values $\text{ESS}(\theta_{ij})$ depend on the indexing of the unique entries of $\bm{\Theta}$, while the global $\text{ESS}(\bm{\Theta})$ remains invariant under any permutation. We adopt the column-major lower triangular ordering of $\bm{\Theta}$, indexing the unique entries $\theta_{ij}$, $i \geq j$ as the canonical parameterization, consistent with the half-vectorization of $\bm{\Theta}$ used throughout.

Alternative aggregations such as the average of diagonal ratios $\frac{1}{d}\sum_{k}A_{kk}/B_{kk}$ and the trace ratio $\text{tr}(A)/\text{tr}(B)$ operate only on diagonal entries, disregarding the off-diagonal structure that encodes covariance and cross-information between parameters. For these reasons, we adopt the determinant ratio as the global prior ESS throughout this paper.

\subsection{ESS Methods for the Precision Matrix}
\label{subsec:ess_methods}

We adapt five pre-data ESS methods to the precision matrix $\bm{\Theta}$ under the elicitation convention $\bm{\Psi}^{*} = \bm{\Psi}/\nu$, so that $\mathbb{E}[\bm{\Theta}] = \bm{\Psi}$ exactly under the Wishart prior and approximately under the \textit{G}-Wishart prior. Under this convention, $\bm{\Psi}$ is directly interpretable as the prior belief about the precision matrix. Each method instantiates the general $(A,B)$ framework introduced in Section \ref{subsec:global_and_parameterwise_prior_ess} with method-specific choices of $A$ and $B$, expressed in terms of the prior variance matrix $\mathbb{V}(\bm{\Theta})$ and the expected (inverse) Fisher information for a single observation $\mathcal{I}(\mathbf{x}_1;\bm{\Theta})$, both of dimension $p^2 \times p^2$ under full vectorization. Prior-specific expressions for these quantities, and their reduction to dimension $\frac{p(p+1)}{2} \times \frac{p(p+1)}{2}$ via half-vectorization, are provided in Appendix \ref{appendix:matrix_operators}, together with the definitions and key properties of the commutation matrix $\mathbf{K}$, elimination matrix $\mathbf{E}$, duplication matrix $\mathbf{D}$ and corresponding Moore-Penrose pseudoinverse $\mathbf{D}^{+}$ used throughout. Analytical derivations of the closed-form results are provided in Appendix \ref{appendix:prior_ess_formulas}.

\paragraph{Variance Ratio (VR).} The Variance Ratio \citep[VR;][]{Neuenschwander2020} defines the prior ESS as the ratio of the expected single-observation information to the prior variance,
\[
ESS_{\text{VR}} = \left(\frac{\determinant{\mathbb{E}_{\bm{\Theta}}\left[\mathcal{I}^{-1}(\mathbf{x}_1;\bm{\Theta})\right]}}{\determinant{\mathbb{V}\left(\bm{\Theta}\right)}}\right)^{1/d} = \nu \cdot \left( \frac{\determinant{\mathbf{A} + \mathbf{B}/\nu}}{\determinant{\mathbf{A}}} \right)^{1/d}
\]
where 
\begin{align*}
    \mathbf{A} &= \mathbf{E}(\mathbf{I} + \mathbf{K})(\bm{\Psi} \otimes \bm{\Psi})\mathbf{E}^{\top} \quad{}\text{(with } \mathbf{I}\text{ the identity matrix of appropriate dimensions)}\\
    \mathbf{B} &= 2\mathbf{D}^{+}\left[\text{vec}(\bm{\Psi})\text{vec}(\bm{\Psi})^{\top} + \mathbf{K} (\bm{\Psi} \otimes\bm{\Psi})\right]{\mathbf{D}^{+}}^{\top}.
\end{align*}
As $\nu \rightarrow \infty$, $ESS_{\text{VR}}\sim\nu$, growing linearly with the prior sample size. For finite $\nu$ and $p$, $ESS_{\text{VR}}>\nu$ and converges to $\nu$ from above. Under the \textit{G}-Wishart prior, $\mathbb{E}_{\bm{\Theta}}\left[\mathcal{I}^{-1}(\mathbf{x}_1;\bm{\Theta})\right]$ and $\mathbb{V}\left(\bm{\Theta}\right)$ do not admit closed-form expressions and are estimated via Monte Carlo simulation from $W_{G}\left(\nu,\bm{\Psi}/\nu\right)$.

\paragraph{The Precision Ratio (PR).} The Precision Ratio \citep[PR;][]{Neuenschwander2020} inverts the VR relationship, comparing prior precision directly to expected information
\[
ESS_{\text{PR}} = \left(\frac{\determinant{\mathbb{V}^{-1}\left(\bm{\Theta}\right)}}{\determinant{\mathbb{E}_{\bm{\Theta}}\left[\mathcal{I}(\mathbf{x}_1;\bm{\Theta})\right]}}\right)^{1/d} = \frac{(\nu - p)(\nu - p - 1)(\nu - p - 3)}{\nu (\nu - p - 2)}\cdot \left( \frac{\lvert\mathbf{B}\rvert}{ \lvert   \mathbf{B} + \mathbf{C}/(\nu - p - 2) \rvert} \right)^{1/d} 
\]
where
\begin{align*}
\mathbf{B} &= \mathbf{D}^{\top}(\bm{\Psi}^{-1}  \otimes  \bm{\Psi}^{-1})\mathbf{D} \\
\mathbf{C} &= \mathbf{D}^{\top}\left(\ \text{vec}(\bm{\Psi}^{-1})\text{vec}(\bm{\Psi}^{-1})^{\top} +  \ \mathbf{K} (\bm{\Psi}^{-1} \otimes  \bm{\Psi}^{-1})\right)\mathbf{D}.
\end{align*}
As $\nu \rightarrow \infty$, by continuity of the determinant $\left( \frac{\lvert\mathbf{B}\rvert}{ \lvert   \mathbf{B} + \mathbf{C}/(\nu - p - 2) \rvert} \right)^{1/d} \rightarrow 1$ and $ESS_{\text{PR}}$ is approximated by the factor $\frac{(\nu - p)(\nu - p - 1)(\nu - p - 3)}{\nu (\nu - p - 2)}$, which grows linearly with $\nu$ and converges to $\nu$ as $\nu \rightarrow \infty$ with $p$ fixed. For finite $\nu$ and $p$, $ESS_{\text{PR}} < \nu$, with the gap increasing as $p$ grows relative to $\nu$. Under the \textit{G}-Wishart prior, $\mathbb{V}^{-1}\left(\bm{\Theta}\right)$ and $\mathbb{E}_{\bm{\Theta}}\left[\mathcal{I}(\mathbf{x}_1;\bm{\Theta})\right]$ do not admit closed-form expressions and are estimated via Monte Carlo simulation from $W_{G}\left(\nu, \bm{\Psi}/\nu\right)$, as for VR.

\paragraph{The Morita-Thall-Müller (MTM).} The MTM method \citep{Morita2008,Morita2010} identifies the prior ESS as the sample size $m$ that minimizes the discrepancy between the prior and a locally-matched $\varepsilon\text{-information}$ reference prior at the prior mean $\overline{\bm{\Theta}} = \bm{\Psi}$,
\[
ESS_{\text{MTM}} = \arg\min_{m} \Delta(m,\overline{\bm{\Theta}},\pi,\pi_{\varepsilon})
\]
where $\nu_{\text{min}}=p$ is the minimum admissible degrees of freedom for the reference prior and $\varepsilon>0$ is arbitrarily small. Under the Wishart prior, the discrepancy reduces analytically and setting $\varepsilon = 1$ yields
\[
ESS_{\text{MTM}} = \nu - 1.
\]
Under the \textit{G}-Wishart, $\nu_{\text{min}}=p$ by the same argument and $ESS_{\text{MTM}} = \nu - 1$. 

\paragraph{The Pennello-Thompson (PT).} The PT method \citep{Neuenschwander2020}, computes the ESS by taking the ratio of prior information to single-observation Fisher information at the prior mode $\tilde{\bm{\Theta}}$
\[
ESS_{\text{PT}} = \frac{I(\tilde{\bm{\Theta}})}{\mathcal{I}(\mathbf{x}_1;\tilde{\bm{\Theta}})}.
\]
Under the Wishart prior, the prior and single-observation Fisher information share the same matrix $\mathbf{D}^{\top}(\tilde{\bm{\Theta}}^{-1}\otimes\tilde{\bm{\Theta}}^{-1})\mathbf{D}$, which cancels in the ratio, yielding 
\[ 
ESS_{\text{PT}} = \nu - p -1.
\]
Under the \textit{G}-Wishart, $ESS_{\text{PT}} = \nu - p -1$.
    
\paragraph{The Expected Local Information Ratio (ELIR).} The ELIR method \citep{Neuenschwander2020} averages this ratio over the prior distribution, accounting for the uncertainty in $\bm{\Theta}$,
\[
ESS_{\text{ELIR}} = \mathbb{E}_{\bm{\Theta}}\left[\frac{I(\bm{\Theta})}{\mathcal{I}(\mathbf{x}_1;\bm{\Theta})}\right].
\]
Under the Wishart prior, the matrix $\mathbf{D}^{\top}(\tilde{\bm{\Theta}}^{-1}\otimes\tilde{\bm{\Theta}}^{-1})\mathbf{D}$ cancels in the ratio as in PT, and the expectation reduces to
\[
ESS_{\text{ELIR}} = \nu - p -1.
\]
Under the \textit{G}-Wishart, $ESS_{\text{ELIR}} = \nu - p - 1$.

In GGMs, the observed and expected Fisher information are equal, since the observed information does not depend on the data. This property simplifies the computation of all five methods above. The five methods differ in their sensitivity to the precision matrix and its structure. As shown above, the prior ESS under MTM, PT, and ELIR depends only on $\nu$ and $p$, and is invariant to $\bm{\Psi}^{*}$ and the graph topology. The VR and PR methods, in contrast, depend on the full structure of $\bm{\Psi}^{*}$ through $\mathbf{A}$, $\mathbf{B}$, and $\mathbf{C}$, making the ESS sensitive to parameter dependencies and graph structure. Under the elicitation convention $\bm{\Psi}^{*} = \bm{\Psi}/\nu$, both VR and PR grow linearly with $\nu$ as $\nu \rightarrow \infty$. The opposite behaviors of VR and PR, $ESS_{\text{VR}} > \nu$ and $ESS_{\text{PR}} < \nu$, reflect a Jensen gap. For positive definite random matrices, the convexity of matrix inversion gives $\mathbb{E}_{\bm{\Theta}}\left[\mathcal{I}(\mathbf{x}_1;\bm{\Theta})\right]^{-1} \le \mathbb{E}_{\bm{\Theta}}\left[\mathcal{I}^{-1}(\mathbf{x}_1;\bm{\Theta})\right]
$. In both methods, the determinant is nonlinear in $\bm{\Theta}$, and its opposite curvature under the VR and PR constructions produces this gap for finite $\nu$ and $p$. The Jensen gap $J = 1-ESS_{\text{PR}}/ESS_{\text{VR}}$ formalizes this difference and is derived in Appendix \ref{appendix:jensen_gap}. The magnitude of this gap as a function of $\nu$ and $p$, and its empirical consequences, are examined in the simulation study in Section \ref{subsec:simulation_study_prior_ess}.

Taken together, all five methods satisfy a predictive consistency criterion that extends \citet{Neuenschwander2020} to the multiparametric GGM setting, holding exactly for MTM, PT, and ELIR under both the Wishart and \textit{G}-Wishart priors, and asymptotically for VR and PR as $\nu \rightarrow \infty$ with $p$ fixed, consistent with their convergence to $\nu$ established above (see Appendix~\ref{appendix:pcc_proofs}).

\section{Sample Size Planning}
\label{sec:sample_size_planning}

The information encoded in the prior distribution directly affects sample size planning in GGMs. If an informative prior is derived from a previous study, the new data must provide sufficient information to ensure that the posterior is influenced more by the observed evidence than by prior beliefs. We address this issue through a two complementary planning strategies. The first, the Data-to-Prior Information Ratio (DPIR), a novel information-based method introduced here, determines the sample size needed for the data to dominate the prior in terms of Fisher information. The second, Bayes Factor Design Analysis~\citep[BFDA;][]{Schnbrodt2017}, extended here to the GGM setting, determines the sample size needed to achieve sufficient evidence for or against conditional independence on each edge. The two strategies target complementary aspects of the planning problem: DPIR ensures that the data carry enough information to dominate the prior, while BFDA ensures that the data provide enough evidence for edge-specific testing. Furthermore, their recommended sample sizes are reported jointly with the prior ESS of Section~\ref{subsec:ess_methods} as a diagnostic of prior informativeness. Recommendations on how to use the two sample sizes together are given at the end of Section~\ref{subsec:bfda}.

The two complementary planning strategies are illustrated on a toy example throughout this section. The example consists of a single precision matrix $\bm{\Psi}$ of dimension $p = 10$ with prior degrees of freedom $\nu = 100$, corresponding to a prior study of size $100$. The matrix was generated using the procedure described in Algorithm~\ref{algorithm:random_precision} and respects the zero pattern of a random graph $G$ with 30 present edges out of 45 possible (density $= 0.67$). The partial correlations range from $-0.325$ to $0.483$, reflecting a heterogeneous association structure across the present edges. We report the matrix $\bm{\Psi}$ in Appendix~\ref{appendix:illustrative_precision_matrix} for reproducibility, and use the elicited scale $\bm{\Psi}^{*} = \bm{\Psi}/\nu$ throughout this section. The planning strategies illustrated here are validated across prior specifications and graph structures in the simulation study of Section~\ref{sec:simulation_studies}.

\subsection{Data-to-Prior Information Ratio}
\label{subsec:dpir}

The DPIR measures the relative information provided by a sample of size $n$ with respect to that contained in the prior. For $\mathbf{x}_i \overset{\text{iid}}{\sim} \mathcal{N}_p(\mathbf{0}, \bm{\Theta}^{-1})$, $i = 1, \ldots, n$, with prior $\bm{\Theta} \sim \pi(\bm{\Theta})$, the DPIR is defined as
\[
\Lambda_{n} (\bm{\Theta},\mathbf{X}) = \left(\frac{\lvert\mathcal{I}(\mathbf{X};\hat{\bm{\Theta}}\left(\mathbf{X}\right))\rvert}{\lvert I(\bm{\Theta})\rvert}\right)^{1/d}
\]
where $d = p(p+1)/2$, $\mathcal{I}(\mathbf{X};\hat{\bm{\Theta}}\left(\mathbf{X}\right))$ is the expected Fisher information of the data evaluated at the MLE $\hat{\bm{\Theta}}(\mathbf{X}) = S^{-1}$ (see Appendix \ref{appendix:ggm_derivatives}), and $I\left(\bm{\Theta}\right)$ is the observed Fisher information of the prior at $\bm{\Theta}$ (see Appendix \ref{appendix:priors}). Values $\Lambda_n>1$ indicate that the data carry more information than the prior.

Since $\Lambda_n$ depends both on $\bm{\Theta}$ and $\mathbf{X}$, we consider its behavior under the prior predictive distribution. The probability that the data information exceeds the prior information by a factor of at least $\xi\ge 1$ is
\[
    \Pr(\Lambda_{n}  > \xi) = \mathbb{E}_{\bm{\Theta}}\left[\mathbb{E}_{\mathbf{X}|\bm{\Theta}}\left[\mathbf{1}\left(\Lambda_{n} (\bm{\Theta},\mathbf{X})> \xi\right)\right]\right]
\]
The default choice $\xi=1$ corresponds to requiring the data to carry at least as much information as the prior. Values $\xi>1$ impose a stricter requirement for dominance.

The optimal sample size is the minimal $n$ such that this probability reaches a target level $\tau\in(0,1]$,
\[
n^{\star} = \min {\left\lbrace n \in \mathbb{N} : \Pr(\Lambda_{n} > \xi) \ge \tau\right\rbrace}
\]
with default $\tau=0.99$, ensuring that the data dominate the prior with high probability under the prior predictive distribution. Larger values of $\tau$ require dominance to hold with greater probability under the prior predictive distribution, while smaller values relax this requirement.
Algorithm \ref{algorithm:dpir} describes the computation of $\Pr(\Lambda_{n}  > \xi)$ for a fixed $n$, returning a Monte Carlo estimate obtained by averaging the indicator $\mathbf{1}(\Lambda_n>\xi)$ over draws from the prior predictive distribution. The optimal sample size $n^{\star}$ is found by bisection on $n$, exploiting the monotonicity of $\Pr(\Lambda_n > \xi)$ in $n$ under the prior predictive distribution.

\begin{figure}[t]
\centering
\begin{subfigure}{0.49\textwidth}
    \def\svgwidth{\linewidth}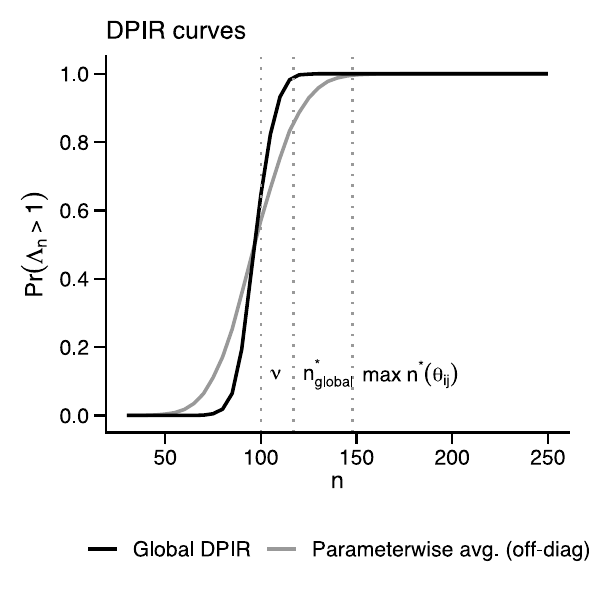
    \caption{}
    \label{subfig:dpir_curves}
\end{subfigure}
\hfill
\begin{subfigure}{0.49\textwidth}
    \def\svgwidth{\linewidth}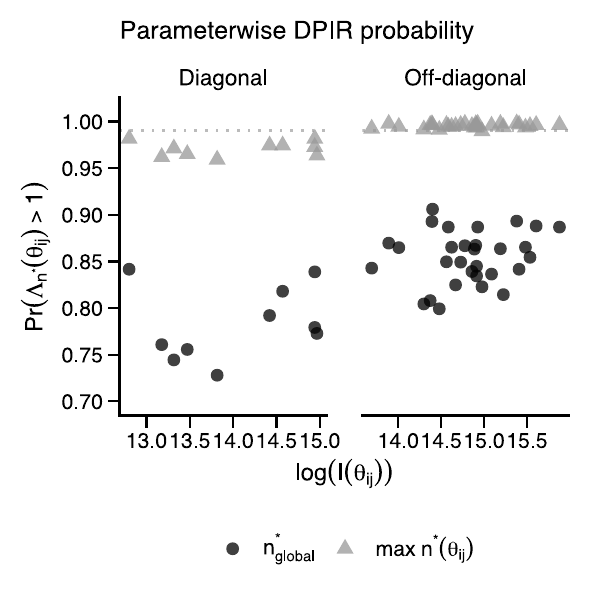
    \caption{}
    \label{subfig:dpir_diagnostic}
\end{subfigure}
\caption{DPIR sample size planning for the toy example with $p = 10$ and $\nu = 100$. \emph{Panel (a)}: $\Pr(\Lambda_n > 1)$ as a function of $n$ for the global (black) and the average parameterwise DPIR across off-diagonal parameters (gray). Dotted vertical lines mark $\nu$, $n^{\star}_{\text{global}}$, and $\max_{(i,j) \notin \text{diag}} n^{\star}(\theta_{ij})$. \emph{Panel (b)}: $\Pr(\Lambda_{n^{\star}}(\theta_{ij}) > 1)$ at $n^{\star}_{\text{global}}$ (circles) and $\max_{(i,j) \notin \text{diag}} n^{\star}(\theta_{ij})$ (triangles) against prior Fisher information $\log I(\theta_{ij})$, separately for diagonal and off-diagonal parameters.}
\label{fig:dpir}
\end{figure}

The DPIR can also be evaluated at the parameter level. Following the Cholesky decomposition, the determinant ratio decomposes into parameter-specific contributions, resulting in a parameterwise DPIR for each precision entry $\theta_{ij}$,
\[
\Lambda_n(\theta_{ij})=\frac{\left(L_{kk}^{\mathcal{I}(\mathbf{X};\hat{\bm{\Theta}}(\mathbf{X}))}\right)^{2}}{\left(L_{kk}^{I(\bm{\Theta})}\right)^{2}}
\]
where $k$ indexes the unique entries of $\bm{\Theta}$ following the column-major lower triangular ordering of Section \ref{subsec:global_and_parameterwise_prior_ess}, and $L_{kk}^{A}$ denotes the \textit{k}-th diagonal entry of the Cholesky factor of $A$. Values $\Lambda_n(\theta_{ij})>\xi$ indicate that the data exceed the prior information by at least a factor $\xi$ for that parameter. The parameterwise optimal sample size is then
\[
n^{\star}(\theta_{ij}) = \min{\left\lbrace n\in\mathbb{N} \ : \ \Pr(\Lambda_n(\theta_{ij}) > \xi) \ge \tau \right\rbrace}.
\]
The parameterwise DPIR evaluated at the planned sample size is a useful post-planning diagnostic. For each precision parameter $\theta_{ij}$, a value of $\Pr(\Lambda_{n^{\star}}(\theta_{ij}) > \xi)$ below the target $\tau$ indicates that the data do not dominate the prior for that precision parameter at the planned $n^{\star}$. The diagnostic is computed following Algorithm~\ref{algorithm:dpir}, averaging $\mathbf{1}(\Lambda_{n^{\star}}(\theta_{ij}) > \xi)$ over Monte Carlo draws from the prior predictive distribution at the planned $n^{\star}$. While the parameterwise DPIR is defined for every entry of $\bm{\Theta}$, sample size planning focuses on off-diagonal parameters, which encode the partial correlations targeted by the edge-specific BFDA in Section~\ref{subsec:bfda}.

To illustrate the DPIR analysis, we apply it to the toy example introduced above, with $p = 10$, $\nu = 100$, and $\bm{\Theta} \sim W_G(\nu, \bm{\Psi}^{*})$ with $G$ encoding 30 present edges out of 45 possible, and a target probability $\tau = 0.99$. Figure \ref{fig:dpir} shows the results of the analysis. Figure \ref{subfig:dpir_curves} presents the probability $\Pr(\Lambda_n > 1)$ as a function of $n$ for the global DPIR (black) and the average parameterwise DPIR across off-diagonal parameters (gray). Both curves exhibit a sigmoidal shape, rising steeply from zero near $n = \nu = 100$ (gray dotted vertical line) and converging to one as $n$ grows. The global optimal sample size $n^{\star}_{\text{global}} = 117$ is above $\nu$, confirming that somewhat more observations than the prior study size are needed for the data to dominate the prior. The gray curve, the average $\Pr(\Lambda_n(\theta_{ij}) > 1)$ over off-diagonal parameters, rises more slowly than the black one, showing that edges on average require more data than the aggregate global criterion. The parameterwise optimum $\max_{(i,j) \notin \text{diag}} n^{\star}(\theta_{ij}) = 148$ exceeds the global one, $n^{\star}_{\text{global}}$. This gap reflects that the global criterion aggregates across parameters, allowing the weakly informative parameters to compensate for those with higher prior information. Figure \ref{subfig:dpir_diagnostic} shows $\Pr(\Lambda_{n^{\star}}(\theta_{ij}) > 1)$ at $n^{\star}_{\text{global}} = 117$ (circles) and $\max_{(i,j) \notin \text{diag}}  n^{\star}(\theta_{ij}) = 148$ (triangles), plotted against the prior Fisher information $\log I(\theta_{ij})$, separately for diagonal and off-diagonal parameters. At $n^{\star}_{\text{global}} = 117$, all diagonal parameters are below the target, while off-diagonal parameters are closer to but not yet at the target. At $\max_{(i,j) \notin \text{diag}} n^{\star}(\theta_{ij}) = 148$, all off-diagonal parameters reach the target probability of $0.99$, while the diagonal entries lie between $0.96$ and $0.99$.

Thus, the DPIR provides the first component of the planning strategy, determining the minimum sample size at which the information contributed by the data exceeds that encoded in the prior. The second component, the BFDA, addresses the complementary question of how many observations are needed to achieve sufficient evidence for edge-specific testing, as described in the following subsection.

\subsection{Bayes Factor Design Analysis} 
\label{subsec:bfda}
The second strategy determines the sample size needed to achieve sufficient evidence for or against conditional independence, calibrated to a representative planning edge and guaranteed to extend to all stronger edges by monotonicity. We apply BFDA \citep{Schnbrodt2017} to the edge-specific hypotheses of conditional independence in GGMs. Since $\rho_{ij} = 0$ if and only if $\theta_{ij}=0$, this is equivalently expressed as a hypothesis on the partial correlation,
\begin{equation}
    \begin{cases}
    \mathcal{H}_0 &: \rho_{ij} = 0 \\
    \mathcal{H}_1 &: \rho_{ij} \in(-1,1)
    \end{cases} 
    \label{eq:hypothesis_conditional_independence}
\end{equation}
We focus exclusively on edge-specific hypotheses. Inference on the whole graph structure is not considered in this paper. Testing is performed on all present edges in $G$, with the test in \eqref{eq:hypothesis_conditional_independence} assessing whether the data support their exclusion.

\paragraph{Bayes factor for complete graphs.} When the prior graph $G$ is complete, the researcher specifies the same hyperparameters $(\nu,\bm{\Psi}^{*})$ used to determine the ESS in Section~\ref{sec:ess}. Under $\mathcal{H}_1$, $\bm{\Theta}$ is drawn from the full Wishart $W(\nu,\bm{\Psi}^{*})$; under $\mathcal{H}_0$, from the \textit{G}-Wishart $W_{G_{-(i,j)}}(\nu,\bm{\Psi}^{*})$, where  $G_{-(i,j)}$ denotes the graph with edge $(i,j)$ removed. This construction ensures compatibility between the two priors in the sense of \citet{Dickey1971}, which is required for the Savage-Dickey representation of the Bayes factor to hold.

Following \citet[Lemma~3]{Giudici1995}, the Bayes factor on $\theta_{ij}$ conditional on the remaining variables $b=\left\lbrace 1,\ldots,p \right\rbrace \setminus\left\lbrace i,j \right\rbrace$ takes the form of a Savage-Dickey density ratio:
\begin{equation}
BF_{01} = \frac{\Pr(\mathbf{X} \mid \mathcal{H}_0)}{\Pr(\mathbf{X} \mid \mathcal{H}_1)} = \frac{\Pr(\theta_{ij}=0 \mid \mathbf{X},\mathcal{H}_1)}{\Pr(\theta_{ij}=0 \mid \mathcal{H}_1)}.
\label{eq:bf01}
\end{equation}
This has a closed form:
\begin{equation}
BF_{01} = C(n,\nu)\cdot\frac{(1-\hat{\rho}_{ij}^{2})^{(\nu+n)/2}}{(1-\rho_{ij}^{2})^{\nu/2}}\cdot\sqrt{\frac{T_{ii}T_{jj}}{(T_{ii}+S_{ii|b})(T_{jj}+S_{jj|b})}}
\label{eq:conditional_bf_giudici_closed_form}
\end{equation}
where $\rho_{ij}$ is the prior partial correlation between variables $i$ and $j$ given $b$, $\hat{\rho}_{ij}$ is the posterior partial correlation, $\mathbf{T}=(\bm{\Psi}^{*}_{aa})^{-1}$ is the prior partial variance matrix restricted to $\left\lbrace i,j \right\rbrace$, $S_{ij|b}$ is the sample partial deviance of variables $i$ and $j$ conditional on $b=\left\lbrace 1,\ldots,p \right\rbrace \setminus \left\lbrace i , j\right\rbrace$, and $C(n,\nu)$ is a ratio of Gamma functions depending only on $n$ and $\nu$\footnote{$C(n,\nu) = \exp\!\left[
\log\Gamma\!\left(\frac{\nu}{2}\right) +
\log\Gamma\!\left(\frac{\nu-1}{2}\right) +
2\log\Gamma\!\left(\frac{\nu+n+1}{2}\right) -
\log\Gamma\!\left(\frac{\nu+n}{2}\right) -
\log\Gamma\!\left(\frac{\nu+n-1}{2}\right) -
2\log\Gamma\!\left(\frac{\nu+1}{2}\right)
\right]$}. Conditioning on the remaining variables reduces the effective dimensionality of the testing problem, making the approach suitable when $p > n$.
Since $\rho_{ij} = T_{ij}/\sqrt{T_{ii}T_{jj}}$ is scale invariant, the BFDA output depends on $\bm{\Psi}^{*}$ only through the prior partial correlations. Consequently, the power curves and required sample size $n^{\star}$ are determined entirely by $\rho_{ij}$ and $\nu$, making ESS and $n^{\star}$ directly comparable as functions of the same parameter elicitation $(\nu,\bm{\Psi}^{*})$.

\paragraph{Bayes factor for sparse graphs.} When the prior structure $G$ has at least one edge constrained to zero, the prior is $\bm{\Theta} \sim W_G(\nu,\bm{\Psi}^{*})$ under $\mathcal{H}_1$, and $\bm{\Theta} \sim W_{G_{-(i,j)}}(\nu,\bm{\Psi}^{*})$ under $\mathcal{H}_0$, with the same $(\nu,\bm{\Psi}^{*})$ as above. A closed-form Bayes factor is no longer available since the normalizing constant of the \textit{G}-Wishart is analytically intractable for non-decomposable graphs. The marginal likelihood under each hypothesis takes the form of a ratio of \textit{G}-Wishart normalizing constants between the posterior and the prior, and the Bayes factor reduces to:
\begin{equation}
BF_{01} = 
\frac{I_{G_{-(i,j)}}\!\left(\nu + n,\, \nu\bm{\Psi}^{-1} + \mathbf{X}^\top\mathbf{X}\right)}{I_{G}\!\left(\nu + n,\, \nu\bm{\Psi}^{-1} + \mathbf{X}^\top\mathbf{X}\right)} \cdot \frac{I_{G}\!\left(\nu,\, \nu\bm{\Psi}^{-1}\right)}{I_{G_{-(i,j)}}\!\left(\nu,\, \nu\bm{\Psi}^{-1}\right)}
\label{eq:bf_normalizing_constant}
\end{equation}
where $I_{G}(\nu,\nu\bm{\Psi}^{-1})$ denotes the normalizing constant of 
the \textit{G}-Wishart $W_G(\nu,\bm{\Psi}^{*})$. Since all four normalizing constants are analytically intractable for non-decomposable graphs, this ratio is estimated via the Monte Carlo procedure of \citet{Atay2005}, which expresses it as an expectation over auxiliary variables derived from the Cholesky decomposition of the precision matrix. Technical details are given in Appendix \ref{appendix:ak_bf}, where we follow the original notation of \citet{Atay2005} with $\delta = \nu - p +1$.

\paragraph{Planning edge and sample size determination.} Since the expected Bayes factor is not available in closed form in either case, the power function is estimated by simulation following Algorithm \ref{algorithm:bfda}. The sample size recommendation is calibrated to a single planning edge selected from the distribution of prior partial correlations, so that its required sample size provides a conservative upper bound for all edges carrying a stronger signal in the prior graph. To perform a planning analysis on a representative and detectable effect, we define $\rho_{\min}$ as the empirical $q$-th quantile of $|\rho_{ij}|$ across all present edges in $G$, with default $q = 0.50$. This divides the present edges into the detectable set $\mathcal{E}^{+} = \{(i,j) \in G : |\rho_{ij}| \geq \rho_{\min}\}$, comprising the $1-q$ strongest edges, and a remainder whose partial correlations are too weak to anchor a meaningful planning recommendation. The planning edge $(i, j)$ is the weakest edge in $\mathcal{E}^{+}$, with partial correlation $\rho_{\min}$. If $\rho_{\min} < \epsilon$ for a hard minimum $\epsilon > 0$ (default $0.05$), the edge is flagged as unplannable. The threshold $\epsilon$ is substantive rather than computational: partial correlations below it are considered scientifically negligible regardless of their statistical detectability.

The optimal sample size is then determined separately under $\mathcal{H}_{0}$ and $\mathcal{H}_{1}$ by bisection on the planning edge $(i, j)$, initialized at a frequentist warm start based on the Fisher $z$-transform of $\rho_{\min}$ \citep{Fisher1915, Cohen1988}, yielding
\begin{equation}
    n^{\star}_{\mathcal{H}_{0}} = \min\bigl\{n \in \mathbb{N} :
        \beta_{0}(n,\,\rho_{\min}) \geq \pi_{0}\bigr\},
    \qquad
    n^{\star}_{\mathcal{H}_{1}} = \min\bigl\{n \in \mathbb{N} :
        \beta_{1}(n,\,\rho_{\min}) \geq \pi_{1}\bigr\},
    \label{eq:nstar_planning_edge}
\end{equation}
where $\beta_{0}(n,\rho_{\min})$ and $\beta_{1}(n,\rho_{\min})$ denote the power under $\mathcal{H}_{0}$ and $\mathcal{H}_{1}$ respectively at sample size $n$ for the planning edge, and $\pi_{0} = \pi_{1} = 0.80$ are the target power levels. The decision rule is based on a Bayes factor threshold $\gamma > 1$: the data provide sufficient evidence for $\mathcal{H}_{0}$ if $BF_{01} > \gamma$, and for $\mathcal{H}_{1}$ if $BF_{01} < 1/\gamma$, with default $\gamma = 10$ corresponding to strong evidence in the classification of \citet{Jeffreys1961}. The recommended sample size is $n^{\star} = \max(n^{\star}_{\mathcal{H}_{0}}, n^{\star}_{\mathcal{H}_{1}})$. If $n^{\star}$ exceeds a feasible maximum sample size $n_{\max}$, determined by the study's resource constraints, then the planning edge is detectable in principle but not within those constraints, and $n^{\star}$ is flagged accordingly. 

\paragraph{Monotonicity of the planned sample size.} The key property justifying this design is the monotonicity of $n^{\star}(\rho)$ in $|\rho|$: since larger partial correlations require smaller sample sizes to detect, the planned $n^{\star}$ satisfies
\begin{equation}
    n^{\star}_{\mathcal{H}_{0}}(\rho_{ij}) \leq n^{\star}_{\mathcal{H}_{0}}, \qquad
    n^{\star}_{\mathcal{H}_{1}}(\rho_{ij}) \leq n^{\star}_{\mathcal{H}_{1}},
    \label{eq:monotonicity_guarantee}
\end{equation}
for every $(i,j) \in \mathcal{E}^{+}$ with $|\rho_{ij}| \geq \rho_{\min}$, so a
study of size $n^{\star}$ is designed to achieve the target power for every edge in $\mathcal{E}^{+}$. The monotonicity guarantee in \eqref{eq:monotonicity_guarantee} is exact in
the frequentist sense but approximate in the Bayesian one, because $\beta_0$
and $\beta_1$ depend on the full prior structure through the \textit{G}-Wishart draws
and not only on $\rho_{\min}$. Although the guarantee is only approximate in the Bayesian setting, the monotonicity holds empirically: in the toy example, plotting $\log_{10}{\left(BF_{01}\right)}$ at the planned $n^{\star}$ against $|\rho|$ across the present edges shows the evidence against the null strengthening monotonically as the partial correlation grows, so the planned $n^{\star}$ reaches the target power for every stronger edge.

\paragraph{Multiplicity and absent edges.} With $p(p-1)/2$ potential edges tested simultaneously, multiplicity is a concern. The \textit{G}-Wishart prior reduces the effective number of tests by encoding conditional independence for absent edges in the elicitation stage, restricting testing to the present edges in $G$ only. Here $G$ is treated as fixed and known, reflecting the researcher's belief that those edges are truly absent before data collection. Among the present edges, the sample size $n^{\star}$ provides an implicit guarantee through the monotonicity in \eqref{eq:monotonicity_guarantee}, such that all edges in $\mathcal{E}^{+}$ achieve at least the target power at $n^{\star}$. In the simulation study of Section \ref{sec:simulation_studies}, we evaluate the planning strategy under two evidence thresholds $\gamma = 3$ and $\gamma = 10$, corresponding to moderate and strong evidence in the classification of \citet{Jeffreys1961}. Whether a larger $\gamma$ is justifiable to control the family-wise error rate across edges is left for future work. Planning for absent edges can be accommodated within the encompassing prior approach \citep{Klugkist2005}, where the Bayes factor for inclusion reduces to the reciprocal of \eqref{eq:bf01} or \eqref{eq:bf_normalizing_constant}, and the BFDA pipeline developed here applies directly.
 
\begin{figure}[t]
    \centering
    \begin{subfigure}{0.48\textwidth}
        \centering
        \def\svgwidth{\linewidth}
        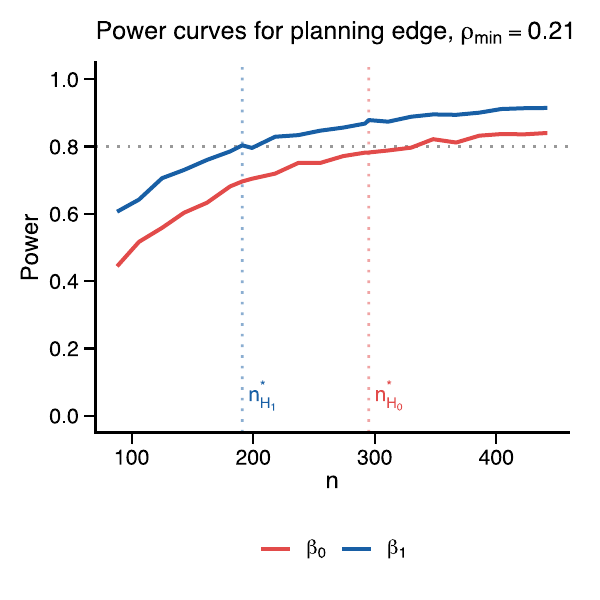
        \caption{}
        \label{subfig:bfda_power_curve}
    \end{subfigure}%
    \hfill
    \begin{subfigure}{0.48\textwidth}
        \centering
        \def\svgwidth{\linewidth}
        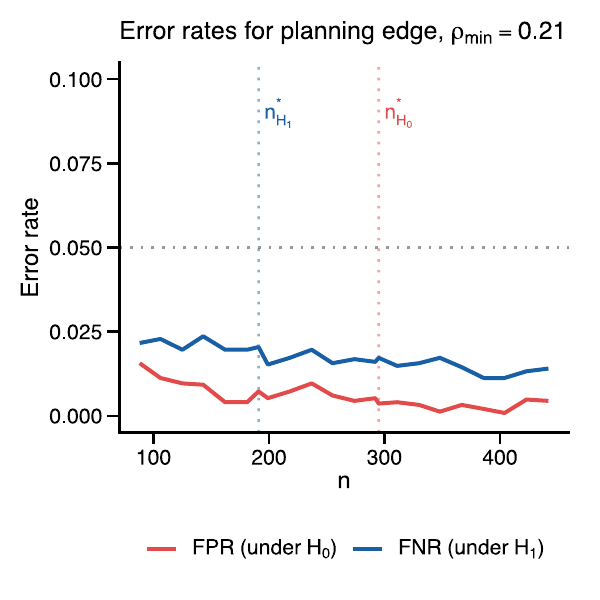
        \caption{}
        \label{subfig:bfda_error_curve}
    \end{subfigure}
    \caption{BFDA results for the planning edge $(i,j)$ with $\rho_{\min} = 0.210$, $p = 10$, and $\nu = 100$. \emph{Panel (a)}: Power curves $\beta_0$ (red) and $\beta_1$ (blue) for the planning edge as a function of $n$. Dotted lines mark the two planned sample sizes and the target power of $0.8$. \emph{Panel (b)}: FPR (red) and FNR (blue) for the planning edge as a function of $n$. Dotted lines mark the two planned sample sizes and the target error rate of $0.05$.}
    \label{fig:bfda_results}
\end{figure}

\begin{figure}[t]
    \centering
    \def\svgwidth{0.90\linewidth}
    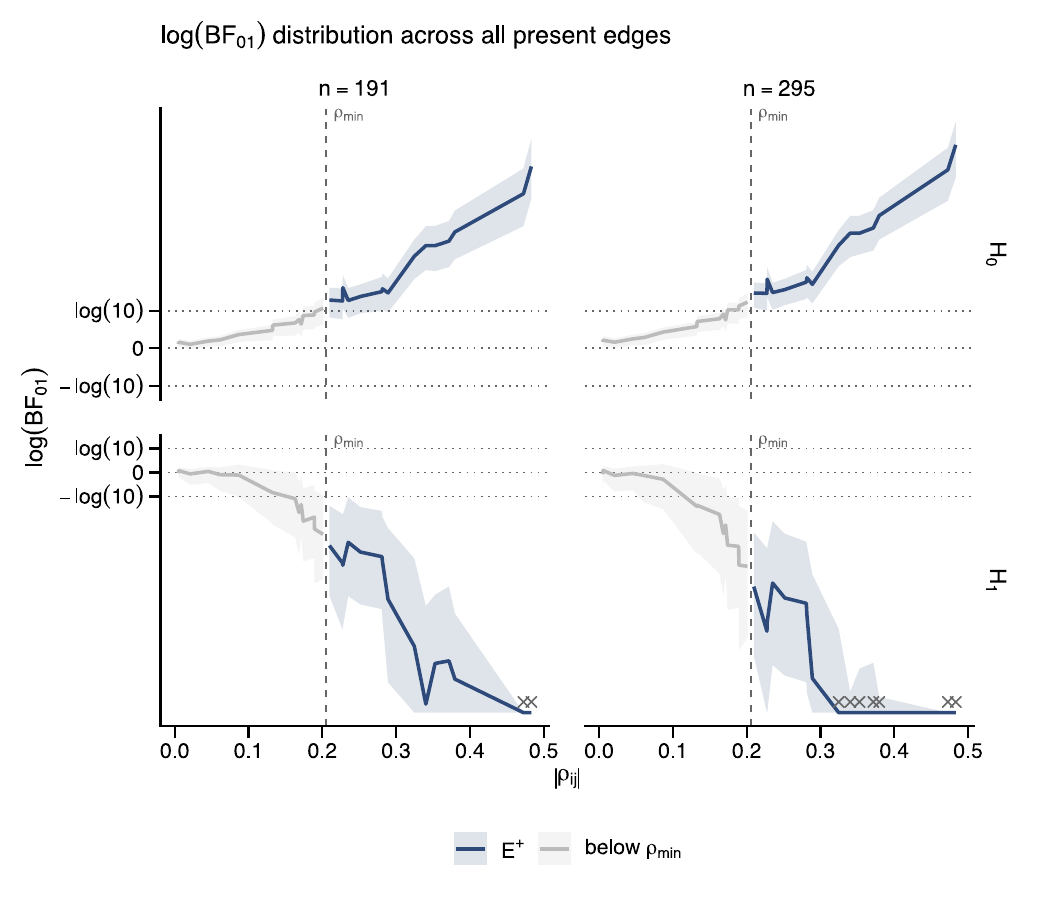
    \caption{$\log(BF_{01})$ distribution across all present edges at $n^{\star}_{\mathcal{H}_1} = 191$ (left column) and $n^{\star}_{\mathcal{H}_0} = 295$ (right column), under $\mathcal{H}_0$ (top row) and $\mathcal{H}_1$ (bottom row). Lines show the median $\log(BF_{01})$ and shaded ribbons the interquartile range across prior predictive draws. Dark blue: edges in $\mathcal{E}^+$; light gray: edges below $\rho_{\min}$. Dashed vertical line at $\rho_{\min} = 0.210$; dotted horizontal lines at $\pm\log(\gamma)$ with $\gamma = 10$ and at $0$.}
    \label{fig:bfda_bf_dist}
\end{figure}

To illustrate the BFDA strategy, we apply it to the toy example introduced earlier in this section. The empirical median of $|\rho_{ij}|$ across the 30 present edges is $\rho_{\min} = 0.21$ --- planning edge $(2,6)$ --- partitioning them into $|\mathcal{E}^{+}| = 15$ strongest edges ($|\rho_{ij}|\ge\rho_{\min}$) and 15 remainder edges. Since $\rho_{\min} = 0.21$ exceeds the threshold $\epsilon = 0.05$, we can proceed with the planning. Bisection on the planning edge yields $n^{\star}_{\mathcal{H}_{0}} = 295$ and $n^{\star}_{\mathcal{H}_{1}} = 191$, giving a recommended sample size of $n^{\star} = 295$. Figure \ref{fig:bfda_results} summarizes the results. Panel (a) shows the power curves $\beta_{0}$ and $\beta_{1}$ for the planning edge as a function of $n$, with vertical lines marking $n^{\star}_{\mathcal{H}_{0}}$ and $n^{\star}_{\mathcal{H}_{1}}$. Panel (b) shows the corresponding FPR and FNR error rates. At $n^{\star}_{\mathcal{H}_{0}} = 295$, all edges in $\mathcal{E}^{+}$ achieve the target power under $\mathcal{H}_1$ and 80\% achieve target power under $\mathcal{H}_0$, confirming that the monotonicity guarantee holds approximately across edges stronger than $\rho_{\min}$. At $n^{\star}_{\mathcal{H}_{1}} = 191$, the target power under $\mathcal{H}_1$ is achieved by $93\%$ of edges in $\mathcal{E}^{+}$, but only 47\% reach the target under $\mathcal{H}_0$, reflecting that $n^{\star}_{\mathcal{H}_{0}} > n^{\star}_{\mathcal{H}_{1}}$ in this example. Figure \ref{fig:bfda_bf_dist} illustrates the role of $\rho_{\min}$ as a signal threshold. Under $\mathcal{H}_0$, edges in $\mathcal{E}^+$ accumulate $\log(BF_{01})$ well above $\log(\gamma)$, with evidence increasing monotonically with $|\rho_{ij}|$ and strengthening from $n = 191$ to $n = 295$, while edges below $\rho_{\min}$ yield weak and inconclusive evidence regardless of sample size. This is a direct consequence of the BFDA: since $n^{\star}$ is optimized for $\rho_{\min}$, edges with $|\rho_{ij}| < \rho_{\min}$ are underpowered at the planned sample size by construction, and achieving target power for those edges would require a larger study. Under $\mathcal{H}_1$, the pattern reverses: $\log(BF_{01})$ falls well below $-\log(\gamma)$ for edges in $\mathcal{E}^+$, confirming strong evidence for conditional dependence, while edges below $\rho_{\min}$ again remain inconclusive. The $\times$ symbols indicate edges where the Monte Carlo estimate of the log Bayes factor collapsed numerically to values lower than $-10\log\left(10\right)\approx -23$. 

The DPIR and BFDA recommendations for the toy example --- $n^{\star}_{\text{global}} = 117$, $\max_{(i,j) \notin \text{diag}} n^{\star}(\theta_{ij}) = 148$, and $n^{\star} = 295$ --- illustrate how the two strategies target complementary aspects of the design problem and should be interpreted jointly. The DPIR recommends $n^{\star}_{\text{global}}$ as the minimum sample size for the data to dominate the prior on average across all precision parameters, and $\max_{(i,j) \notin \text{diag}} n^{\star}(\theta_{ij})$ as the more conservative guarantee that the data dominate the prior for every individual off-diagonal precision parameter; the researcher adopts either $n^{\star}_{\text{global}}$ or $\max_{(i,j) \notin \text{diag}} n^{\star}(\theta_{ij})$ as $n^{\star}_{\text{DPIR}}$, depending on whether average of per-parameter prior domination is required. The BFDA recommends $n^{\star} = \max(n^{\star}_{\mathcal{H}_0}, n^{\star}_{\mathcal{H}_1})$ as the sample size for edge-specific testing, calibrated to the weakest edge in the detectable set $\mathcal{E}^+$ and conservative for all stronger edges according to the monotonicity of $n^{\star}(\rho)$. A researcher concerned with prior domination uses the DPIR; one focused on edge detection uses the BFDA. When both guarantees are desired, the recommended sample size is $n^{\star} = \max(n^{\star}_{\text{DPIR}}, n^{\star}_{\text{BFDA}})$. In all cases, the prior ESS provides a useful reference: a planned $n^{\star}$ substantially smaller than the ESS suggests that the prior may dominate the posterior. In the toy example, $n^{\star} = 295$ exceeds both $\text{ESS}_{\text{VR}} = 125.5$ and $\text{ESS}_{\text{PR}} = 81.4$, confirming that the planned study carries sufficient information to dominate the prior and that the posterior will be driven primarily by the data rather than the prior elicitation. How the ratios $n^{\star}_{\text{DPIR}}/\text{ESS}$ and $n^{\star}_{\text{BFDA}}/\text{ESS}$ vary across prior specifications and network sizes is examined systematically in the simulation study of Section \ref{sec:simulation_studies}.

\section{Simulation studies}
\label{sec:simulation_studies}

Two simulation studies are conducted to evaluate the ESS methods and their use for sample size planning in GGMs. The first simulation study
examines how $ESS_{\text{VR}}$ and $ESS_{\text{PR}}$ behave under the Wishart and \textit{G}-Wishart priors, with particular attention to the density-dependent behavior and the Jensen gap between the two measures. The second study evaluates the DPIR and BFDA planning procedures under correctly specified and misspecified priors, assessing the planned sample sizes in relation to the prior ESS and their sensitivity to misspecified planning edges, across different prior study sizes, graph topologies and network sizes.

A common feature of both studies is the procedure used to generate prior precision matrices that reflect realistic elicitation scenarios. Rather than drawing directly from a Wishart or \textit{G}-Wishart distribution with arbitrarily chosen parameters, we simulate precision matrices representative of what a researcher might elicit from a prior study of size $\nu$. For each combination of $\nu$ and $p$, a well-conditioned precision matrix $\bm{\Psi}_0$ is first constructed via a rescaled orthonormal basis, where marginal variances $\sigma_i$ are drawn independently from $U(0.1,5)$. This range reflects the heterogeneity in variable scales typical of a prior study on the original measurement scale, inducing partial correlations of varying magnitude without imposing a specific distributional assumption on the dependence structure. A dataset of $\nu$ observations is then drawn from a multivariate normal with covariance $\bm{\Psi}_0^{-1}$, and the sample-based precision matrix $\bm{\Psi}$ estimated from these observations represents the prior precision, with entries on a scale proportional to a study of size $\nu$. In the structured case, where the precision matrix must respect the zero pattern of a graph $G$, an additional projection step \citep{Hastie2009} is applied before and after data generation to enforce the graph structure. The generative procedure shared across both simulation studies is formalized in 
Algorithm \ref{algorithm:random_precision}.

\subsection{Simulation Study 1: The Prior ESS}
\label{subsec:simulation_study_prior_ess}

The first simulation study investigates the behavior of the prior ESS across a range of conditions. The design includes the two prior distributions (Wishart, \textit{G}-Wishart), two ESS methods (VR, PR), four prior sample sizes ($\nu \in \{25, 50, 100, 200\}$), and four network sizes ($p \in \{10, 20, 30, 40\}$). For the \textit{G}-Wishart case, three graph generation models are considered: the Erdős-Rényi model \citep[random graph model;][]{Renyi1959}, the Barabási-Albert model \citep[scale-free model;][]{Albert2002}, and the Watts-Strogatz model  \citep[small-world graph model;][]{Watts1998}. For each condition, we consider $R = 1,000$ replications.

In each replication, a prior precision matrix $\bm{\Psi}$ is generated using Algorithm \ref{algorithm:random_precision}. In the Wishart case, $G$ is set to a complete graph. In the \textit{G}-Wishart case, the adjacency matrix $G$ is generated from one of the three graph models. The random graph model generates edges independently with a common probability drawn from $\text{Uniform}(0,1)$. The scale-free model draws the number of edges attached per node from $\text{Uniform}\{1, \lfloor p/2 \rfloor\}$ and generates a scale-free graph via preferential attachment, producing hub-dominated networks with a power-law degree distribution. The small-world graph model draws the number of node-wise neighbors from $\text{Uniform}\{1, \lfloor(p-1)/2\rfloor\}$ and iteratively increases the rewiring probability from $0.2$ until the small-worldness index $S^{\Delta} > 1$ \citep{Humphries2008}. The Supplementary Material reports descriptive summaries of the simulated precision matrices, including partial correlations, marginal variances, and log condition number $\log\kappa(\bm{K})$, confirming that all matrices are well-conditioned ($\log\kappa(\bm{K}) < 8$) under all conditions.

For each generated $\bm{\Psi}$, both the global ESS and the parameterwise ESS are computed using the VR and PR methods, with the prior hyperparameters elicited as $\bm{\Psi}^{*} = \bm{\Psi} / \nu$ for the Wishart and $\bm{\Psi}^{*}_G = \bm{\Psi}_G / \nu$ for the \textit{G}-Wishart, following the elicitation procedure described in Appendices \ref{subsection:wishart_prior} and \ref{subsection:gwishart_prior}. The network density, defined as the proportion of present edges over the total number of possible edges $\binom{p}{2}$, is recorded for each replication. As shown in Appendix \ref{appendix:prior_ess_formulas}, the prior ESS under the MTM, PT, and ELIR methods is a linear function of $\nu$ and $p$, and therefore does not vary with network density, precision matrix, or graph structure. These methods are therefore omitted from the analysis, which focuses on the density- and structure-dependent behavior of $ESS_{\text{VR}}$ and $ESS_{\text{PR}}$.

\begin{table}[t]
\centering
\begin{tabular}[t]{lccccc}
\toprule
Method & $\nu$ & $p = 10$ & $p = 20$ & $p = 30$ & $p = 40$\\
\midrule
\addlinespace[0.3em]
\multicolumn{6}{l}{$ESS_{\text{VR}} / \nu$}\\
\hspace{1em} & 25 & 1.046 & 1.043 & - & -\\
\hspace{1em} & 50 & 1.023 & 1.022 & 1.021 & 1.021\\
\hspace{1em} & 100 & 1.012 & 1.011 & 1.011 & 1.010\\
\hspace{1em} & 200 & 1.006 & 1.005 & 1.005 & 1.005\\
\addlinespace[0.3em]
\multicolumn{6}{l}{$ESS_{\text{PR}} / \text{factor}$}\\
\hspace{1em} & 25 & 0.920 & 0.744 & - & -\\
\hspace{1em} & 50 & 0.970 & 0.963 & 0.945 & 0.887\\
\hspace{1em} & 100 & 0.987 & 0.986 & 0.985 & 0.982\\
\hspace{1em} & 200 & 0.994 & 0.994 & 0.994 & 0.993\\
\bottomrule
\end{tabular}
\caption{\label{tab:appendix_ess_ratios} Mean of $\text{ESS}_{VR}/\nu$ and $\text{ESS}_{PR}/\text{factor}$ across replications under the Wishart prior. Dashes indicate conditions where $\nu \leq p + 3$.}
\end{table}

\paragraph{Wishart prior.} Table \ref{tab:appendix_ess_ratios} reports the mean of $ESS_{\text{VR}}/\nu$ and $ESS_{\text{PR}}/\text{factor}$, with $\text{factor} = (\nu - p)(\nu - p - 1)(\nu - p - 3)/\nu(\nu - p - 2)$, across replications under the Wishart prior. For VR, the ratio $ESS_{\text{VR}}/\nu$ converges monotonically toward 1 from above as $\nu$ increases, showing no sensitivity to $p$ and with little variability across replications, confirming the asymptotic result $ESS_{\text{VR}} \sim \nu$ derived in Appendix \ref{appendix:prior_ess_formulas}. For PR, the ratio $ESS_{\text{PR}}/\text{factor}$ is always strictly below 1 and also converges toward 1 as $\nu$ increases, but the convergence is slower and more sensitive to $p$: at small $\nu$ relative to $p$ the ratio drops substantially, reaching $0.744$ for $\nu = 25$ and $p = 20$, confirming that the factor characterizes the joint dependence of $ESS_{\text{PR}}$ on $\nu$ and $p$.

\begin{figure}[t]
    \centering
    \includegraphics[width=0.9\linewidth]{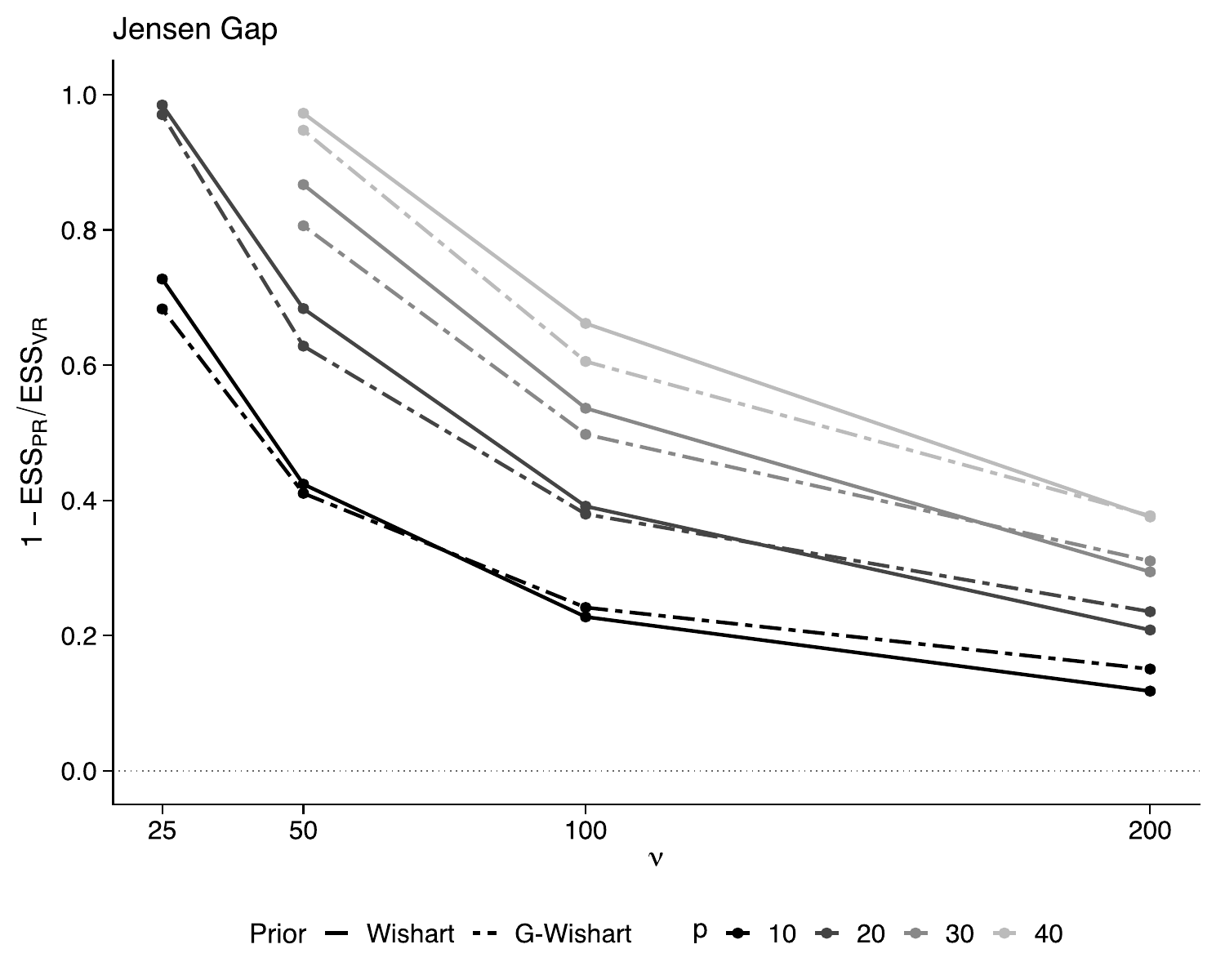}
    \caption{Jensen gap $J = 1 - ESS_{\text{PR}}/ESS_{\text{VR}}$ as a function of $\nu$, for the Wishart (solid) and \textit{G}-Wishart (dashed) priors, colored by $p$. The gap decreases with $\nu$ and increases with $p$ for both priors.}
    \label{fig:jensen_gap}
\end{figure}

\begin{figure}[t]
    \centering
    \includegraphics[width=0.9\linewidth]{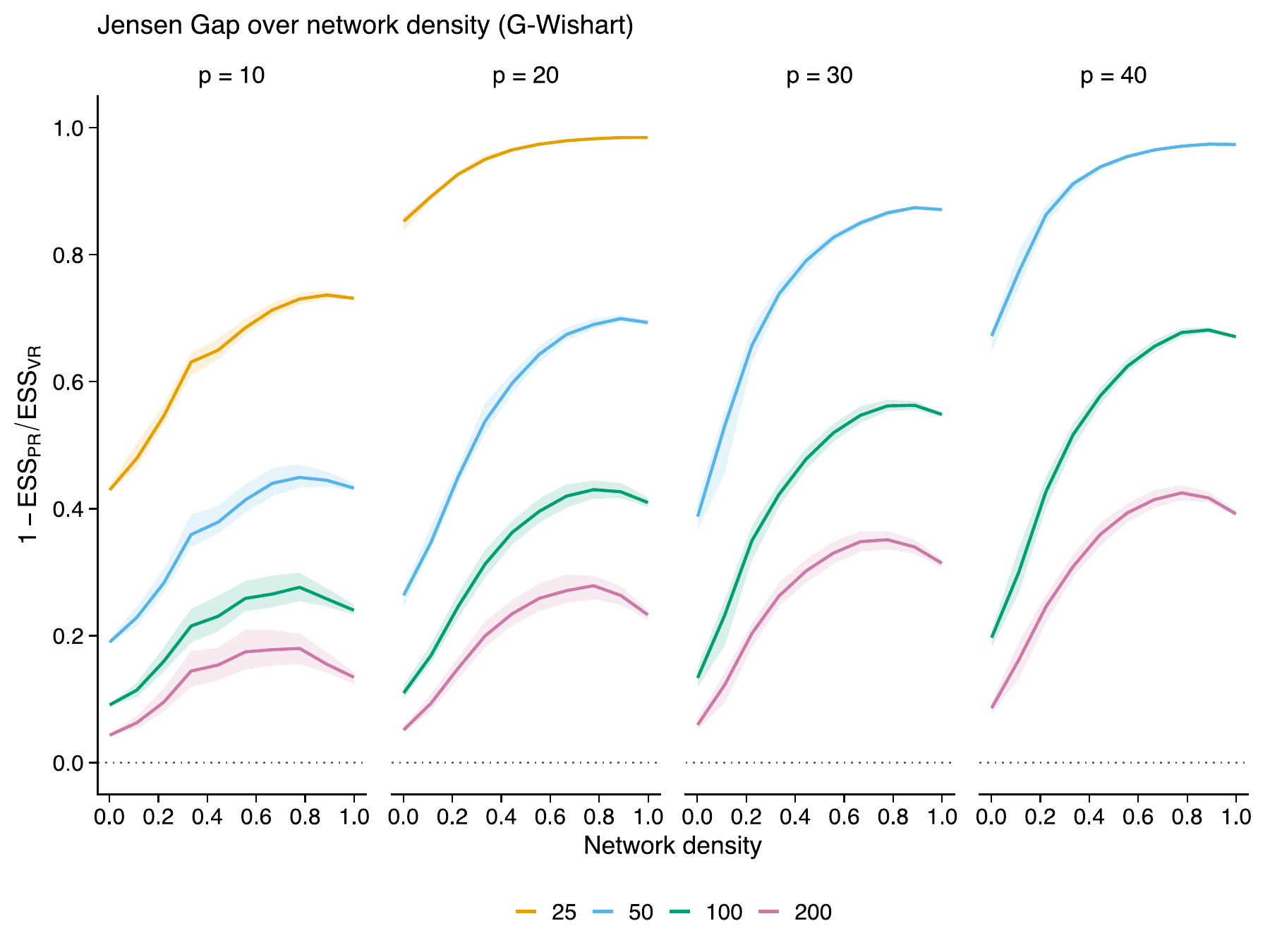}
    \caption{Jensen gap $J = 1 - ESS_{\text{PR}}/ESS_{\text{VR}}$ plotted against network density across network sizes $p$ (columns) and prior study sizes $\nu$ (color); the dotted line marks the Jensen lower bound at $0$.}
    \label{fig:jensen_gap_density}
\end{figure}

\paragraph{Jensen gap.} Figure \ref{fig:jensen_gap} shows the Jensen gap $J = 1 - ESS_{\text{PR}}/ESS_{\text{VR}}$ as a function of $\nu$, for the Wishart (solid) and \textit{G}-Wishart (dashed) priors, colored by $p$. The gap increases with $p$ for fixed $\nu$, reflecting that larger networks amplify the parameter dependencies in $\bm{\Theta}$. By contrast, the gap decreases monotonically with $\nu$ for both priors and all $p$, confirming that a larger prior study size reduces the variability of $\mathcal{I}(\mathbf{x}_1;\bm{\Theta})$ under the prior, though the two priors order differently: the \textit{G}-Wishart gap is smaller than the Wishart one for small $\nu$, but the relationship reverses at larger $\nu$, with the crossover occurring earlier for smaller $p$. Even at $\nu = 200$, however, the gap remains substantial for large $p$, indicating that the VR and PR measures can differ considerably under realistic elicitation conditions. For the \textit{G}-Wishart prior, Figure \ref{fig:jensen_gap_density} shows the gap against network density: at fixed $p$ and $\nu$, $J$ rises as the graph fills in, but the increase is non-monotone, peaking between intermediate and high density before decreasing at the highest densities. Since $J$ combines the two effective sample sizes, $ESS_\text{VR}$ which approaches $\nu$ from above and $ESS_\text{PR}$ which approaches $\nu$ from below, this non-monotonicity originates in how the two measures respond to connectivity. To investigate this, we next examine $ESS_{\text{VR}}/\nu$ and $ESS_{\text{PR}}/\nu$ separately as functions of network density.

\begin{figure}[t]
    \centering
    \includegraphics[width=0.9\linewidth]{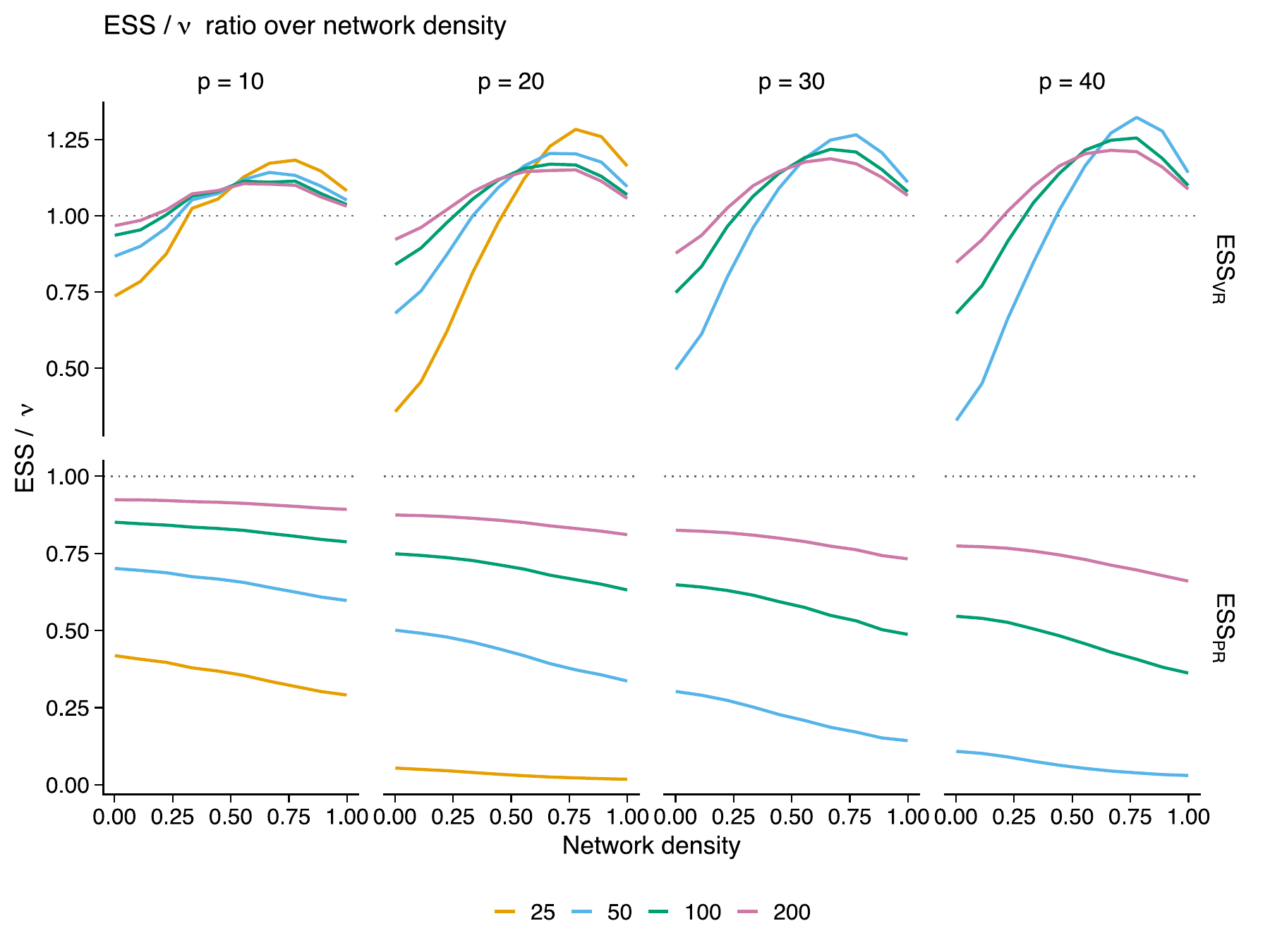}
    \caption{$\text{ESS}/\nu$ ratio plotted against network density across network sizes $p$ (columns) and prior study sizes $\nu$ (color), for the $ESS_{\text{VR}}$ (top) and $ESS_{\text{PR}}$ (bottom) estimators. The dotted horizontal line marks $\text{ESS}/\nu = 1$, where the prior contributes as many effective observations as the prior study size.}
    \label{fig:ESS_to_nu_ratio_vs_density}
\end{figure}

\paragraph{G-Wishart prior: density-dependent behavior.} Figure \ref{fig:ESS_to_nu_ratio_vs_density} shows $ESS_{\text{VR}}/\nu$ and $ESS_{\text{PR}}/\nu$ as a function of network density, across $p$ (columns) and $\nu$ (color). The two methods exhibit a contrasting behavior. $ESS_{\text{PR}}/\nu$ decreases monotonically with density for all $(p, \nu)$ combinations, reflecting that denser networks have more parameter dependencies, which reduce the effective number of prior observations relative to $\nu$. The decrease is more pronounced at small $\nu$ relative to $p$, with $ESS_{\text{PR}}/\nu$ approaching zero for $\nu = 25$ and large $p$. $ESS_{\text{VR}}/\nu$ exhibits a  different non-monotone pattern: it rises above 1 for sparse networks, peaks around density $0.5 - 0.8$, and then decreases for denser networks. The crossover point where $ESS_{\text{VR}}/\nu$ first exceeds 1 shifts toward sparser networks as $\nu$ increases, so that for large $\nu$ the prior informativeness amplifies even at low density. This density-dependent behavior is not influenced by graph topology:  random, scale-free, and small-world graphs produce indistinguishable patterns at the same density level. In the following paragraph we examine the drivers of the relationship between $ESS/\nu$ and density. Results stratified by graph structure and the parameterwise counterparts are available in the Supplementary Material. 

\begin{figure}[t]
    \centering
    \includegraphics[width=0.9\linewidth]{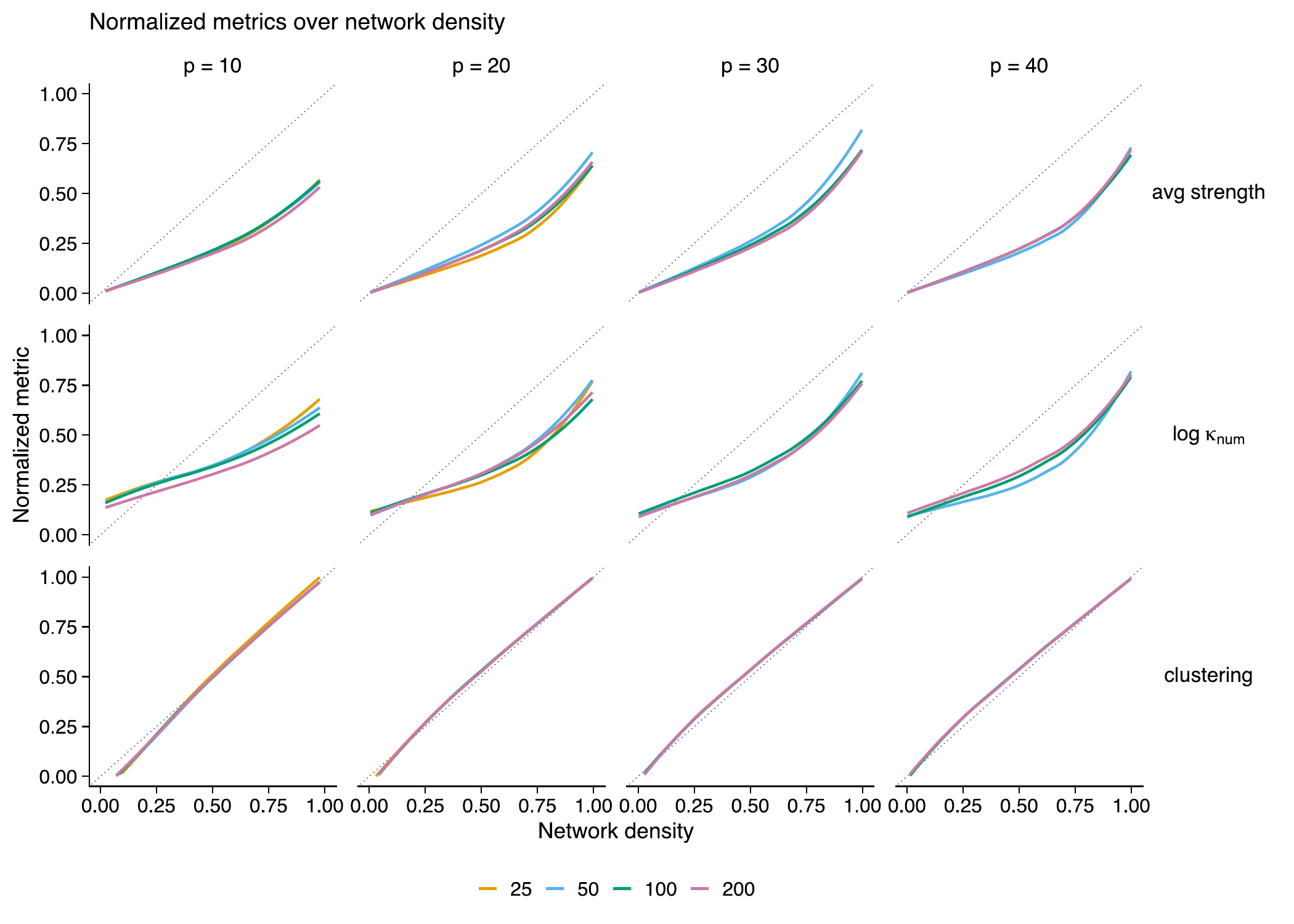}
    \caption{Network metrics normalized to $[0,1]$ within each $(p, \nu)$ condition, plotted against network density across network sizes $p$ (columns) and prior study sizes $\nu$ (color). The dotted bisector marks perfect linear tracking of density.}
    \label{fig:metric_vs_density}
\end{figure}

\paragraph{Drivers of the density-dependent behavior.} To investigate the drivers of $ESS_{\text{VR}}/\nu$ and $ESS_{\text{PR}}/\nu$, we consider three network metrics. The \emph{average strength}, $\text{avg strength} = p^{-1}\sum_{i=1}^p \sum_{j \in \mathcal{N}(i)} |\rho_{ij}|$, combines degree and partial correlation magnitude and is geometry-dependent. The \emph{log condition number} $\log\kappa_{\text{num}}$ measures the ill-conditioning of the numerator matrix of each ESS measure: $\mathbb{E}_{\bm{\Theta}}[\mathcal{I}^{-1}(\mathbf{x}_1;\bm{\Theta})]$ for $ESS_{\text{VR}}$ and $\mathbb{V}^{-1}(\bm{\Theta})$ for $ESS_{\text{PR}}$, and is also geometry-dependent. The log condition numbers of the denominators and the log condition number of the numerator matrix of $ESS_{\text{PR}}$, are omitted as they show an identical relationship with density; only $\log\kappa_{\text{num}}$ for $ESS_{\text{VR}}$ is shown. The \emph{clustering coefficient}, defined as the ratio of closed triangles to connected triples in $G$, is topology-dependent and invariant to $\bm{\Psi}$. Figure \ref{fig:metric_vs_density} shows three network metrics: average strength, $\log\kappa_{\text{num}}$, and clustering coefficient, that are normalized to $[0,1]$ within each $(p, \nu)$ condition and plotted against network density. Clustering tracks density almost perfectly across all $p$ and $\nu$ values, lying close to the bisector (dotted line) regardless of prior study size, confirming that it is essentially a deterministic function of density and a topological driver of $ESS/\nu$. The geometry-dependent metrics deviate substantially from the bisector, yet their relationship with $ESS/\nu$ mirrors that of density, including the non-monotone bump of $ESS_{\text{VR}}/\nu$ at intermediate densities (see the Supplementary Material), indicating that partial correlation strength and matrix conditioning contribute information beyond graph topology alone. We therefore recommend reporting both measures: $ESS_{\text{PR}}$ as a structural baseline robust to uncertainty in the elicited $\bm{\Psi}$, and $ESS_{\text{VR}}$ as a geometry-sensitive indicator. The gap between the two measures quantifies the degree to which the elicited $\bm{\Psi}$ amplifies prior informativeness beyond what graph topology alone implies.

\subsection{Simulation Study 2: Sample Size Planning}
\label{subsec:simulation_study_sample_size_planning}

In this simulation study, we evaluate the two complementary planning strategies described in Section \ref{sec:sample_size_planning} across two designs. The first validates the planning strategy across a range of prior study sizes, network dimensions, and graph structures, examining how the planned sample size behaves under a correctly specified prior. The second examines the sensitivity of the planned sample size to misspecification of the prior graph structure. Both designs use the same prior precision matrices generated in the simulation study of Section \ref{subsec:simulation_study_prior_ess}, so that the planning recommendations can be directly related to the corresponding prior ESS measures. For each condition, we use $R = 500$ replications, corresponding to the first five of the ten batches of $100$ replications. This reduction is motivated by computational cost, as the planning strategies require iterative computation across a grid of candidate sample sizes.

\begin{figure}[t]
    \centering
    \includegraphics[width=0.9\linewidth]{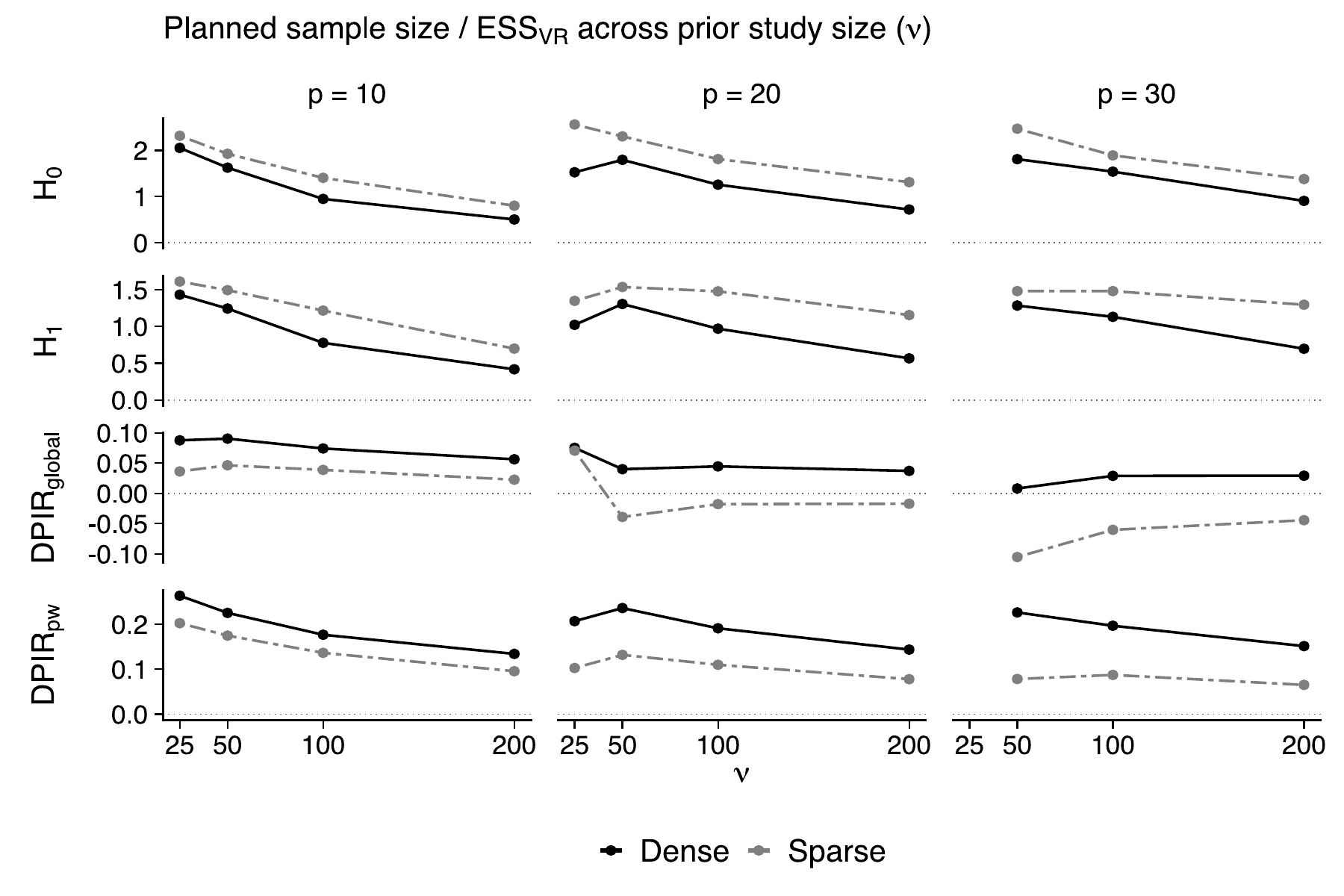}
    \caption{$\log_{10}(n^{\star}/ESS_\text{VR})$ plotted against 
    prior study size $\nu$ across network sizes $p$ (columns) and sample size 
    planning methods (rows), for dense (solid black) and sparse (dashed gray) priors. 
    The dotted horizontal line marks $\log_{10} = 0$, where $n^{\star} = ESS_\text{VR}$.}
    \label{fig:nstar_ess_vr_log_ratio_thr10}
\end{figure}

\paragraph{Validation of the Planning Strategy.} The planning strategy is validated across all graph structures (complete, random, small-world, and scale-free), crossing $p \in \{10, 20, 30\}$ and $\nu \in \{25, 50, 100, 200\}$, yielding 12 conditions per structure. The case $p = 40$ is excluded given its substantially higher computational cost compared to the smaller network sizes. For each replicate, both the DPIR and BFDA are applied following Section \ref{sec:sample_size_planning}, with the BFDA evaluated under two evidence thresholds $\gamma \in \{3, 10\}$. If $\rho_{\min} < \epsilon = 0.05$ the replicate is flagged as unplannable. All BFDA results reported in the main text use an evidence threshold of $\gamma = 10$. Results for $\gamma = 3$ are qualitatively similar and are provided in the Supplementary Material.

Figure \ref{fig:nstar_ess_vr_log_ratio_thr10} plots $\log_{10}(n^{\star}/\text{ESS}_{\text{VR}})$ against $\nu$ for each planning method and network size $p$, providing a direct comparison between planned sample sizes and prior informativeness. Two patterns hold for both planning methods: the ratio decreases monotonically with $\nu$, reflecting that more informative priors yield smaller planned sample sizes relative to $\text{ESS}_{\text{VR}}$, and the dense and sparse conditions remain clearly separated throughout. However, the direction of this separation reverses between methods: for the BFDA, sparse priors require larger samples relative to their $\text{ESS}_{\text{VR}}$ than dense priors, whereas for both DPIR criteria the ordering is inverted. The two methods also differ by orders of magnitude, highlighting a fundamental difference in how they relate to prior informativeness. For the BFDA, planned sample sizes exceed $\text{ESS}_{\text{VR}}$ by factors ranging from $3$--$24$ at $\nu = 200$ to $11$--$364$ at $\nu = 25$, with $\log_{10}$ ratios remaining well above zero even at large $\nu$, driven by the requirements of the Bayes factor threshold. For the DPIR methods, ratios are of a much smaller magnitude and are not uniformly positive: the planned sample size $n^{\star}_{\text{global}}$ exceeds
$\text{ESS}_{\text{VR}}$ by a factor of $1.02$--$1.23$ for dense priors, whereas for sparse priors it ranges from $0.78$ to $1.18$, falling below $\text{ESS}_{\text{VR}}$ at $p = 20$ and $p = 30$. The parameterwise $\max_{(i,j) \notin \text{diag}} n^{\star}(\theta_{ij})$ is consistently more demanding, exceeding $\text{ESS}_{\text{VR}}$ by a factor of $1.36$--$1.84$ for dense priors and $1.16$--$1.36$ for sparse priors.

\begin{table}[t]
\centering
\small
\begin{tabular}{llcccccccccccc}
\toprule
 & & \multicolumn{4}{c}{$p = 10$} & \multicolumn{4}{c}{$p = 20$} & \multicolumn{4}{c}{$p = 30$} \\
\cmidrule(lr){3-6} \cmidrule(lr){7-10} \cmidrule(lr){11-14}
 & $\nu$ & $25$ & $50$ & $100$ & $200$ & $25$ & $50$ & $100$ & $200$ & $25$ & $50$ & $100$ & $200$ \\
\midrule
\multirow{2}{*}{Dense} & $\mathcal{H}_0$ & 0.00 & 0.00 & 0.08 & 0.44 & 0.00 & 0.00 & 0.00 & 0.28 & -- & 0.00 & 0.00 & 0.16 \\
 & $\mathcal{H}_1$ & 0.23 & 0.32 & 0.46 & 0.61 & 0.29 & 0.24 & 0.37 & 0.53 & -- & 0.17 & 0.28 & 0.46 \\
\midrule
\multirow{2}{*}{Sparse} & $\mathcal{H}_0$ & 0.00 & 0.00 & 0.00 & 0.13 & 0.40 & 0.17 & 0.06 & 0.05 & -- & 0.36 & 0.24 & 0.19 \\
 & $\mathcal{H}_1$ & 0.21 & 0.24 & 0.32 & 0.47 & 0.50 & 0.28 & 0.25 & 0.32 & -- & 0.42 & 0.33 & 0.32 \\
\bottomrule
\end{tabular}
\caption{\label{tab:power_dpir_global_thr10} Median BFDA power under $\mathcal{H}_0$ and $\mathcal{H}_1$ evaluated at the global DPIR recommended sample size $n^{\star}_{\text{global}}$, across prior study sizes $\nu$ (columns, nested within $p$). Bayes factor threshold $= 10$.}
\end{table}

Table \ref{tab:power_dpir_global_thr10} cross-evaluates the DPIR planning strategy by reporting median BFDA power under $\mathcal{H}_0$ and $\mathcal{H}_1$ at the DPIR-recommended sample size $n^{\star}_{\text{global}}$. Under $\mathcal{H}_1$, power is low throughout, indicating that $n^{\star}_{\text{global}}$ is generally insufficient to correctly include truly present edges. For dense priors, power increases with $\nu$ across all network sizes (e.g., from $0.23$ to $0.61$ at $p = 10$); for sparse priors, power also increases with $\nu$ at $p = 10$ (from $0.21$ to $0.47$) but declines at $p = 20$ (from $0.50$ to $0.32$) and $p = 30$ (from $0.42$ to $0.32$). Under $\mathcal{H}_0$, dense priors yield power near zero for all prior sample sizes except at $\nu = 200$, where it reaches $0.44$, $0.28$, and $0.16$ for $p = 10$, $20$, and $30$ respectively; sparse priors show the opposite trend at larger networks, with power already elevated at small $\nu$ ($0.40$ at $\nu = 25$ for $p = 20$, $0.36$ at $\nu = 50$ for $p = 30$) and declining monotonically as $\nu$ grows. Together, these results confirm that the DPIR and BFDA target fundamentally different inferential objectives and should be treated as complementary rather than interchangeable planning tools. Corresponding tables for the pairwise DPIR variant $\max_{(i,j) \notin \text{diag}} n^{\star}(\theta_{ij})$ are provided in the Supplementary Material.

\paragraph{Sensitivity to Prior Misspecification.} The sensitivity of the planning strategy to misspecification of the prior graph structure is examined for $p \in \{10, 20\}$ under the random graph structure (excluding $p \in \{30, 40\}$ for computational reasons). The motivation is that prior studies are often underpowered, so edges with small partial correlations may be incorrectly included or excluded in the elicited prior. Uncertain edges are defined as structural zeros and weak present edges whose frequentist warm-start sample size exceeds $\nu$, reflecting edges whose presence or absence in the prior graph is ambiguous. For each replicate, $S = 20$ misspecification scenarios are generated by independently flipping each uncertain edge with probability $0.5$, producing a misspecified graph $G_{\text{mis}}$, constrained to be neither empty nor complete. A new precision matrix $K_{\text{mis}}$ is then drawn from the G-Wishart distribution with graph $G_{\text{mis}}$, scale $K/\nu$ and $\nu$ degrees of freedom, and the full planning strategy is re-applied under the misspecified prior. The sensitivity of the planning recommendations is then assessed by comparing the planned sample size under the misspecified prior to the reference, examining how the probability of inflated planned sample size varies over the induced shift in the planning edge partial correlation.

\begin{figure}[t]
    \centering
    \includegraphics[width=0.9\linewidth]{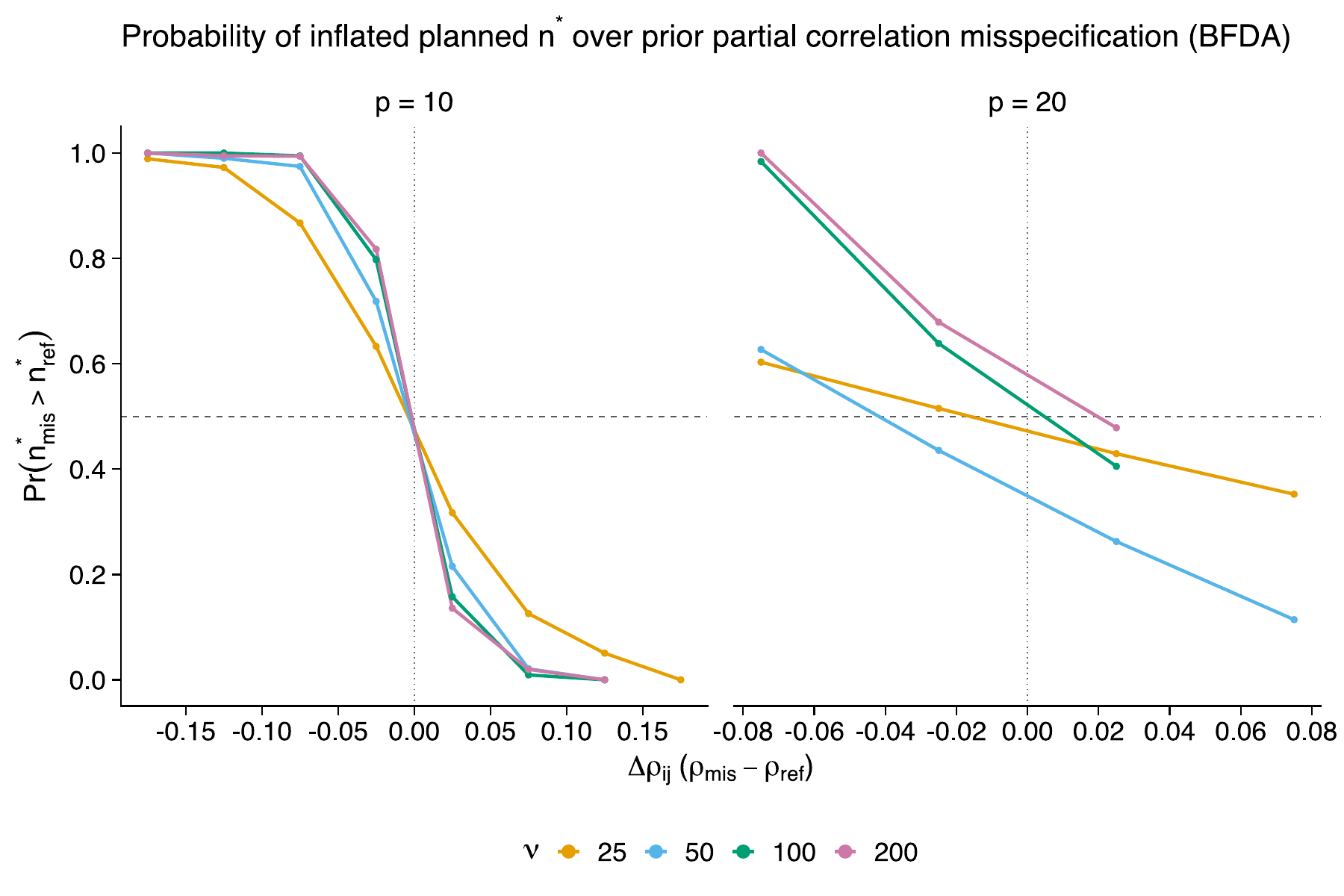}
    \caption{$\Pr(n^{\star}_{\text{mis}} > n^{\star}_{\text{ref}})$ for the BFDA planning strategy as a 
    function of the planning edge shift $\Delta\rho_{ij} = \rho_{\text{mis}} - 
    \rho_{\text{ref}}$, across prior study sizes $\nu$ (color), network sizes 
    $p$ (columns). The dashed line marks $\Pr = 0.5$ 
    and the dotted line marks $\Delta\rho = 0$. Bins of width $0.05$; outliers 
    removed per condition using the $1.5 \times \text{IQR}$ rule.}
    \label{fig:misspec_exceed_bfda}
\end{figure} 

Figure \ref{fig:misspec_exceed_bfda} plots $\Pr(n^{\star}_{\text{mis}} > n^{\star}_{\text{ref}})$ as a function of the planning edge shift $\Delta\rho_{ij}$, where $n^{\star}$ denotes the BFDA-planned sample size defined as $\max(n^{\star}_{\mathcal{H}_0}, n^{\star}_{\mathcal{H}_1})$ under both the misspecified and reference prior. Since larger partial correlations require smaller sample sizes, the direction of the effect is expected: underestimation ($\Delta\rho_{ij} < 0$) inflates the planned sample size relative to the reference, while overestimation ($\Delta\rho_{ij} > 0$) deflates it. What the figure shows is the velocity of this transition. For $p = 10$, $\Pr(n^{\star}_{\text{mis}} > n^{\star}_{\text{ref}})$ transitions sharply from $1$ to $0$ around $\Delta\rho_{ij} = 0$, with even small deviations from the reference edge strength being sufficient to inflate or deflate the planned sample size. For $p = 20$, the transition is more gradual and the observed range of $\Delta\rho_{ij}$ is narrower, reflecting the smaller magnitude of uncertain edge shifts in larger networks; most lines also lie above $0.5$ in the bins closest to zero, suggesting an upward bias in the planned sample size under misspecification. The effect of $\nu$ is more evident at $p = 20$, where larger prior study sizes tend to produce higher inflation probabilities across the observed range of $\Delta\rho_{ij}$. 

\begin{figure}[t]
    \centering
    \includegraphics[width=0.9\linewidth]{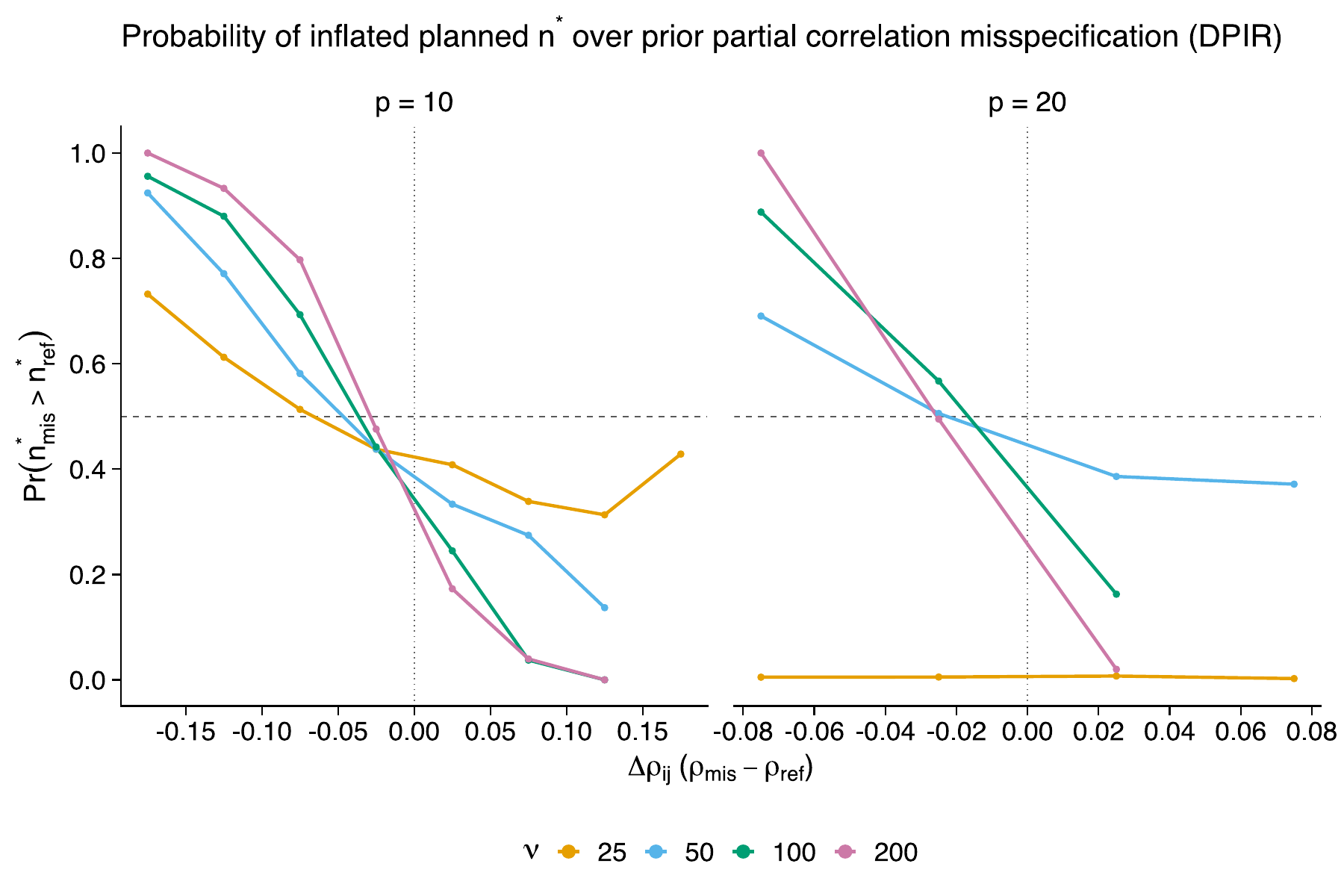}
    \caption{$\Pr(n^{\star}_{\text{mis}} > n^{\star}_{\text{ref}})$ for the DPIR planning strategy, where $n^{\star} = \max(n^{\star}_{\text{global}}, \max_{(i,j) \notin \text{diag}} n^{\star}(\theta_{ij}))$, as a function of the planning edge shift $\Delta\rho_{ij} = \rho_{\text{mis}} - \rho_{\text{ref}}$, across prior study sizes $\nu$ (color) and network sizes $p$ (columns). The dashed line marks $\Pr = 0.5$ and the dotted line marks $\Delta\rho = 0$. Bins of width $0.05$; outliers removed per condition using the $1.5 \times \text{IQR}$ rule.}
    \label{fig:misspec_exceed_dpir}
\end{figure} 

Figure \ref{fig:misspec_exceed_dpir} shows the corresponding analysis for the DPIR planning strategy, where $n^{\star}$ is defined as $\max(n^{\star}_{\text{global}}, \max_{(i,j) \notin \text{diag}} n^{\star}(\theta_{ij}))$. As for the BFDA, the probability of inflation decreases with $\Delta\rho_{ij}$ and approaches $1$ at large negative perturbations, but the sensitivity to misspecification depends strongly on the prior study size. For $\nu \geq 50$ the transition around $\Delta\rho_{ij} = 0$ is steep, and at $\nu = 200$ it is close to a step function, with the probability moving from $1$ to near $0$ over the full range of $\Delta\rho_{ij}$. For $\nu = 25$, by contrast, the response is less sharp: the probability remains between $\approx0.3$ and $\approx 0.7$ across the full range at $p = 10$ and stays close to $0$ throughout at $p = 20$, indicating that with a weakly informative prior the planned sample size via DPIR is largely unaffected by individual edge perturbations. 

Altogether, misspecification shifts the planned sample size predictably rather than erratically: underestimated edges tend to produce conservative over-planning, while the risk of under-planning increases with overstated edges.

\section{Discussion}
\label{sec:discussion}

The aim of this paper was to express the informativeness of the Wishart and \textit{G}-Wishart priors for a Gaussian graphical model in observation-equivalent units that can support sample size planning. To this end, we formalized a pre-data effective sample size (ESS) for the multiparametric GGM setting, summarized globally via a determinant ratio and parameterwise via a Cholesky decomposition. We adapted five ESS estimators (VR, PR, MTM, PT, and ELIR) to the precision matrix under both priors. To our knowledge, this is the first time the informativeness of a matrix-variate prior has been made interpretable on the scale of observations, extending a line of work developed almost entirely for univariate conjugate priors \citep{Clarke1996, Morita2008, Neuenschwander2020}.

The five estimators separate into two families whose contrast is the central finding. MTM, PT, and ELIR reduce to simple functions of the degrees of freedom $\nu$ and the dimension $p$ and are invariant to both the scale matrix $\bm{\Psi}$ and the graph $G$, so they cannot reflect how the elicited dependence structure shapes prior information. VR and PR, instead, depend on the full structure of $\bm{\Psi}$ and approach $\nu$ from above and below ($\mathrm{ESS}_{\mathrm{VR}} > \nu > \mathrm{ESS}_{\mathrm{PR}}$). We formalized this discrepancy as a Jensen gap, finding that it is substantial under realistic elicitation and driven by network density rather than graph structure family. We therefore recommend reporting PR as a conservative structural baseline alongside VR as a geometry-sensitive indicator, with the gap between them quantifying how far the elicited precision structure amplifies prior information beyond what topology alone implies. Building on these measures, we introduced two complementary planning procedures: the Data-to-Prior Information Ratio (DPIR), which identifies the sample size at which the data carry more information than the prior and only moderately exceeds the prior ESS, and our GGM extension of Bayes Factor Design Analysis (BFDA), which targets the sample size needed for conclusive evidence about a given edge and exceeds the prior ESS by one to two orders of magnitude. The gap between them is itself informative, since at the sample size, where the data begin to dominate the prior, evidence for individual edges is typically still weak. Given that edge selection and structure learning depend on adequate sample sizes to dominate an informative prior, we recommend reporting the prior ESS together with both planning targets and treating a planned sample below the prior ESS as a signal that the posterior will be shaped more by the elicitation than by the data.

Several limitations should be acknowledged. First, throughout the planning stage we treated the network structure $G$ as fixed and known, reflecting the researcher's prior belief about which edges are absent. Sample size planning that accounts for uncertainty in the structure itself is left open. Second, although the recommended sample size achieves the target power for most edges at or above the strength of the planning edge, we did not formally control the error rate across the many edges tested simultaneously: for the BFDA we relied on fixed evidence thresholds, and whether these should be raised to control the family-wise error rate is left for future work. Third, under the \textit{G}-Wishart prior the expected Fisher information has no closed form and can only be obtained by Monte Carlo; this cost grows with network size and constrained the network sizes we could include in the simulation studies. Finally, every sample size recommendation is conditional on the elicited prior, and because prior studies are often small and underpowered, the elicited structure may itself be misspecified. Our misspecification analysis (Section~\ref{subsec:simulation_study_sample_size_planning}) shows that this dependence is directional rather than arbitrary: underestimating an edge's strength only inflates the planned size, resulting in a conservative planning, whereas overstating it deflates the size and can underpower the study. This evidence is confined to the random structural perturbations we tested. Potential misspecifications regarding the prior scale, the chosen variable set, or systematic biases fall outside the scope of this paper, leaving their impact currently unexamined.

The present work suggests several directions for future research. Because the ESS construction depends only on prior and single-observation information matrices, it extends to other network models, such as ordinal Markov random fields~\citep{Marsman2025} and the graphical vector autoregressive framework~\citep[GVAR;][]{Wild2010,Epskamp2018}. More broadly, the same perspective applies to any informative but intractable prior: natural next steps include generalizing the ESS to matrix-variate mixture priors~\citep{Mulder2018}, incorporating graph uncertainty into planning, and pairing these pre-data tools with post-data ESS diagnostics \citep{Reimherr2021, Wiesenfarth2019, Jones2022}. Computational efficiency could be improved by exploiting graph decomposability: for decomposable graphs, a clique–separator factorization results in closed-form \textit{G}-Wishart computations and removes the need for simulation, while for non-decomposable graphs, breaking the graph into smaller subgraphs would confine simulation to the components that genuinely require it. The full workflow presented in this paper is implemented in the \texttt{designbgm} package~\citep{designbgm}, giving applied researchers an accessible way to quantify what an informative prior is worth before any data are collected, and to plan the sample size of a prospective study. 

\section*{Code and Data Availability}
All code required to reproduce the simulations, figures, and supplementary material in this paper is openly available at \href{https://doi.org/10.5281/zenodo.20767408}{10.5281/zenodo.20767408}. The repository includes the simulation scripts, plotting code, and supplementary materials, along with a record of the \textsf{R} and package versions used. The analyses rely on the \texttt{designbgm} package, available at \url{https://github.com/Bayesian-Graphical-Modelling-Lab/designbgm}.

\section*{Supplementary Material}
Supplementary Material associated with this article can be found at \href{https://doi.org/10.5281/zenodo.20767408}{10.5281/zenodo.20767408}.


\section*{Competing interests}
The authors declare none.

\begin{appendices}

\section{Gradient, Hessian, and Fisher information of the GGM}
\label{appendix:ggm_derivatives}

The log-likelihood of the GGM is proportional to
\[
\log f(\mathbf{X};\bm{\Theta}) \propto \frac{n}{2}\bigg(\log{|\bm{\Theta}|} - \text{tr}(S\bm{\Theta})\bigg)
\]
where $S = \frac{1}{n}\sum_{k=1}^{n}\mathbf{x}_k\mathbf{x}_k^{\top}$ is the sample covariance matrix. The trace term follows from the cyclic property $\text{tr}(AB) = \text{tr}(BA)$:
\[
\sum_{k=1}^{n}\mathbf{x}_k^{\top}\bm{\Theta}\mathbf{x}_k = \text{tr}\left( \sum_{k=1}^{n}\mathbf{x}_k^{\top}\bm{\Theta}\mathbf{x}_k\right) = \text{tr}\left(\sum_{k=1}^{n}\mathbf{x}_k\mathbf{x}_k^{\top}\bm{\Theta}\right) = n\text{tr}\left(S\bm{\Theta}\right).
\]
The score function and the Hessian are
\[
U(\bm{\Theta}) = \frac{\partial\log{f}(\mathbf{X};\bm{\Theta})}{\partial\bm{\Theta}}  = \frac{n}{2}\bm{\Theta}^{-1}-\frac{n}{2}S\]
\[\mathbf{H}(\bm{\Theta}) = \frac{\partial^{2}\log{f}(\mathbf{X};\bm{\Theta})}{\partial\bm{\Theta}^{2}} = - \frac{n}{2}\left(\bm{\Theta}^{-1}\otimes\bm{\Theta}^{-1}\right)
\]
where $\otimes$ denotes the Kronecker product.

The observed Fisher information for a sample of size $n$ is the negative Hessian
\[
I(\mathbf{X};\bm{\Theta}) = -\mathbf{H}(\bm{\Theta}) = \frac{n}{2}\left(\bm{\Theta}^{-1}\otimes\bm{\Theta}^{-1}\right)
\]
In GGMs, the observed and expected Fisher information coincide since $I(\mathbf{X};\bm{\Theta})$ does not depend on the data. The expected information for a single observation is therefore
\[
\mathcal{I}(\mathbf{x}_1;\bm{\Theta}) = \mathbb{E}_{\mathbf{x}_1}\left[I(\mathbf{x}_1;\bm{\Theta})\right] = \frac{1}{2}\left(\bm{\Theta}^{-1}\otimes\bm{\Theta}^{-1}\right)
\]
and for a sample of size $n$: $\mathcal{I}(\mathbf{X};\bm{\Theta}) = n \ \cdot\ \mathcal{I}(\mathbf{x}_1;\bm{\Theta})$.

The Kronecker product generates $p^2 \times p^2$ matrix, but $\bm{\Theta}$ has only $p(p+1)/2$ unique entries due to symmetry. Appendix \ref{appendix:matrix_operators} describes the three matrix operators used to transform between the full $p^2 \times p^2$ and the reduced $\frac{p(p+1)}{2} \times \frac{p(p+1)}{2}$, including the commutation matrix.

\section[Prior distributions for the precision matrix]{Prior distributions for the precision matrix $\bm{\Theta}$}
\label{appendix:priors}

This appendix provides technical details on the Wishart and \textit{G}-Wishart prior distributions for the precision matrix $\bm{\Theta}$. For each  prior we derive the density, the posterior distribution given the GGM likelihood, and the observed Fisher information of both the prior and the posterior. These quantities are used in Sections \ref{subsec:ess_methods} and \ref{sec:sample_size_planning} to define the ESS methods and sample size planning procedures.

\subsection{The Wishart Prior}
\label{subsection:wishart_prior}

The Wishart distribution $\bm{\Theta} \sim W(\nu,\bm{\Psi})$ has density

\[\pi(\bm{\Theta}) = C(\nu,\bm{\Psi})^{-1}|\bm{\Theta}|^{(\nu-p-1)/2}\exp{\left( - \frac{1}{2}\text{tr}(\bm{\Psi}^{-1}\bm{\Theta})\right)}\]

where $\nu > p - 1$ is the degrees of freedom, $\bm{\Psi}$ is the $p \times p$ symmetric positive definite matrix encoding the prior beliefs on $\bm{\Theta}$, and $C(\nu,\bm{\Psi})$ is the normalizing constant
\[
C(\nu,\bm{\Psi}) = 2^{\nu p/2}|\bm{\Psi}|^{\nu/2}\pi^{p(p-1)/4}\prod_{i=1}^{p}\Gamma\left(\frac{\nu-i+1}{2}\right)
\]
obtained by integrating over the cone $\mathcal{P}^{+} = \left\lbrace \bm{\Theta} : \lambda_i\left(\bm{\Theta}\right) > 0, i = 1,\ldots,p \right\rbrace$ of $p \times p$ symmetric positive definite matrices, where $\lambda_i$ indicates the $i\text{-th}$ eigenvalue of $\bm{\Theta}$.

\paragraph{Posterior distribution.} Combined with the GGM likelihood $f(\mathbf{X};\bm{\Theta})$, the posterior remains a Wishart distribution, $\bm{\Theta}\mid\mathbf{X}\sim W\left(n + \nu,\hat{\bm{\Psi}}\right)$, where $\hat{\bm{\Psi}} = \left(\bm{\Psi}^{-1} + nS\right)^{-1}$,
\[
\pi(\bm{\Theta}\mid\mathbf{X}) = C\left(n+\nu,\left(\bm{\Psi}^{-1} + nS\right)^{-1}\right)^{-1}|\bm{\Theta}|^{(n+\nu-p-1)/2}\exp{\left( - \frac{1}{2}\text{tr}\left[\left(\bm{\Psi}^{-1} + nS\right)\bm{\Theta}
\right]\right)}.
\]
\paragraph{Observed Fisher information.} The observed Fisher information of the prior is the negative second derivative of the log-density,
\[
I(\bm{\Theta}) = -\frac{\partial^2\log{\pi(\bm{\Theta})}}{\partial\bm{\Theta}^{2}} = \frac{\nu-p-1}{2} \left(\bm{\Theta}^{-1} \otimes \bm{\Theta}^{-1}\right)
\]
and for the posterior
\[
I(\bm{\Theta}\mid\mathbf{X}) = -\frac{\partial^2\log{\pi(\bm{\Theta}\mid\mathbf{X})}}{\partial\bm{\Theta}^{2}}  = \frac{n+\nu-p-1}{2} \left(\bm{\Theta}^{-1} \otimes \bm{\Theta}^{-1}\right)
\]
Both are $p^2 \times p^2$ matrices. Using the duplication matrix $\mathbf{D}$, they reduce to $p(p+1)/2 \times p(p+1)/2$, the half vectorization form. See Appendix \ref{appendix:matrix_operators} for details. Since the observed information of the Wishart does not depend on the data $\mathbf{X}$, the observed and expected Fisher information coincide.

\paragraph{Mean and variance.} The prior mean and variance are
\begin{align*}
\mathbb{E}[\bm{\Theta}] &= \nu\bm{\Psi} \quad{}\quad{} \\
\mathbb{V}(\bm{\Theta}) &= \nu\left(\mathbf{I} + \mathbf{K}\right)\left(\bm{\Psi}\otimes\bm{\Psi}\right)
\end{align*}
and the posterior mean and variance are
\begin{align*}
\mathbb{E}[\bm{\Theta}\mid\mathbf{X}] &= (n+\nu)\hat{\bm{\Psi}} \\
\quad{}\quad{} 
\mathbb{V}(\bm{\Theta}\mid\mathbf{X}) &= (n+\nu)\left(\mathbf{I} + \mathbf{K}\right)\left(\hat{\bm{\Psi}}\otimes\hat{\bm{\Psi}}\right).
\end{align*} 
The variance matrices are $p^2 \times p^2$. Using the elimination matrix $\mathbf{E}$, they reduce to $p(p+1)/2 \times p(p+1)/2$, by removing redundant entries due to the symmetry of $\bm{\Theta}$. Here $\mathbf{I}$ denotes the $p^2 \times p^2$ identity matrix and $\mathbf{K}$ the commutation matrix (see Appendix \ref{appendix:matrix_operators} for their definitions).

\paragraph{Prior elicitation.} When prior information is available from a study of size $\nu$, the hyperparameters $(\nu, \bm{\Psi}^{*})$ are elicited by setting $\bm{\Psi}^{*} = \bm{\Psi} / \nu$, where $\bm{\Psi} = S^{-1}$ is the sample precision matrix from the prior study. Under this specification, $\mathbb{E}[\bm{\Theta}] = \nu \bm{\Psi}^{*} = \bm{\Psi}$, so that $\nu$ can be directly interpreted as a prior sample size, controlling the concentration of the prior around $\bm{\Psi}$.

\subsection{The \textit{G}-Wishart Prior}
\label{subsection:gwishart_prior}

The \textit{G}-Wishart distribution extends the Wishart to precision matrices constrained to the graph structure $G$. The support is restricted to $\mathcal{P}^{+}_G$, the cone of $p \times p$ symmetric positive definite matrices whose zero entries correspond to absent edges of $G$. Throughout this work, we parametrize the \textit{G}-Wishart following the reparametrization $\nu = \delta + p -1$ of \citet{Roverato2000}, where $\delta > 0$ is the shape parameter in the original formulation, so that $\nu > p-1$. Under this convention, the \textit{G}-Wishart and Wishart share the same interpretation of $\nu$, and the exponent $(\delta-2)/2$ in the original density becomes $(\nu - p - 1) / 2$.

The \textit{G}-Wishart prior $\bm{\Theta}\sim W_G(\nu,\bm{\Psi})$ has density
\[
\pi(\bm{\Theta}) = C_G(\nu,\bm{\Psi})^{-1}|\bm{\Theta}|^{(\nu-p-1)/2}\exp{\left( -\frac{1}{2}\text{tr}\left(\bm{\Psi}^{-1}\bm{\Theta}\right)\right)}
\] 
where $\nu$ is the degrees of freedom, $\bm{\Psi}$ is a $p \times p$ symmetric positive definite matrix encoding prior beliefs about $\bm{\Theta}$, and $C_G(\nu,\bm{\Psi})$ is the normalizing constant,
\[
C_G(\nu,\bm{\Psi}) = \int_{\mathcal{P}^{+}_G}|K|^{(\nu-p-1)/2}\exp{\left( -\frac{1}{2}\text{tr}\left(\bm{\Psi}^{-1}K\right)\right)}\text{d}K
\]
Unlike the normalizing constant of a Wishart, $C_G(\nu,\bm{\Psi})$ does not have a closed form in general, it depends on the graph structure $G$ and must be approximated numerically.

\paragraph{Posterior distribution.} Combined with the GGM likelihood $f(\mathbf{X};\bm{\Theta})$, the posterior remains \textit{G}-Wishart $\bm{\Theta}\mid\mathbf{X} \sim W_G(n+\nu,\hat{\bm{\Psi}})$, where $\hat{\bm{\Psi}} = (\bm{\Psi}^{-1}+nS)^{-1}$,
\[
\pi(\bm{\Theta}\mid\mathbf{X}) = C_G\left(n+\nu,(\bm{\Psi}^{-1}+nS)^{-1}\right)^{-1}|\bm{\Theta}|^{(n+\nu-p - 1)/2}\exp{\left( -\frac{1}{2}\text{tr}\left((\bm{\Psi}^{-1}+nS)\bm{\Theta}\right)\right)}.
\] 
\paragraph{Observed Fisher information.} We use $\bm{\Theta}_G$ to denote a precision matrix following a \textit{G}-Wishart distribution, constraining $\bm{\Theta}_{G}$ to $\mathcal{P}^{+}_G$. The observed Fisher information of the prior is
\[
I(\bm{\Theta}_G) = -\frac{\partial^2\log{\pi(\bm{\Theta}_G)}}{\partial\bm{\Theta}_G^{2}} = \frac{\nu-p -1}{2} \left(\bm{\Theta}_G^{-1} \otimes \bm{\Theta}_G^{-1}\right)
\]
and for the posterior,
\[
I(\bm{\Theta}_G\mid \mathbf{X}) = -\frac{\partial^2\log{\pi(\bm{\Theta}_G|\mathbf{X})}}{\partial\bm{\Theta}_G^{2}} = \frac{n+\nu-p - 1}{2} \left(\bm{\Theta}_G^{-1} \otimes \bm{\Theta}_G^{-1}\right)
\]
Both are $p^2 \times p^2$ matrices. Using the duplication matrix $\mathbf{D}$, they reduce to $p(p+1)/2 \times p(p+1)/2$ (see Appendix \ref{appendix:matrix_operators} for details). Since the observed information of the \textit{G}-Wishart does not depend on the data $\mathbf{X}$, the observed and expected Fisher information coincide.

Under the reparametrization $\nu = \delta + p - 1$, the observed Fisher information of the \textit{G}-Wishart prior takes the same form as that of the Wishart prior (Appendix \ref{subsection:wishart_prior}), with exponent $(\nu - p - 1)/2$ in both cases.

\paragraph{Mean and variance.} The mean and variance of the \textit{G}-Wishart do not have closed form expressions in general \citep{Roverato2002}. In this work, they are estimated via Monte Carlo simulation by drawing samples from $W_G(\nu,\bm{\Psi})$ using the algorithm of \citet[Section~2.4]{Lenkoski2013}, and computing the sample mean and variance of $\bm{\Theta}$.

\paragraph{Prior elicitation.} When prior information is available from a study of size $\nu$, the hyperparameters $(\nu, \bm{\Psi}_{G}^{*})$ are elicited by setting $\bm{\Psi}_{G}^{*} = \bm{\Psi}_G / \nu$, where $\bm{\Psi}_{G} \in \mathcal{P}^{+}_G$ is the estimate (maximum likelihood or posterior mode) of the precision matrix under $G$ obtained from the prior study. Under this specification the prior mean satisfies $\mathbb{E}[\bm{\Theta}] \approx \bm{\Psi}_G$, with the approximation arising from the intractability of the normalizing constant for non-decomposable graphs. The centering bias vanishes as $\nu \to \infty$ and as $G$ approaches the complete graph. When $G$ is complete, $\bm{\Psi}_G$ reduces to the sample precision matrix $\bm{\Psi} = S^{-1}$ and the elicitation recovers the Wishart case exactly (Appendix \ref{subsection:wishart_prior}).

\section{Matrix operators}
\label{appendix:matrix_operators}

The precision matrix $\bm{\Theta}$ is a $p \times p$ symmetric matrix with $p^2$ entries, of which only $d = p(p+1)/2$ are unique. Many quantities in this work, including Fisher information matrices and variance matrices, are naturally expressed in the full $p^2 \times p^2$ space but can be reduced to the $d \times d$ space by exploiting this symmetry. We describe here the three matrix operators used for this reduction, and the commutation matrix used to reorganize elements within the full vectorization. See \citet{Magnus1979} for further properties.

The \textbf{vec} operator stacks the columns of a $p\times p$ matrix $A$ into a $p^2 \times 1$ vector, while the \textbf{vech} operator stacks only the lower triangular entries into a $d \times 1$ vector.

\subsection{The Duplication Matrix $\mathbf{D}$}
\label{appendix:duplication_matrix}

The duplication matrix $\mathbf{D}$ of dimension $p^2 \times d$ maps the half vectorization of a symmetric matrix to its full vectorization,
\[
\mathbf{D}\text{vech}(A) = \text{vec}(A)
\]
In this paper, $\mathbf{D}$ is used to reduce information matrices from  $p^2 \times p^2$ to $d \times d$. Specifically, for an information matrix $\mathcal{I}$ of dimension $p^2 \times p^2$, the reduced form is 
\[
\mathcal{I}_{d\times d} = \mathbf{D}^{\top}\mathcal{I}\mathbf{D}.
\]

\paragraph{Example: Fisher Information for a $2\times2$ Precision Matrix.}  

The following example illustrates why the duplication matrix $\mathbf{D}$ is the 
correct operator for dimension reduction of Fisher information matrices. We consider a $2\times 2$ precision matrix and show that the analytical Fisher information computed via direct differentiation coincides with the result obtained using $\mathbf{D}$.

Given a $2 \times 2$ precision matrix
\[
\bm{\Theta} = 
\begin{pmatrix} 
\theta_{11} & \theta_{12} \\ 
\theta_{12} & \theta_{22} 
\end{pmatrix}
\]
with $d = p(p+1)/2=3$ unique entries $(\theta_{11},\theta_{12},\theta_{22})$, the expected Fisher information $\mathcal{I}(\mathbf{X};\bm{\Theta}) = \frac{n}{2}(\bm{\Theta}^{-1}\otimes\bm{\Theta}^{-1})$ 
is a $4\times 4$ matrix in the full vectorization. Using $\mathbf{D}$ of dimension $4\times 3$, the reduced $3\times 3$ form is
\[
\mathcal{I}_{d\times d} = \mathbf{D}^{\top}\left(\frac{1}{2}(\bm{\Theta}^{-1}\otimes\bm{\Theta}^{-1})\right)\mathbf{D}
\]
To verify, consider the specific matrix
\[
\bm{\Theta} = \begin{pmatrix} 2 & 0.5 \\ 0.5 & 1 \end{pmatrix}
\]
The analytical Fisher information, computed as the negative second derivative 
of the log-likelihood with respect to $(\theta_{11}, \theta_{12}, \theta_{22})$, is
\[
-\mathbf{H}(\bm{\Theta}) = \frac{1}{2\Delta^2}
\begin{pmatrix} 
\theta_{22}^2 & -2\theta_{12}\theta_{22} & \theta_{12}^2 \\ 
-2\theta_{12}\theta_{22} & 2\theta_{11}\theta_{22} + 2\theta_{12}^2 & -2\theta_{11}\theta_{12} \\ 
\theta_{12}^2 & -2\theta_{11}\theta_{12} & \theta_{11}^2 
\end{pmatrix}
\]
where $\Delta = \theta_{11}\theta_{22} - \theta_{12}^2$ is the determinant of $\bm{\Theta}$. For the specific $\bm{\Theta}$ above, this gives (rounded to three decimal values)
\[
-\mathbf{H}(\bm{\Theta}) =
\begin{pmatrix} 
0.163 & -0.163 & 0.041 \\ 
-0.163 & 0.735 & -0.327 \\ 
0.041 & -0.327 & 0.653
\end{pmatrix}
\]
The duplication matrix for $p=2$ is
\[
\mathbf{D} = 
\begin{pmatrix} 
1 & 0 & 0 \\ 
0 & 1 & 0 \\ 
0 & 1 & 0 \\
0 & 0 & 1 
\end{pmatrix}
\]
and the operation
\[
\overbrace{\begin{pmatrix} 
1 & 0 & 0 & 0 \\ 
0 & 1 & 1 & 0 \\ 
0 & 0 & 0 & 1 
\end{pmatrix}}^{\mathbf{D}^{\top}}\left(\frac{1}{2}(\bm{\Theta}^{-1}
\otimes\bm{\Theta}^{-1})\right)\overbrace{\begin{pmatrix} 
1 & 0 & 0 \\ 
0 & 1 & 0 \\ 
0 & 1 & 0 \\
0 & 0 & 1 
\end{pmatrix}}^{\mathbf{D}}
\]
gives
\[
\mathbf{D}^{\top}\left(\frac{1}{2}(\bm{\Theta}^{-1}\otimes\bm{\Theta}^{-1})\right)\mathbf{D} =\begin{pmatrix} 
0.163 & -0.163 & 0.041 \\ 
-0.163 & 0.735 & -0.327 \\ 
0.041 & -0.327 & 0.653
\end{pmatrix}
\]
The two matrices are identical, confirming that $\mathbf{D}$ provides the correct dimension reduction for Fisher information matrices.

The following R code reproduces this computation:
\begin{lstlisting}[style=rstyle]
# Define the precision matrix
Theta <- matrix(c(2, 0.5, 0.5, 1), nrow = 2)
Theta_inv <- chol2inv(chol(Theta))
# Duplication matrix for p = 2
D <- matrix(c(1,0,0,0,0,1,1,0,0,0,0,1), nrow = 4, byrow = FALSE)
# Fisher information via duplication matrix
I <- t(D) %*% (0.5 * kronecker(Theta_inv, Theta_inv)) %*% D
round(I, 3) 
\end{lstlisting}

For the inverse Fisher information, the Moore-Penrose pseudoinverse $\mathbf{D}^{+}$ must be used instead (see Section \ref{appendix:inverse_duplication_matrix}.

\subsection{The Inverse Duplication Matrix $\mathbf{D}^{+}$}
\label{appendix:inverse_duplication_matrix}

The Moore-Penrose pseudoinverse of the duplication matrix, $\mathbf{D}^{+} = (\mathbf{D}^{\top}\mathbf{D})^{-1}\mathbf{D}^{\top}$, is of dimension $d \times p^2$ and satisfies $\mathbf{D}^{+}\mathbf{D} = \mathbf{I}$. In this paper, $\mathbf{D}^{+}$ is used to reduce inverse information matrices from $p^2 \times p^2$ to $d \times d$. For an inverse information matrix $\mathcal{I}^{-1}$ of dimension $p^2 \times p^2$, the reduced form is
\[
\mathcal{I}^{-1}_{d \times d} = \mathbf{D}^{+}\mathcal{I}^{-1}{\mathbf{D}^{+}}^{\top}
\]

Continuing the example from Section \ref{appendix:duplication_matrix}, the inverse Fisher information is computed as follows
\[
\mathcal{I}^{-1}_{d\times d} = \mathbf{D}^{+}\left(2(\bm{\Theta}
\otimes\bm{\Theta})\right){\mathbf{D}^{+}}^{\top}
\]
The R code below shows that the result coincides with the direct matrix inverse of $\mathcal{I}_{d \times d}$:
\begin{lstlisting}[style=rstyle]
# Moore-Penrose pseudoinverse of the duplication matrix for p = 2
D_plus <- solve(t(D) %*% D) %*% t(D)
# Inverse Fisher information via D_plus
I_inv <- D_plus %*% (2 * kronecker(Theta, Theta)) %*% t(D_plus)
round(I_inv, 3) 
# Compare to the analytical inverse
round(solve(I),3)
\end{lstlisting}

\subsection{The Elimination Matrix $E$}

The elimination matrix $\mathbf{E}$ of dimensions $d \times p^2$ maps the full vectorization of a symmetric matrix to its half vectorization,
\[
\mathbf{E}\text{vec}(A) = \text{vech}(A),
\]
and it satisfies $\mathbf{E}\mathbf{D}=\mathbf{I}_d$, where $\mathbf{I}_d$ is the $d \times d$ matrix. In this paper, $\mathbf{E}$ is used to reduce the variance matrices from $p^2 \times p^2$ to $d \times d$. For a variance matrix $\mathbb{V}$ of dimension $p^2 \times p^2$, the reduced form is
\[
\mathbb{V}_{d\times d} = \mathbf{E} \mathbb{V} \mathbf{E}^{\top}.
\]

\subsection{The Commutation Matrix $\mathbf{K}$}

The commutation matrix $\mathbf{K}$ of dimensions $p^2 \times p^2$ permutes the elements of $\text{vec}(A)$ to give $\text{vec}(A^{\top})$,
\[
\mathbf{K}\text{vec}(A) = \text{vec}(A^{\top}).
\]
It does not operate any dimension reduction, but it rearranges elements within the full $p^2 \times p^2$ space. In this paper, $\mathbf{K}$ appears in the computation of variance and information matrices, exploiting the symmetry of $\bm{\Theta}$ through the identity $\mathbf{K}(\bm{\Theta}\otimes\bm{\Theta}) = (\bm{\Theta}\otimes\bm{\Theta})\mathbf{K}$.

\section{Prior ESS: Derivations of Analytical Formulas}
\label{appendix:prior_ess_formulas}

This appendix provides the analytical formulas for each prior ESS method under the elicitation convention $\bm{\Psi}^{*} = \bm{\Psi}/\nu$, so that $\mathbb{E}\left[\bm{\Theta}\right]=\bm{\Psi}$ exactly under the Wishart prior $\bm{\Theta}\sim W(\nu,\bm{\Psi}/\nu)$, and approximately under the \textit{G}-Wishart prior $\bm{\Theta}\sim W_{G}(\nu,\bm{\Psi}/\nu)$. Under this convention, $\bm{\Psi}$ is directly interpretable as the prior belief about the precision matrix. The determinant ratio is used throughout to obtain a global measure of the ESS. The Morita-Thall-Müller \citep{Morita2008,Morita2010}, Pennello-Thompson and ELIR \citep{Neuenschwander2020} methods have closed-form expressions under both priors. For the Variance Ratio and the Precision Ratio \citep{Neuenschwander2020} specific quantities must be estimated via Monte Carlo simulation under the \textit{G}-Wishart prior, as discussed at the end of this appendix.

\paragraph{Variance Ratio (VR).} The Variance Ratio defines the prior ESS as the ratio of the expected single-observation information to the prior variance,
\begin{align*}
    ESS_{\text{VR}} &= \left(\frac{\lvert\mathbb{E}_{\bm{\Theta}}\left[\mathcal{I}^{-1}(\mathbf{x}_1;\bm{\Theta})\right]\lvert}{\lvert\mathbb{V}\left(\bm{\Theta}\right)\lvert}\right)^{1/d} \\
    &= \left(\frac{\lvert 2\mathbf{D}^{+}\mathbb{E}_{\bm{\Theta}}[\bm{\Theta}\otimes\bm{\Theta}]{\mathbf{D}^{+}}^{\top} \rvert}{\lvert\nu\mathbf{E}\left(\mathbf{I} + \mathbf{K}\right)\left(\bm{\Psi}^{*}\otimes\bm{\Psi}^{*}\right)\mathbf{E}^{\top}\rvert}\right)^{1/d}
\end{align*}
where $\mathbb{E}_{\bm{\Theta}}[\bm{\Theta}\otimes\bm{\Theta}] = \nu^{2} \ (\bm{\Psi}^{*} \otimes \bm{\Psi}^{*}) + \nu \ \text{vec}(\bm{\Psi}^{*})\text{vec}(\bm{\Psi}^{*})^{\top} + \nu \ \mathbf{K} (\bm{\Psi}^{*}\otimes\bm{\Psi}^{*})$ \citep[Section 5, Eq. 9]{Hagedorn2022}.

Substituting $\bm{\Psi}^{*}=\bm{\Psi}/\nu$:
\[
\mathbb{E}_{\bm{\Theta}}[\bm{\Theta}\otimes\bm{\Theta}] = (\bm{\Psi} \otimes \bm{\Psi}) + (1/\nu)\text{vec}(\bm{\Psi})\text{vec}(\bm{\Psi})^{\top} + (1/\nu)\mathbf{K} (\bm{\Psi} \otimes\bm{\Psi})
\]
and the denominator becomes $(1/\nu)\mathbf{E}\left(\mathbf{I} + \mathbf{K}\right)\left(\bm{\Psi}\otimes\bm{\Psi}\right)\mathbf{E}^{\top}$, giving:
\[
    ESS_{\text{VR}} = \left(\frac{\lvert 2\mathbf{D}^{+}\left[(\bm{\Psi} \otimes \bm{\Psi}) + (1/\nu)\text{vec}(\bm{\Psi})\text{vec}(\bm{\Psi})^{\top} + (1/\nu)\mathbf{K} (\bm{\Psi} \otimes\bm{\Psi})\right]{\mathbf{D}^{+}}^{\top} \rvert}{\lvert  (1/\nu)\mathbf{E}\left(\mathbf{I} + \mathbf{K}\right)\left(\bm{\Psi}\otimes\bm{\Psi}\right)\mathbf{E}^{\top} \rvert}\right)^{1/d} \\
\]
To simplify the leading term in the numerator, we use the identity $\mathbf{D}^{+} = (1/2)\mathbf{E}(\mathbf{I} + \mathbf{K})$ \citep[][Lemma 3.6 (iv)]{Magnus1979}, where $\mathbf{E}$ corresponds to their elimination matrix $\mathbf{L}$ and $\mathbf{N} = (1/2)(\mathbf{I} + \mathbf{K})$, and consider Lemma 2.1 (ii) and (iii) in \citet{Magnus1979}. Therefore we have:
\begin{align*}
& 2\mathbf{D}^{+}(\bm{\Psi} \otimes \bm{\Psi}){\mathbf{D}^{+}}^{\top} \\
&= 2 (1/4)\mathbf{E}(\mathbf{I} + \mathbf{K})(\bm{\Psi} \otimes \bm{\Psi})(\mathbf{I} + \mathbf{K})^{\top} \mathbf{E}^{\top} \quad{}\text{(substituting } \mathbf{D}^{+}\text{)}\\
&= 2 (1/4)\mathbf{E}(\mathbf{I} + \mathbf{K})(\bm{\Psi} \otimes \bm{\Psi})(\mathbf{I} + \mathbf{K}) \mathbf{E}^{\top} \quad{}\text{(symmetry of } \mathbf{I}+\mathbf{K}\text{)}\\
&= 2 (1/4)\mathbf{E}(\mathbf{I} + \mathbf{K})^{2}(\bm{\Psi} \otimes \bm{\Psi})\mathbf{E}^{\top}  \quad{}\text{(} \mathbf{K}(\bm{\Psi} \otimes \bm{\Psi}) = (\bm{\Psi} \otimes \bm{\Psi})\mathbf{K} \text{ for symmetric } \mathbf{\Psi} \text{, Lemma 2.1 (ii))}\\
&= \mathbf{E}(\mathbf{I} + \mathbf{K})(\bm{\Psi} \otimes \bm{\Psi})\mathbf{E}^{\top}\quad{}\text{(} (\mathbf{I} + \mathbf{K})^{2} = 2(\mathbf{I} + \mathbf{K}) \text{, Lemma 2.1 (iii))}
\end{align*}
Substituting and separating the leading term $\mathbf{A} = \mathbf{E}(\mathbf{I} + \mathbf{K})(\bm{\Psi} \otimes \bm{\Psi})\mathbf{E}^{\top}$ from the remainder $\mathbf{B} = 2\mathbf{D}^{+}\left[\text{vec}(\bm{\Psi})\text{vec}(\bm{\Psi})^{\top} + \mathbf{K} (\bm{\Psi} \otimes\bm{\Psi})\right]{\mathbf{D}^{+}}^{\top} $:
\[
ESS_{\text{VR}} = \left( \frac{\lvert \mathbf{A} + \mathbf{B}/\nu\rvert}{(1/\nu)^{d}\lvert\mathbf{A}\rvert} \right)^{1/d} = \nu \cdot \left( \frac{\lvert \mathbf{A} + \mathbf{B}/\nu\rvert}{\lvert\mathbf{A}\rvert} \right)^{1/d}
\]
As $\nu \rightarrow \infty$, by continuity of the determinant $\left(\frac{\lvert A + B/\nu \rvert}{\lvert A\rvert} \right)^{1/d}\rightarrow 1$ and $ESS_{\text{VR}}\sim \nu$ growing linearly with the prior sample size $\nu$.

\paragraph{Precision Ratio (PR).} The Precision Ratio compares prior precision directly to expected information
\begin{align*}
  ESS_{\text{PR}} &= \left(\frac{\lvert\mathbb{V}^{-1}\left(\bm{\Theta}\right)\rvert}{\lvert\mathbb{E}_{\bm{\Theta}}\left[\mathcal{I}(\mathbf{x}_1;\bm{\Theta})\right]\rvert}\right)^{1/d} \\
  &= \left(\frac{\lvert\mathbb{V}\left(\bm{\Theta}\right)\rvert^{-1}}{\lvert\mathbb{E}_{\bm{\Theta}}\left[\mathcal{I}(\mathbf{x}_1;\bm{\Theta})\right]\rvert}\right)^{1/d} \\
  &= \left(\frac{\lvert\nu\mathbf{E}\left(\mathbf{I} + \mathbf{K}\right)\left(\bm{\Psi}^{*}\otimes\bm{\Psi}^{*}\right)\mathbf{E}^{\top}\rvert^{-1}}{\lvert \frac{1}{2}\mathbf{D}^{\top}\mathbb{E}_{\bm{\Theta}}[\bm{\Theta}^{-1}\otimes\bm{\Theta}^{-1}]\mathbf{D} \rvert}\right)^{1/d}
\end{align*}
where $\mathbb{E}_{\bm{\Theta}}[\bm{\Theta}^{-1}\otimes\bm{\Theta}^{-1}] = c_1 \ ({\bm{\Psi}^{*}}^{-1}  \otimes  {\bm{\Psi}^{*}}^{-1}) + c_2 \ \text{vec}({\bm{\Psi}^{*}}^{-1})\text{vec}({\bm{\Psi}^{*}}^{-1})^{\top} + c_2 \ \mathbf{K} ({\bm{\Psi}^{*}}^{-1} \otimes  {\bm{\Psi}^{*}}^{-1})$, with $c_2 = \left[(\nu-p)(\nu-p-1)(\nu-p-3)\right]^{-1}$ and $c_1=c_2(\nu-p-2)$ \citep{VonRosen1997}.

Substituting ${\bm{\Psi}^{*}}^{-1} = \nu\bm{\Psi}^{-1}$:
\[
\mathbb{E}_{\bm{\Theta}}[\bm{\Theta}^{-1}\otimes\bm{\Theta}^{-1}] = c_1 \nu^2 \ (\bm{\Psi}^{-1}  \otimes  \bm{\Psi}^{-1}) + c_2 \nu^2 \ \text{vec}(\bm{\Psi}^{-1})\text{vec}(\bm{\Psi}^{-1})^{\top} + c_2 \nu^2 \ \mathbf{K} (\bm{\Psi}^{-1} \otimes  \bm{\Psi}^{-1})
\]
and the numerator becomes $(1/\nu)\mathbf{E}\left(\mathbf{I} + \mathbf{K}\right)\left(\bm{\Psi}\otimes\bm{\Psi}\right)\mathbf{E}^{\top}$, so:
\[
ESS_{\text{PR}} = \left( \frac{\lvert (1/\nu)\mathbf{E}\left(\mathbf{I} + \mathbf{K}\right)\left(\bm{\Psi}\otimes\bm{\Psi}\right)\mathbf{E}^{\top} \rvert ^{-1}}{ \lvert \frac{1}{2}\mathbf{D}^{\top}\left[c_1 \nu^2 \ (\bm{\Psi}^{-1}  \otimes  \bm{\Psi}^{-1}) + c_2 \nu^2 \ \text{vec}(\bm{\Psi}^{-1})\text{vec}(\bm{\Psi}^{-1})^{\top} + c_2 \nu^2 \ \mathbf{K} (\bm{\Psi}^{-1} \otimes  {\bm{\Psi}^{*}}^{-1})\right]\mathbf{D} \rvert} \right)^{1/d}
\]
Let 
\begin{align*}
\mathbf{A} &= \mathbf{E}(\mathbf{I} + \mathbf{K})(\bm{\Psi} \otimes \bm{\Psi})\mathbf{E}^{\top} \\
\mathbf{B} &= \mathbf{D}^{\top}(\bm{\Psi}^{-1}  \otimes  \bm{\Psi}^{-1})\mathbf{D} \\
\mathbf{C} &= \mathbf{D}^{\top}\left(\ \text{vec}(\bm{\Psi}^{-1})\text{vec}(\bm{\Psi}^{-1})^{\top} +  \ \mathbf{K} (\bm{\Psi}^{-1} \otimes  \bm{\Psi}^{-1})\right)\mathbf{D}.
\end{align*}
Then:
\begin{align*}
ESS_{\text{PR}} &= \left( \frac{\lvert(1/\nu)\mathbf{A}\rvert^{-1}}{\lvert \frac{1}{2}c_2 \nu^{2} \left[(\nu - p - 2) \mathbf{B} + \mathbf{C} \right]\rvert} \right)^{1/d} \\
&= \left( \frac{\nu^{d}\lvert\mathbf{A}\rvert^{-1}}{ \left[\frac{1}{2}c_2 (\nu - p - 2)\nu^{2}\right]^{d}\lvert   \mathbf{B} + \mathbf{C}/(\nu - p - 2) \rvert} \right)^{1/d} \\
&= \frac{2(\nu - p)(\nu - p - 1)(\nu - p - 3)}{\nu (\nu - p - 2)}\cdot \left( \frac{\lvert\mathbf{A}\rvert^{-1}}{ \lvert   \mathbf{B} + \mathbf{C}/(\nu - p - 2) \rvert} \right)^{1/d} \\
&= \frac{(\nu - p)(\nu - p - 1)(\nu - p - 3)}{\nu (\nu - p - 2)}\cdot \left( \frac{\lvert\mathbf{B}\rvert}{ \lvert   \mathbf{B} + \mathbf{C}/(\nu - p - 2) \rvert} \right)^{1/d} 
\end{align*}
where $\mathbf{A} = \mathbf{E}(\mathbf{I} + \mathbf{K})(\bm{\Psi} \otimes \bm{\Psi})\mathbf{E}^{\top} = 2\mathbf{D}^{+}(\bm{\Psi} \otimes \bm{\Psi}){\mathbf{D}^{+}}^{\top}$ from the explanation of the VR method, and $\mathbf{A}^{-1} = \left(2\mathbf{D}^{+}(\bm{\Psi} \otimes \bm{\Psi}){\mathbf{D}^{+}}^{\top}\right)^{-1} = \frac{1}{2}\mathbf{D}^{\top}(\bm{\Psi}^{-1}  \otimes  \bm{\Psi}^{-1})\mathbf{D}= \frac{1}{2}\mathbf{B}$.

As $\nu \rightarrow \infty$, by continuity of the determinant $\left( \frac{\lvert\mathbf{B}\rvert}{ \lvert   \mathbf{B} + \mathbf{C}/(\nu - p - 2) \rvert} \right)^{1/d} \rightarrow 1$ and $ESS_{\text{PR}}$ is approximated by the factor $\frac{(\nu - p)(\nu - p - 1)(\nu - p - 3)}{\nu (\nu - p - 2)}$, which grows linearly with $\nu$ and converges to $\nu$ as $\nu \rightarrow \infty$ with $p$ fixed.

\paragraph{Morita-Thall-Müller (MTM).} The Morita-Thall-Müller method identifies the prior ESS as the sample size $m$ that minimizes the discrepancy between the prior and a locally-matched reference prior at the prior mean $\overline{\bm{\Theta}} = \bm{\Psi}$. Let $\pi_0 \sim W(\nu,\bm{\Theta}_{0})$ be the informative prior and $\pi_{\varepsilon}\sim(\nu_{\text{min}} + \varepsilon,\bm{\Theta}_{\varepsilon})$ the \textit{$\varepsilon\text{-information}$ prior}. Here, $\nu_{\text{min}}=p$ ensures the existence of the second moment, and $\varepsilon>0$ is arbitrarily small. The prior ESS is
\begin{align*}
    ESS_{\text{MTM}} &= \arg\min_{m} \Delta(m,\overline{\bm{\Theta}},\pi,\pi_{\varepsilon}) \\ 
    &=\arg\min_{m} \left\lvert  \text{tr}(I(\nu,\overline{\bm{\Theta}})) -\ \text{tr}(I(m+\nu_{\text{min}} + \varepsilon,\overline{\bm{\Theta}})) +\ \text{tr}(I(\nu_{min},\overline{\bm{\Theta}})) -\ \text{tr}(I(0,\overline{\bm{\Theta}}))   \right\rvert \\
    &= \arg\min_{m} \left\lvert  \left[\frac{\nu-p-1}{2} - \frac{m+\nu_{\text{min}}+\varepsilon-p-1}{2} + \frac{\nu_{\text{min}}-p-1}{2} - \frac{-p-1}{2}\right]g(\overline{\bm{\Theta}})  \right\rvert \\
    &= \arg\min_{m} \left\lvert  \left(\frac{\nu-m-\varepsilon}{2}\right)g(\overline{\bm{\Theta}})  \right\rvert
\end{align*}
where $g(\overline{\bm{\Theta}}) = \text{tr}(\mathbf{D}^{\top}(\overline{\bm{\Theta}}^{-1}\otimes\overline{\bm{\Theta}}^{-1})\mathbf{D})$. Setting $\varepsilon=1$ gives
\[
m^{\star} = \nu - 1
\]
Under the \textit{G}-Wishart, $\nu_{\text{min}}=p$ is the minimum admissible degrees of freedom under the reparametrization $\nu = \delta + p - 1$ of \citep{Roverato2002}. Substituting into the MTM discrepancy, the terms involving $\nu_{\text{min}}$ cancel and the minimization reduces to the same expression as in the Wishart case, giving $ m^{\star} = \nu - 1$.

\paragraph{Pennello-Thompson (PT).} The Pennello-Thompson method compares prior to single-observation Fisher information at the prior mode $\tilde{\bm{\Theta}}$,
\begin{align*}
ESS_{\text{PT}} &= \left(\frac{\lvert I(\tilde{\bm{\Theta}})\rvert}{\lvert\mathcal{I}(\mathbf{x}_1;\tilde{\bm{\Theta}})\rvert}\right)^{1/d} \\
&= \left(\frac{\lvert \frac{\nu-p-1}{2} \mathbf{D}^{\top}(\tilde{\bm{\Theta}}^{-1}\otimes\tilde{\bm{\Theta}}^{-1})\mathbf{D}\rvert}{\lvert\frac{1}{2}\mathbf{D}^{\top}(\tilde{\bm{\Theta}}^{-1}\otimes\tilde{\bm{\Theta}}^{-1})\mathbf{D}\rvert}\right)^{1/d} \\
&= \left(\frac{(\nu-p-1)^{d}\cancel{\lvert \frac{1}{2} \mathbf{D}^{\top}(\tilde{\bm{\Theta}}^{-1}\otimes\tilde{\bm{\Theta}}^{-1})\mathbf{D}\rvert}}{\cancel{\lvert\frac{1}{2}\mathbf{D}^{\top}(\tilde{\bm{\Theta}}^{-1}\otimes\tilde{\bm{\Theta}}^{-1})\mathbf{D}\rvert}}\right)^{1/d} = \nu-p-1
\end{align*}
Under the \textit{G}-Wishart, the observed Fisher information takes the same form as the Wishart with exponent $(\nu - p - 1)/2$ (Appendix \ref{subsection:gwishart_prior}), so the same cancelation applies and $ESS_{\text{PT}} = \nu - p - 1$.

\paragraph{ELIR.} The ELIR method averages the information ratio over the prior distribution,
\begin{align*}
    ESS_{\text{ELIR}} &= \left(\mathbb{E}_{\bm{\Theta}}\left[\frac{\lvert I(\bm{\Theta})\rvert}{\lvert\mathcal{I}(\mathbf{x}_1;\bm{\Theta})\rvert}\right]\right)^{1/d} \\
    &=  \left(\mathbb{E}_{\bm{\Theta}}\left[\frac{\lvert \frac{\nu-p-1}{2} \mathbf{D}^{\top}(\bm{\Theta}^{-1}\otimes\bm{\Theta}^{-1})\mathbf{D}\rvert}{\lvert\frac{1}{2}\mathbf{D}^{\top}(\bm{\Theta}^{-1}\otimes\bm{\Theta}^{-1})\mathbf{D}\rvert}\right]\right)^{1/d} \\
    &= \left((\nu-p-1)^{d \ }\cancel{\mathbb{E}_{\bm{\Theta}}\left[\frac{\lvert \frac{1}{2} \mathbf{D}^{\top}(\bm{\Theta}^{-1}\otimes\bm{\Theta}^{-1})\mathbf{D}\rvert}{\lvert\frac{1}{2}\mathbf{D}^{\top}(\bm{\Theta}^{-1}\otimes\bm{\Theta}^{-1})\mathbf{D}\rvert}\right]}\right)^{1/d} = \nu - p -1
\end{align*}
Under the \textit{G}-Wishart, the same argument as in PT applies and $ESS_{\text{ELIR}} = \nu - p -1$.

Under the \textit{G}-Wishart prior, the variance matrix $\mathbb{V}(\bm{\Theta})$, the expected Fisher information $\mathbb{E}_{\bm{\Theta}}[\mathcal{I}(\mathbf{x}_1;\bm{\Theta})]$, and their posterior counterparts do not have closed-form expressions, as the normalizing constant of the \textit{G}-Wishart is analytically intractable. For the VR and PR methods, these quantities are estimated via Monte Carlo simulation from $W_G(\nu, \bm{\Psi}/\nu)$, constrained to the free parameters of $G$.

\section{Jensen Gap}
\label{appendix:jensen_gap}

The Jensen gap formalizes the difference between $ESS_{\text{VR}}$ and $ESS_{\text{PR}}$ arising from the convexity of matrix inversion on the positive definite cone. Since $\mathcal{I}(\mathbf{x}_1;\bm{\Theta}) = \frac{1}{2}\mathbf{D}^\top(\bm{\Theta}^{-1}\otimes\bm{\Theta}^{-1})\mathbf{D}$ is a nonlinear function of $\bm{\Theta}$, the expectation of its inverse differs from the inverse of its expectation. By Jensen's inequality for the convex function $f(Y)=Y^{-1}$,
\[
\mathbb{E}_{\bm{\Theta}}\left[\mathcal{I}(\mathbf{x}_1;\bm{\Theta})\right]^{-1} \le \mathbb{E}_{\bm{\Theta}}\left[\mathcal{I}^{-1}(\mathbf{x}_1;\bm{\Theta})\right]
\]
where the inequality holds in the Löwner sense \citep{Bhatia1997}, that is,  $\mathbb{E}_{\bm{\Theta}}\left[\mathcal{I}^{-1}(\mathbf{x}_1;\bm{\Theta})\right] - \mathbb{E}_{\bm{\Theta}}\left[\mathcal{I}(\mathbf{x}_1;\bm{\Theta})\right]^{-1}$ is positive semidefinite. Since both matrices are positive definite under $\nu > p + 3$, as expectations of positive definite matrices with finite moments, the Löwner ordering implies the determinant ordering $\left|\mathbb{E}_{\bm{\Theta}}\left[\mathcal{I}(\mathbf{x}_1;\bm{\Theta})\right]^{-1}\right| \le \left|\mathbb{E}_{\bm{\Theta}}\left[\mathcal{I}^{-1}(\mathbf{x}_1;\bm{\Theta})\right]\right|$.
Taking determinants and applying the $1/d$ root, with $d$ denoting the number of free parameters of $\bm{\Theta}$
\[
\left|\mathbb{E}_{\bm{\Theta}}\left[\mathcal{I}(\mathbf{x}_1;\bm{\Theta})\right]\right|^{-1/d} \leq \left|\mathbb{E}_{\bm{\Theta}}\left[\mathcal{I}^{-1}(\mathbf{x}_1;\bm{\Theta})\right]\right|^{1/d},
\]
the Jensen gap is then defined as
\[
J = \left|\mathbb{E}_{\bm{\Theta}}\left[\mathcal{I}^{-1}(\mathbf{x}_1;\bm{\Theta})\right]\right|^{1/d} - \left|\mathbb{E}_{\bm{\Theta}}\left[\mathcal{I}(\mathbf{x}_1;\bm{\Theta})\right]\right|^{-1/d} \ge 0.
\]
 
Considering the definitions of VR and PR,
\[
ESS_{\text{VR}} = \left(\frac{\left|\mathbb{E}_{\bm{\Theta}}\left[\mathcal{I}^{-1}(\mathbf{x}_1;\bm{\Theta})\right]\right|}{\left|\mathbb{V}(\bm{\Theta})\right|}\right)^{1/d}, \qquad
ESS_{\text{PR}} = \left(\frac{\left|\mathbb{V}^{-1}(\bm{\Theta})\right|}{\left|\mathbb{E}_{\bm{\Theta}}\left[\mathcal{I}(\mathbf{x}_1;\bm{\Theta})\right]\right|}\right)^{1/d}
\]
and noting that $|\mathbb{V}^{-1}(\bm{\Theta})| = |\mathbb{V}(\bm{\Theta})|^{-1}  $, their ratio is:
\begin{align*}
 \frac{ESS_{\text{PR}}}{ESS_{\text{VR}}} &= \left(\frac{\left|\mathbb{E}_{\bm{\Theta}}\left[\mathcal{I}(\mathbf{x}_1;\bm{\Theta})\right]\right|}{\left|\mathbb{E}_{\bm{\Theta}}\left[\mathcal{I}^{-1}(\mathbf{x}_1;\bm{\Theta})\right]\right|}\right)^{1/d} \\
 &= \left(\frac{\cancel{|\mathbb{V}^{-1}(\bm{\Theta})|}}{\left|\mathbb{E}_{\bm{\Theta}}\left[\mathcal{I}(\mathbf{x}_1;\bm{\Theta})\right]\right|}\cdot\frac{\cancel{|\mathbb{V}(\bm{\Theta})|}}{\left|\mathbb{E}_{\bm{\Theta}}\left[\mathcal{I}^{-1}(\mathbf{x}_1;\bm{\Theta})\right]\right|}\right)^{1/d} \\
 &= \left(\frac{\left|\mathbb{E}_{\bm{\Theta}}\left[\mathcal{I}(\mathbf{x}_1;\bm{\Theta})\right]\right|^{-1}}{\left|\mathbb{E}_{\bm{\Theta}}\left[\mathcal{I}^{-1}(\mathbf{x}_1;\bm{\Theta})\right]\right|}\right)^{1/d}.
\end{align*}
Rewriting the above inequality as
\[
J=1 - \frac{\left|\mathbb{E}_{\bm{\Theta}}\left[\mathcal{I}(\mathbf{x}_1;\bm{\Theta})\right]\right|^{-1/d}}{\left|\mathbb{E}_{\bm{\Theta}}\left[\mathcal{I}^{-1}(\mathbf{x}_1;\bm{\Theta})\right]\right|^{1/d}} = 1- \frac{ESS_{\text{PR}}}{ESS_{\text{VR}}} \ge0,
\]

confirms that $ESS_{\text{PR}} \leq ESS_{\text{VR}}$ for all finite $\nu$ and $p$. The empirical counterpart, computable from the simulation output is therefore
\[
J = 1- \frac{ESS_{\text{PR}}}{ESS_{\text{VR}}}
\]
This quantity is non-negative, and decreases toward zero as $\nu \rightarrow \infty$ for fixed $p$, and increases with $p$ for fixed $\nu$, reflecting that a larger prior study size reduces the variability of $\mathcal{I}(\mathbf{x}_1;\bm{\Theta})$ under the prior, while larger networks ($p$ increasing) amplify the parameter dependencies in $\bm{\Theta}$, widening the gap.

\section{Predictive Consistency in the Multiparametric GGM Setting}
\label{appendix:pcc_proofs}

\citet{Neuenschwander2020} introduced the predictive consistency criterion as a coherence requirement for a prior ESS measure: a method is predictively consistent when, in expectation, observing $n$ samples raises the posterior ESS by exactly $n$ relative to the prior ESS,
\begin{equation}
\label{eq:pcc}
\mathbb{E}_{\bm{\Theta}} \bigg[\mathbb{E}_{\mathbf{X}_{n}\mid\bm{\Theta}}
\big[ESS_{\bm{\Theta}\mid\mathbf{X}}\big]\bigg] = ESS_{\bm{\Theta}} + n,
\end{equation}
where $ESS_{\bm{\Theta}\mid\mathbf{X}}$ and $ESS_{\bm{\Theta}}$ are the posterior and prior ESS of Section~\ref{sec:ess}. We extend the criterion to the multiparametric precision matrix $\bm{\Theta}$ and verify it for the five methods of Section~\ref{subsec:ess_methods} under both the Wishart and \textit{G}-Wishart priors, with the elicitation convention $\bm{\Psi}^{*}=\bm{\Psi}/\nu$. Both priors are conjugate, so the posterior degrees of freedom equal $\nu+n$ (Appendix~\ref{appendix:priors}).

\subsection*{Exact consistency: MTM, PT, and ELIR}

The prior ESS under MTM, PT, and ELIR is linear in $\nu$ and invariant to $\bm{\Psi}$ and the graph $G$ (Section~\ref{subsec:ess_methods}). This linearity alone yields exact consistency.

\begin{lemma}[Linear degrees-of-freedom rule implies consistency]
\label{lem:linear_pcc}
Suppose a method assigns prior ESS $ESS_{\bm{\Theta}}=\nu-c$ for a constant $c$ independent of $\nu$ and $n$, and that applying the method to the posterior gives the same rule at the posterior degrees of freedom,
$ESS_{\bm{\Theta}\mid\mathbf{X}}=(\nu+n)-c$. Then $ESS_{\bm{\Theta}\mid\mathbf{X}}$ is non-random and \eqref{eq:pcc} holds exactly.
\end{lemma}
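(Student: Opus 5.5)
The argument is essentially a direct computation, so I would proceed in three short steps. First, I establish that the posterior ESS is non-random. By the hypothesis of Lemma~\ref{lem:linear_pcc}, $ESS_{\bm{\Theta}\mid\mathbf{X}}=(\nu+n)-c$. Both the Wishart and the \textit{G}-Wishart are conjugate, with posterior degrees of freedom exactly $\nu+n$ (Appendix~\ref{appendix:priors}). This quantity depends only on the prior hyperparameter $\nu$ and the sample size $n$, and not on the realised data $\mathbf{X}$ (which enters only through the posterior scale $\hat{\bm{\Psi}}=(\bm{\Psi}^{-1}+nS)^{-1}$) or on $\bm{\Theta}$. Since $c$ is assumed independent of $\nu$ and $n$, and by the invariance noted in Section~\ref{subsec:ess_methods} also of $\bm{\Psi}$ and $G$, the map $(\bm{\Theta},\mathbf{X})\mapsto ESS_{\bm{\Theta}\mid\mathbf{X}}$ is constant. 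Hence it is measurable and integrable under the prior predictive distribution.

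Second, I evaluate the iterated expectation in \eqref{eq:pcc}. The expectation of a constant is that constant, so the inner expectation $\mathbb{E}_{\mathbf{X}_n\mid\bm{\Theta}}$ returns $(\nu+n)-c$ for every $\bm{\Theta}$. The outer expectation $\mathbb{E}_{\bm{\Theta}}$ then returns the same value, because the prior is a probability measure and integrates to one. This gives
\[
\mathbb{E}_{\bm{\Theta}}\big[\mathbb{E}_{\mathbf{X}_n\mid\bm{\Theta}}[ESS_{\bm{\Theta}\mid\mathbf{X}}]\big]=(\nu-c)+n=ESS_{\bm{\Theta}}+n,
\]
which is \eqref{eq:pcc}, with no approximation in $\nu$ or $p$.

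Third, I would instantiate the lemma by checking its hypotheses for each method from Appendix~\ref{appendix:prior_ess_formulas}. For MTM, $c=1$. For PT and ELIR, $c=p+1$. In each case the derivation carries over verbatim to the posterior, because the posterior information $\frac{n+\nu-p-1}{2}(\bm{\Theta}^{-1}\otimes\bm{\Theta}^{-1})$ has the same Kronecker factor as the single-observation information. That factor cancels in the determinant ratio exactly as before, leaving $(\nu+n)-p-1$. For MTM, the same argmin computation with $\nu$ replaced by $\nu+n$ yields $\nu+n-1$.

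The only delicate point is the claim that the same rule holds at the posterior degrees of freedom for the \textit{G}-Wishart. This requires that the posterior stays in the \textit{G}-Wishart family with the same graph $G$ and that the reparametrization $\nu=\delta+p-1$ is applied consistently, so that the offset $c$ does not depend on $G$. I would settle this by appealing to the conjugacy result in Appendix~\ref{subsection:gwishart_prior} and to the fact that the prior and posterior Fisher informations share the exponent form $(\cdot-p-1)/2$.
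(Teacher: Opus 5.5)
Your argument is correct and matches the paper's proof. The posterior degrees of freedom are $\nu+n$ regardless of the data, so $ESS_{\bm{\Theta}\mid\mathbf{X}}=(\nu+n)-c$ is constant, and both expectations in \eqref{eq:pcc} collapse to $(\nu-c)+n$.

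Your third step and the \textit{G}-Wishart remark are not needed for the lemma; the paper puts that verification ($c=1$ for MTM, $c=p+1$ for PT and ELIR) in Proposition~\ref{prop:exact_pcc}. Your explicit note that $c$ must not depend on the data-dependent posterior scale $\hat{\bm{\Psi}}$ is a small but useful sharpening of the hypothesis, which the paper leaves implicit.
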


\begin{proof}
Since the posterior degrees of freedom equal $\nu+n$ irrespective of the realized data, so $ESS_{\bm{\Theta}\mid\mathbf{X}}=(\nu+n)-c$ is constant in both $\mathbf{X}$ and $\bm{\Theta}$. The two expectations in \eqref{eq:pcc} become trivial and $(\nu+n)-c=(\nu-c)+n=ESS_{\bm{\Theta}}+n$.
\end{proof}

\begin{proposition}
\label{prop:exact_pcc}
Under both the Wishart and \textit{G}-Wishart priors, MTM, PT, and ELIR satisfy \eqref{eq:pcc} exactly.
\end{proposition}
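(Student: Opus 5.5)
The natural route is to reduce Proposition~\ref{prop:exact_pcc} to Lemma~\ref{lem:linear_pcc}. For each of the three methods and each of the two priors, we need two facts. First, the prior ESS has the form $\nu-c$ for a constant $c$ that does not depend on $\nu$ or $n$. Second, the same rule evaluated on the posterior gives $(\nu+n)-c$. Appendix~\ref{appendix:prior_ess_formulas} already supplies the first fact: MTM gives $c=1$, and PT and ELIR give $c=p+1$. Here $p$ is fixed by the model and is unaffected by $\nu$ or $n$, so $c$ is constant in both.

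The second step verifies that each derivation carries over unchanged to the posterior. Conjugacy (Appendix~\ref{appendix:priors}) gives a Wishart posterior $W(\nu+n,\hat{\bm{\Psi}})$, or a $G$-Wishart posterior $W_G(\nu+n,\hat{\bm{\Psi}})$. The only posterior inputs to PT and ELIR are the observed posterior Fisher information $\frac{\nu+n-p-1}{2}(\bm{\Theta}^{-1}\otimes\bm{\Theta}^{-1})$, reduced by $\mathbf{D}$, and the single-observation information $\frac12(\bm{\Theta}^{-1}\otimes\bm{\Theta}^{-1})$. In the determinant ratio the matrix factor $\mathbf{D}^\top(\bm{\Theta}^{-1}\otimes\bm{\Theta}^{-1})\mathbf{D}$ cancels as in the prior case. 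This happens at the posterior mode for PT and pointwise for every $\bm{\Theta}$ inside the expectation for ELIR. The result is $\nu+n-p-1$.

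The cancellation does not depend on the scale matrix $\hat{\bm{\Psi}}$. The data-dependent posterior mode and the posterior distribution of $\bm{\Theta}$ therefore do not matter, and the same holds with $\bm{\Theta}_G$ for the $G$-Wishart, whose curvature has the identical exponent. For MTM, the discrepancy computation uses only the traces of the information matrices at the posterior mean $\overline{\bm{\Theta}}$, each of the form (degrees of freedom term)$\times g(\overline{\bm{\Theta}})$. Replacing $\nu$ by $\nu+n$ and keeping $\nu_{\min}=p$ and $\varepsilon=1$ gives the minimizer $m^\star=\nu+n-1$, since $g(\overline{\bm{\Theta}})>0$ factors out of the absolute value.

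Once both facts are in hand, Lemma~\ref{lem:linear_pcc} applies directly. The posterior ESS is deterministic, so the nested expectations in \eqref{eq:pcc} are trivial, and the identity reduces to $(\nu+n)-c=(\nu-c)+n$.

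The main obstacle is justifying that the posterior ESS rule really is "the same rule at $\nu+n$". Two points need care. For MTM, the reference prior and the minimization over $m$ must be re-anchored at the posterior mean. For the $G$-Wishart, the mean and mode lack closed forms, so we must argue that the ESS never uses their values, only the positivity of $g$ and the cancellable Kronecker factor. I would address this by stating explicitly that every step of the Appendix~\ref{appendix:prior_ess_formulas} derivation holds for any positive definite evaluation point in $\mathcal{P}^{+}_G$. MTM also needs $m$ to range over a set containing $\nu+n-1$, which is unproblematic given that $m$ is treated as real-valued there.
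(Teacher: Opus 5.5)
Your proposal is correct and follows the paper's route. The paper's proof is exactly the reduction to Lemma~\ref{lem:linear_pcc} with $c=1$ for MTM and $c=p+1$ for PT and ELIR, justified by the one-line remark that $p$ is unchanged under conjugate updating. Your explicit check that the posterior ESS is the same rule at $\nu+n$ (the Kronecker factor cancels at any positive definite evaluation point, and $g(\overline{\bm{\Theta}})>0$ factors out in MTM) fills in what the paper only asserts.
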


\begin{proof}
Each method has the linear form of Lemma~\ref{lem:linear_pcc} with the same value under both priors (Section~\ref{subsec:ess_methods}): $c=1$ for MTM and $c=p+1$ for PT and ELIR. Since $p$ is unchanged under updating, the posterior ESS is the same rule evaluated at $\nu+n$.
\end{proof}

\begin{table}[t]
\centering
\footnotesize
\begin{tabular}[t]{lrrrr}
\toprule
\multicolumn{1}{c}{ } & \multicolumn{4}{c}{$\nu$} \\
\cmidrule(l{3pt}r{3pt}){2-5}
$n$ & 25 & 50 & 100 & 200\\
\midrule
\multicolumn{5}{c}{\textit{Wishart}}\\
\multicolumn{5}{l}{\textbf{VR}}\\
\hspace{1em}25 & 1.0004  & 1.0002  & 1.0001  & 1.0000 \\
\hspace{1em}50 & 1.0003  & 1.0001  & 1.0001  & 1.0000 \\
\hspace{1em}100 & 1.0002  & 1.0001  & 1.0000  & 1.0000 \\
\hspace{1em}200 & 1.0001  & 1.0001  & 1.0000  & 1.0000 \\
\addlinespace[0.3em]
\multicolumn{5}{l}{\textbf{PR}}\\
\hspace{1em}25 & 0.6305  & 0.8767  & 0.9630  & 0.9897 \\
\hspace{1em}50 & 0.7536  & 0.9075  & 0.9691  & 0.9907 \\
\hspace{1em}100 & 0.8521  & 0.9383  & 0.9769  & 0.9923 \\
\hspace{1em}200 & 0.9178  & 0.9630  & 0.9846  & 0.9942 \\
\midrule
\multicolumn{5}{c}{\textit{G-Wishart (random graph structure)}}\\
\multicolumn{5}{l}{\textbf{VR}}\\
\hspace{1em}25 & 2.5253 (0.7252) & 1.3879 (0.2407) & 1.2500 (0.1170) & 1.1555 (0.0803)\\
\hspace{1em}50 & 2.2791 (0.6965) & 1.3953 (0.2379) & 1.2210 (0.1300) & 1.1558 (0.0837)\\
\hspace{1em}100 & 2.2072 (0.6689) & 1.4011 (0.2173) & 1.2293 (0.1055) & 1.1507 (0.0868)\\
\hspace{1em}200 & 2.3238 (0.6319) & 1.3825 (0.1985) & 1.2325 (0.1158) & 1.1559 (0.0914)\\
\addlinespace[0.3em]
\multicolumn{5}{l}{\textbf{PR}}\\
\hspace{1em}25 & 0.8347 (0.1075) & 0.9830 (0.0384) & 1.0106 (0.0137) & 1.0103 (0.0127)\\
\hspace{1em}50 & 0.8926 (0.0783) & 0.9892 (0.0302) & 1.0097 (0.0096) & 1.0108 (0.0076)\\
\hspace{1em}100 & 0.9581 (0.0406) & 0.9995 (0.0168) & 1.0093 (0.0076) & 1.0102 (0.0052)\\
\hspace{1em}200 & 0.9818 (0.0196) & 1.0005 (0.0112) & 1.0089 (0.0046) & 1.0097 (0.0042)\\
\bottomrule
\end{tabular}
\caption{Mean (sd) of $\delta_{VR}$ and $\delta_{PR}$ by sample size $n$ (rows) and prior degrees of freedom $\nu$ (columns), for the Wishart and $G$-Wishart priors and $p=20$. The standard deviation for the Wishart results is not reported, as it was $<10^{-11}$ for all combinations.}
\label{tab:pcc_check_VR_PR}
\end{table}

\subsection*{Consistency of VR and PR: simulation check}

VR and PR lack the linear degrees-of-freedom structure of Lemma~\ref{lem:linear_pcc}: each prior ESS has the form $ESS_{\bm{\Theta}}=g(\nu,p)\,h(\nu,\bm{\Psi}_G)$, where the correction $h(\nu,\bm{\Psi}_G)$ depends on $\bm{\Psi}$, $G$, and $p$ but not on $n$, and tends to $1$ only as $\nu\to\infty$. Predictive consistency \eqref{eq:pcc} therefore cannot reduce to an algebraic identity in $\nu$ and $n$, and under the \textit{G}-Wishart prior the relevant moments have no closed form. We assess it by Monte Carlo simulation through the normalized increment $(ESS_{\bm{\Theta}\mid\mathbf{X}}-ESS_{\bm{\Theta}})/n$, whose value equals $1$ whenever predictive consistency holds exactly (Lemma~\ref{lem:linear_pcc}).

We fix $p=20$ and consider a grid of prior and prospective study sizes $\nu\in\{25,50,100,200\}$ and $n\in\{25,50,100,200\}$, for both the Wishart and the \textit{G}-Wishart prior. Under the \textit{G}-Wishart prior, each prior study is assigned an Erd\H{o}s--R\'enyi graph $G^{(s)}(p,\pi)$ with edge probability $\pi\sim\mathrm{Uniform}(0,1)$, so that the studies span graph densities across $[0,1]$. For each $(\nu,n)$ cell we generate $n_{\text{studies}} = 100$ prior studies, each represented by a precision matrix $\bm{\Psi}^{(s)}$ from which the prior $W_{G^{(s)}}\big(\nu,\bm{\Psi}^{(s)}/\nu\big)$ is elicited as in Section~\ref{appendix:priors} and its prior ESS, $ESS^{(s)}_{\bm{\Theta}}$, computed.

For each prior study $s$ we then simulate $B=100$ datasets. For $b=1,\dots,B$: (i)~draw $\bm{\Theta}^{(s,b)}\sim W_{G^{(s)}} \big(\nu,\bm{\Psi}^{(s)}/\nu\big)$; (ii) draw $\mathbf{X}^{(s,b)}_n$ as $n$ i.i.d. observations from $\mathcal{N} \big(\mathbf{0},(\bm{\Theta}^{(s,b)})^{-1}\big)$; (iii) form the posterior $W_{G^{(s)}} \left(\nu+n,\,\left({\bm{\Psi}^{(s)*}}^{-1}+nS^{(s,b)}\right)^{-1}\right)$ (Appendix~\ref{appendix:priors}), with $\bm{\Psi}^{(s)*}=\bm{\Psi}^{(s)}/\nu$ and $S^{(s,b)}$ the sample covariance, and compute its ESS, $ESS^{(s,b)}_{\bm{\Theta}\mid\mathbf{X}}$, with an inner Monte Carlo step for the VR/PR moments under the \textit{G}-Wishart prior. We record the normalized increment $\delta^{(s,b)}=(ESS^{(s,b)}_{\bm{\Theta}\mid\mathbf{X}}-ESS^{(s)}_{\bm{\Theta}})/n$ and average it over the $B$ datasets, $\bar{\delta}^{(s)}=\sum_{b}\delta^{(s,b)}/B$. Across the $n_{\text{studies}}$ prior studies we report the mean $\overline{\delta}=\sum_{s}\bar{\delta}^{(s)}/n_{\text{studies}}$ and its standard deviation, summarizing how close the increment is to $1$ and how much it varies with the elicited prior.

Table \ref{tab:pcc_check_VR_PR} shows that across both priors and both methods the mean increment approaches $1$ --- VR from above, PR from below --- and its standard deviation shrinks as $\nu$ increases, confirming that VR and PR are predictively consistent as the prior information increases. The residual departure from $1$ at small $\nu$, and its variability across studies, reflects the dependence of $h(\nu,\bm{\Psi}_G)$ on the elicited precision and graph.

\clearpage
\section{Precision Matrix for the Illustrative Example}
\label{appendix:illustrative_precision_matrix}
The following precision matrix $\bm{\Psi}$ and the corresponding partial correlation matrix $\bm{P}$ were used for the illustrative  example in Section \ref{sec:sample_size_planning}, with $p = 10$ and  $\nu = 100$. The partial correlation corresponding to the planning edge used in the example is highlighted in bold.   

{
\footnotesize
\begin{equation*}
\bm{\Psi} = \begin{pmatrix}
 1.910 & \cdot & 0.345 & 0.290 & 0.146 & -0.860 & -0.323 & -0.254 & \cdot & -0.364 \\
 \cdot & 0.577 & \cdot & -0.003 & \cdot & -0.205 & -0.239 & 0.114 & -0.015 & \cdot \\
 0.345 & \cdot & 0.788 & \cdot & -0.167 & -0.403 & 0.004 & 0.171 & -0.140 & -0.420 \\
 0.290 & -0.003 & \cdot & 0.693 & -0.247 & 0.049 & -0.139 & \cdot & \cdot & \cdot \\
 0.146 & \cdot & -0.167 & -0.247 & 0.636 & -0.297 & \cdot & -0.204 & 0.141 & \cdot \\
-0.860 & -0.205 & -0.403 & 0.049 & -0.297 & 1.658 & 0.066 & 0.329 & \cdot & \cdot \\
-0.323 & -0.239 & 0.004 & -0.139 & \cdot & 0.066 & 0.687 & 0.081 & \cdot & 0.374 \\
-0.254 & 0.114 & 0.171 & \cdot & -0.204 & 0.329 & 0.081 & 1.265 & -0.183 & -0.738 \\
 \cdot & -0.015 & -0.140 & \cdot & 0.141 & \cdot & \cdot & -0.183 & 0.874 & \cdot \\
-0.364 & \cdot & -0.420 & \cdot & \cdot & \cdot & 0.374 & -0.738 & \cdot & 1.929
\end{pmatrix}
\end{equation*}
}

{
\footnotesize
\begin{equation*}
\bm{P} = \begin{pmatrix}
 \cdot & \cdot & -0.281 & -0.252 & -0.133 & 0.483 & 0.282 & 0.163 & \cdot & 0.189 \\
 \cdot & \cdot & \cdot & 0.005 & \cdot & \mathbf{0.210} & 0.380 & -0.133 & 0.021 & \cdot \\
-0.281 & \cdot & \cdot & \cdot & 0.235 & 0.353 & -0.005 & -0.171 & 0.168 & 0.340 \\
-0.252 & 0.005 & \cdot & \cdot & 0.372 & -0.046 & 0.201 & \cdot & \cdot & \cdot \\
-0.133 & \cdot & 0.235 & 0.372 & \cdot & 0.289 & \cdot & 0.228 & -0.189 & \cdot \\
 0.483 & \mathbf{0.210} & 0.353 & -0.046 & 0.289 & \cdot & -0.061 & -0.227 & \cdot & \cdot \\
 0.282 & 0.380 & -0.005 & 0.201 & \cdot & -0.061 & \cdot & -0.087 & \cdot & -0.325 \\
 0.163 & -0.133 & -0.171 & \cdot & 0.228 & -0.227 & -0.087 & \cdot & 0.174 & 0.473 \\
 \cdot & 0.021 & 0.168 & \cdot & -0.189 & \cdot & \cdot & 0.174 & \cdot & \cdot \\
 0.189 & \cdot & 0.340 & \cdot & \cdot & \cdot & -0.325 & 0.473 & \cdot & \cdot
\end{pmatrix}
\end{equation*}
}

\section{DPIR Algorithm}
\label{appendix:dpir_algorithm}
\begin{algorithm}[H]
\small
\SetAlgoSkip{smallskip}
\caption{DPIR: Computation of $\Pr(\Lambda_n > \xi)$ for a fixed $n$}
\label{algorithm:dpir}

\KwIn{Prior parameters $(\nu, \bm{\Psi})$, graph $G$, 
      sample size $n$, number of prior draws $H$, 
      number of data draws $J$, duplication matrix $\mathbf{D}$, $d = p(p+1)/2$\;}
\KwOut{$\Pr(\Lambda_n > \xi)\;$}

Initialize counter $c \leftarrow 0$\;

\For{$h = 1, \ldots, H$}{
    Draw $\bm{\Theta}_h \sim \pi(\bm{\Theta})$
    \hspace*{1em}\textit{where $\pi(\bm{\Theta}) = W(\nu, \bm{\Psi})$ 
    if $G$ is complete, and $W_G(\nu, \bm{\Psi})$ otherwise}\;
        
    Compute $\bm{\Sigma}_h = \bm{\Theta}_h^{-1}$\;
    
    Compute $I(\bm{\Theta}_h) = \dfrac{\nu-2}{2}
    \mathbf{D}(\bm{\Sigma}_h \otimes \bm{\Sigma}_h)\mathbf{D}^{\top}$
    restricted to the free parameters of $G$\;
    
    \For{$j = 1, \ldots, J$}{
        Draw $\mathbf{X}_j = (\mathbf{x}_1, \ldots, \mathbf{x}_n)$, 
        $\quad \mathbf{x}_k \sim \mathcal{N}_p(\mathbf{0}, \bm{\Sigma}_h)$\;
        
        Compute $S_j = \dfrac{1}{n}\mathbf{X}_j^{\top}\mathbf{X}_j$\;
    
        Compute $\mathcal{I}(\mathbf{X}_j) = \dfrac{n}{2}
        \mathbf{D}(S_j \otimes S_j)\mathbf{D}^{\top}$\;
        restricted to the free parameters of $G$\;
        
        Compute $\Lambda_n(\bm{\Theta}_h, \mathbf{X}_j) = 
        \left(\dfrac{\lvert \mathcal{I}(\mathbf{X}_j)\rvert}
        {\lvert I(\bm{\Theta}_h)\rvert}\right)^{1/d}$\;
        
        \If{$\Lambda_n(\bm{\Theta}_h, \mathbf{X}_j) > \xi$}{
            $c \leftarrow c + 1$\;
        }
    }
}
\Return $\Pr(\Lambda_n > \xi) = \dfrac{c}{H \times J}$\;
\end{algorithm}

\section{BFDA Algorithm}
\label{appendix:bfda_algorithm}
\begin{algorithm}[H]
\footnotesize
\SetAlgoSkip{smallskip}
\caption{BFDA: Computation of performance measures for a fixed $n$ and edge $(i,j)$}
\label{algorithm:bfda}
\KwIn{Prior $\pi(\bm{\Theta} \mid \mathcal{H})$, graph $G$, edge $(i,j)$ such that $(i,j)\in E$,
      sample size $n$, number of prior draws $H$,
      number of replications $J$,
      Bayes factor threshold $\gamma = 10$\;}
\KwOut{$\beta_0$, $\beta_1$, FPR, FNR\;}
\BlankLine
Set $G_0 \leftarrow G$ with $G_0[i,j] = G_0[j,i] = 0$\;
\For{$\mathcal{H} \in \{\mathcal{H}_0, \mathcal{H}_1\}$}{
    \lIf{$\mathcal{H} = \mathcal{H}_0$}{set $G_{\mathcal{H}} \leftarrow G_0$}
    \lElse{set $G_{\mathcal{H}} \leftarrow G$}
    \For{$h = 1, \ldots, H$}{
        Draw $\bm{\Theta}_h \sim \pi(\bm{\Theta} \mid \mathcal{H})$\;
        Compute $\bm{\Sigma}_h = \bm{\Theta}_h^{-1}$\;
        \For{$j = 1, \ldots, J$}{
            Draw $\mathbf{X}_j=\left(\mathbf{x}_1,\ldots,\mathbf{x}_n\right)$, 
            $\quad \mathbf{x}_k \sim \mathcal{N}_p(\mathbf{0}, \bm{\Sigma}_h)$\;
            Compute $BF_{01}^{(h,j)}$ from $\mathbf{X}_j$ and $\pi(\bm{\Theta} \mid \mathcal{H})$\;
        }
    }
    \If{$\mathcal{H} = \mathcal{H}_0$}{
        Compute $\beta_0 = \dfrac{1}{H\times J}\displaystyle\sum_{h,j}
        \mathbf{1}(BF_{01}^{(h,j)} > \gamma)$\;
        Compute $\text{FNR} = \dfrac{1}{H\times J}\displaystyle\sum_{h,j}
        \mathbf{1}(BF_{01}^{(h,j)} < 1/\gamma)$\;
    }
    \Else{
        Compute $\beta_1 = \dfrac{1}{H\times J}\displaystyle\sum_{h,j}
        \mathbf{1}(BF_{01}^{(h,j)} < 1/\gamma)$\;
        Compute $\text{FPR} = \dfrac{1}{H\times J}\displaystyle\sum_{h,j}
        \mathbf{1}(BF_{01}^{(h,j)} > \gamma)$\;
    }
}
\BlankLine
\Return $\beta_0,\ \beta_1,\ \text{FPR},\ \text{FNR}$\;
\end{algorithm}

\section{Algorithm for Generating Random Prior Precision Matrix}
\label{appendix:generate_random_prior_precision_matrix}
\begin{algorithm}[H]
\caption{Algorithm for generating random prior precision matrix}
\SetAlgoSkip{smallskip}
\SetCommentSty{textit}
\label{algorithm:random_precision}
\KwIn{$p$ (number of variables), $\nu$ (prior sample size), 
      $G$ (adjacency matrix; empty or fully connected for the Wishart case)}
\KwOut{$\bm{\Psi}$ (estimated prior precision matrix)}

\tcp{Step 1: Define a well-conditioned base precision matrix}
Draw a $p \times p$ random normal matrix, $\mathbf{M} = \left(\mathbf{m}_1, \ldots, 
\mathbf{m}_p\right)$, $\quad \mathbf{m}_k \sim \mathcal{N}_p(\mathbf{0}, \mathbf{I}_p)$\;
Compute QR decomposition, $\mathbf{M} = \mathbf{Q}\mathbf{R}$\;
Draw $\sigma_i \sim U(0.1, 5)$, $i = 1, \ldots, p$\;
$\bm{\Psi}_{0} \leftarrow \mathbf{Q}^{\top} \, \text{diag}(1/\sigma_1, \ldots, 1/\sigma_p) \, \mathbf{Q}$\;

\BlankLine

\tcp{Step 2: Simulate $\nu$ observations and estimate precision matrix}
\eIf{$G$ is empty \textbf{or} fully connected}
    {
        \tcp{Wishart case: dense precision matrix}
        Simulate $\nu$ observations, $\mathbf{X}=\left(\mathbf{x}_1,\ldots,\mathbf{x}_\nu\right)$, $\quad \mathbf{x}_k \sim \mathcal{N}_p(\mathbf{0}, \bm{\Psi}_{0}^{-1})$\; 
        $\hat{\Sigma} \leftarrow \mathbf{X}\mathbf{X}^{\top} / \nu$\;
        $\bm{\Psi} \leftarrow \hat{\Sigma}^{-1}$\;
    }
    {
        \tcp{\textit{G}-Wishart case: structured precision matrix}
        $\bm{\Psi}_{0} \leftarrow \textsc{Projection}(\bm{\Psi}_{0}, G)$ \;
        Simulate $\nu$ observations, $\mathbf{X}=\left(\mathbf{x}_1,\ldots,\mathbf{x}_\nu\right)$, $\quad \mathbf{x}_k \sim \mathcal{N}_p(\mathbf{0}, \bm{\Psi}_{0}^{-1})$\; 
        $\hat{\Sigma} \leftarrow \mathbf{X}\mathbf{X}^{\top} / \nu$\; 
        $\bm{\Psi} \leftarrow \text{\textsc{Projection}}\;(\hat{\Sigma}^{-1}, G)$\;
    }
\KwRet{$\bm{\Psi}$}
\end{algorithm}

\section{Monte Carlo Estimation of the \textit{G}-Wishart Bayes Factor}
\label{appendix:ak_bf}

Throughout this appendix, we follow the original notation of \citet{Atay2005} and use the parametrization $\delta = \nu - p +1$. The Bayes factor in \eqref{eq:bf_normalizing_constant} involves four \textit{G}-Wishart normalizing constants. Following \citet{Atay2005}, let $\mathcal{V}$ denote the set of free pairs in the upper triangle of the precision matrix, defined as
\begin{equation}
\mathcal{V} = \{(i,j) \ , \ i \leq j \ : \ i = j\ \text{or}\ (i,j) \in E\},
\label{eq:ak_V}
\end{equation}
that is, the diagonal entries together with the upper-triangular pairs corresponding to present edges in $G$. The complement $\overline{\mathcal{V}} = \{(i,j) : i < j,\ (i,j) \notin E\}$ denotes the set of absent edges, whose corresponding Cholesky entries are non-free and determined by the free ones via the graph constraint. Each normalizing constant $I_G(\delta, \nu\bm{\Psi}^{-1})$ decomposes as
\begin{equation}
\log I_G(\delta, \nu\bm{\Psi}^{-1}) = \log C_{\delta,T} + 
\log \mathbb{E}\left[f_T(\psi_\mathcal{V})\right],
\label{eq:ak_decomposition}
\end{equation}
where $T$ is the upper-triangular Cholesky factor of $\nu\bm{\Psi}^{-1} = T^\top T$, $C_{\delta,T}$ is a closed-form constant, and $\mathbb{E}\left[f_T(\psi_\mathcal{V})\right]$ is an expectation over free variables $\psi_\mathcal{V}$ with a known product distribution that can be read off the graph $G$.

\paragraph{Log Bayes factor decomposition.} Since \eqref{eq:bf_normalizing_constant} involves differences of log normalizing constants between $G$ and $G_{-(i,j)}$, it is convenient to work directly with the differences
\begin{align}
\Delta\log C &= \log C_{\delta,T}^{G_{-(i,j)}} - \log C_{\delta,T}^{G}, 
\label{eq:delta_log_C}\\
\Delta\mathbb{E} &= \log \mathbb{E}_{G_{-(i,j)}}\left[f_T(\psi_\mathcal{V})\right] - 
\log\mathbb{E}_G\left[f_T(\psi_\mathcal{V})\right].
\label{eq:Delta_E}
\end{align}
The log Bayes factor then reduces to
\begin{equation}
\log BF_{01} = (\Delta\log C_{\text{post}} + \Delta \mathbb{E}_{\text{post}}) - 
(\Delta\log C_{\text{prior}} + \Delta \mathbb{E}_{\text{prior}}),
\label{eq:ak_logbf}
\end{equation}
where the subscripts ``prior'' and ``post'' refer to evaluation at 
$(\delta, \nu\bm{\Psi}^{-1})$ and $(\delta+n, \nu\bm{\Psi}^{-1} + 
\mathbf{X}^\top\mathbf{X})$ respectively, with Cholesky factors $T$ 
such that $T^\top T = \nu\bm{\Psi}^{-1}$ and $T^*$ such that $T^{*\top}T^* = 
\nu\bm{\Psi}^{-1} + \mathbf{X}^\top\mathbf{X}$.

\paragraph{Closed-form difference.}
The graphs $G$ and $G_{-(i,j)}$ differ only in the presence of edge $(i,j)$, with $i < j$. Removing this edge affects two quantities: $n_i$, the number of upper-triangular neighbors of node $i$, decreases by one, and $k_j$, the number of present edges in the upper triangle with column index $j$, also decreases by one. All other terms in $C_{\delta,T}$ are unchanged. The resulting difference is
\begin{equation}
\Delta\log C = -\frac{1}{2}\log(4\pi) + 
\log\Gamma\left(\frac{\delta + n_i - 1}{2}\right) - 
\log\Gamma\left(\frac{\delta + n_i}{2}\right) - 
\log T_{ii} - \log T_{jj},
\label{eq:Delta_log_C_formula}
\end{equation}
where $n_i = |\mathrm{ne}(i) \cap \{i+1,\ldots,p\}|$ is the number of upper-triangular neighbors of node $i$ in $G$, $k_j = |\mathrm{ne}(j) \cap \{1,\ldots,j-1\}|$ is the number of present edges with column index $j$, and $T_{ii}$, $T_{jj}$ are the corresponding diagonal entries of $T$.

\paragraph{Monte Carlo difference.}
The two expectations in \eqref{eq:Delta_E} require separate Monte Carlo runs because the free variable distributions differ between $G$ and $G_{-(i,j)}$. Under $G$, the free variables are drawn as
\begin{align}
\psi_{kk} & \sim \sqrt{\chi^2_{\delta+n_k}}, \quad k = 1,\ldots,p, 
\label{eq:ak_diag}\\
\psi_{kl} & \sim \mathcal{N}(0,1), \quad (k,l) \in E,\ k < l,
\label{eq:ak_offdiag}
\end{align}
including $\psi_{ij} \sim \mathcal{N}(0,1)$ since $(i,j) \in E$. Under $G_{-(i,j)}$, edge $(i,j)$ is removed, so $\psi_{ij}$ becomes a non-free parameter and $\psi_{ii} \sim \sqrt{\chi^2_{\delta+n_i-1}}$ since the upper-triangular degree of node $i$ decreases by one. The non-free elements under either graph are deterministic functions of the free elements, computed recursively via \citet[][Eq.~31--32]{Atay2005}. For $S$ Monte Carlo draws, the estimate of each expectation is
\begin{equation}
\mathbb{E}\left[f_T(\psi_\mathcal{V})\right] \approx \frac{1}{S}\sum_{s=1}^{S} 
\exp\left(-\frac{1}{2}\sum_{(k,l)\in\overline{\mathcal{V}}} \left(\psi_{kl}^{(s)}\right)^{2}
\right),
\label{eq:ak_mc}
\end{equation}
where $\overline{\mathcal{V}}$ denotes the set of non-free pairs corresponding 
to absent edges under the respective graph. The four Monte Carlo estimates in \eqref{eq:ak_logbf} are computed independently, each using $S$ draws from \eqref{eq:ak_diag}--\eqref{eq:ak_offdiag} with degrees of freedom $\delta$ or $\delta+n$ and Cholesky factor $T$ or $T^*$, under graph $G$ or $G_{-(i,j)}$.

\end{appendices}

\bibliography{bibliography}

\end{document}